\newcolumntype{d}[1]{D{.}{.}{#1}}
\newcolumntype{Y}{>{\raggedleft\arraybackslash}X}
\newcolumntype{Z}{>{\centering\arraybackslash}X}
\DeclareMathOperator*{\vectorize}{vec}
\newtheorem{lemma}{Lemma}[section]
\newtheorem{proposition}{Proposition}
\newtheorem{cor}{Corollary}
\newtheorem{assum}{Assumption}
\newtheorem{thm}{Theorem}
\theoremstyle{remark}
\newtheorem{remark}{Remark}
\newtheorem{definition}{Definition}
\DeclareMathOperator*{\argmin}{argmin}
\DeclareMathOperator*{\diag}{diag}
\newcommand{\bm}[1]{\mbox{\boldmath{$#1$}}}
\numberwithin{equation}{section}
\newcommand{\cm}[1]{\mbox{\boldmath$\mathscr{#1}$}}
\newcommand{\cmt}[1]{\mbox{\boldmath\scriptsize$\mathscr{#1}$}}
\title{\vspace{-10mm}HAR-It\^o models and high-dimensional HAR modeling for high-frequency data}
\author{Huiling Yuan, Kexin Lu, Yifeng Guo and Guodong Li
	\\ \textit{University of Hong Kong} }
\begin{document}
\maketitle

\begin{abstract}
	It is an important task to model realized volatilities for high-frequency data in finance and economics and, as arguably the most popular model, the heterogeneous autoregressive (HAR) model has dominated the applications in this area. However, this model suffers from three drawbacks: (i.) its heterogeneous volatility components are linear combinations of daily realized volatilities with fixed weights, which limit its flexibility for different types of assets, (ii.) it is still unknown what is the high-frequency probabilistic structure for this model, as well as many other HAR-type models in the literature, and (iii.) there is no high-dimensional inference tool for HAR modeling although it is common to encounter many assets in real applications.
	To overcome these drawbacks, this paper proposes a multilinear low-rank HAR model by using tensor techniques, where a data-driven method is adopted to automatically select the heterogeneous components. In addition, HAR-It\^o models are introduced to interpret the corresponding high-frequency dynamics, as well as those of other HAR-type models. 
	Moreover, non-asymptotic properties of the high-dimensional HAR modeling are established, and a projected gradient descent algorithm with theoretical justifications is suggested to search for estimates.
	Theoretical and computational properties of the proposed method are verified by simulation studies, and the necessity of using the data-driven method for heterogeneous components is illustrated in real data analysis.
\end{abstract}

\textit{Keywords and phrases:} Diffusion process; Heterogenous autoregressive model; High-dimensional analysis; High-frequency data; Non-asymptotic property; Tensor technique.

\newpage

\section{Introduction}
\label{section1}

Volatility analysis is one of the most important tasks in finance and economics \citep{Engle:1982,Bollers:1986} and, with the widespread availability of high-frequency data, more and more recent discussions have concentrated on modeling realized volatilities, which can be constructed from high-frequency intraday observations. Examples include the realized generalized autoregressive conditional heteroscadestic (GARCH) model \citep{Hansen:2012}, high-frequency based volatility model \citep{Shephard:2010}, heterogeneous autoregressive (HAR) model \citep{Corsi:2009}, multiplicative error model \citep{Engle:2006}, and mixed data sampling model \citep{Ghysels:2006}. 
As arguably the most popular one among these approaches, the HAR model has a simple cascade structure \citep{Corsi:2009}.
Specifically, the short-term, medium-term and long-term volatility components are first identified, and they are usually the linear combinations of daily realized volatilities.
We then regress the future realized volatility on the three heterogeneous components, and this leads to an autoregressive form.
Surprisingly the simple HAR model can even outperform a few powerful deep neural networks for some real applications in the area of machine learning \citep{Bucci:2020}.

Since its appearance, the HAR model has attracted huge amount of attention from the literature.
First, due to the importance of jump effects for high-frequency data \citep{Ait:2009,Anderson:2012}, \cite{An:2007}  included the estimated jumps as the fourth heterogeneous component to improve the forecasting performance; see also \cite{CorsiandPirinoandReno:2010} for threshold bipower variation and \cite{PattonandSheppard:2015} for signed jumps.
\cite{Corsi:2012} further added one more component of leverage effects.
Secondly, some efforts have been spent on generalizing HAR models. \cite{BollersandPattonandQuaedvlieg:2016} suggested a HAR model with time-varying coefficients in terms of parametric forms; see also \cite{BekiermanandManner:2018}.
Moreover, \cite{Chen:2018} considered a similar model with coefficients having a nonparametric form, \cite{MM:2008} proposed a multiple-regime smooth transition HAR model, and \cite{CorsiandMittnikandPigorschandPigorsch:2008} considered a more sophisticated HAR model with GARCH errors.
The jump effect has also been discussed for these generalized HAR models \citep{ BuccheriandCorsi:2021,Caporin2022}.
Thirdly, besides the ordinary least squares, more efficient estimation methods have been discussed in the literature, including the weighted least squares \citep{PattonandSheppard:2015} and robust regression  \citep{ClementsandPreve:2021} methods.
Fourthly, to achieve higher forecasting accuracy, a transformation is usually employed to realized volatilities before estimation, and one usually chooses the logarithmic \citep{Corsi:2009,Chen:2018,Bekaert:2014} or Box-Cox transformation \citep{Taylor:2017}.
Finally, besides the stock market, the HAR model has also been applied to Bitcoin \citep{Trucios:2019, Hu:2021}, energy \citep{Lyocsa:2021, Luo:2022},  agricultural commodities \citep{Degiannakis:2022, LuoandTony:2022}, and many others.

Most of the above studies on HAR models are limited to the univariate case, while it is common in practice to forecast the realized volatilities of many assets. One popular way is to extend the HAR model to a multivariate version for the modeling \citep{HongLeeHwang:2020}.
For example, \cite{Bubak:2011} considered a vector HAR model with multivariate GARCH error terms to improve the forecasting accuracy, and similar model settings can be found in \cite{SoucekandTodorova:2013}.
\cite{BauerandVorkink:2011} conducted the prediction for realized covariances, while the vector HAR model plays a key role in the modeling; see also \cite{Oh:2016,BollersandPattonandQuaedvlieg:2018}.
This paper focuses on the vector HAR model for predicting realized volatilities and, when the number of assets $N$ is large, the resulting model will have a much larger number of parameters, which has the rate of $O(N^2)$. As a result, some dimension reduction methods will be needed to make the forecasting feasible. 
\cite{Reinsel:1983} proposed a vector autoregressive model with the low-rank assumption being imposed to the row space of coefficient matrices, and the method was applied to vector HAR models by \cite{Cubadda:2017}. This reduces the number of parameters to $O(N)$, while one may want to know whether the low-rank assumption can be assumed to the column space, or even both row and column spaces, of coefficient matrices; see Remark \ref{rem-sec2} for more details.

On the other hand, for most HAR-type models in the above, the three heterogeneous volatility components are set to the daily, weekly and monthly realized volatilities, where the later two are the simple averages of 5- and 22-day realized volatilities, respectively \citep{Corsi:2009}. However, this setting has been shown to be limited by more and more empirical evidences.
\cite{ChenandHardleandPigorsch:2010} first questioned the involvement of long-term volatility components, measured by monthly realized volatilities, and argued that a structural break, together with a small-order autoregressive model, can outperform HAR models.
\cite{Audrino:2016} empirically evaluated the appropriateness of the three components by using the Lasso method \citep{Tibshirani:1996} to automatically select variables of an autoregressive model with a large order, and the negative result was confirmed for nine stocks from the US market.
Note that the three components correspond to the average of $j$-day realized volatilities with $j=1$, 5 and 22, respectively.
\cite{Kohler:2021} considered many other combinations of $j$'s and empirically showed that the combination of $(1,5,22)$ performed worse; see also \cite{HongLeeHwang:2020}.
In fact, the three components in HAR models can be interpreted as factors along lags, and this motivates us to consider a data-driven method to choose these factors; see model \eqref{har-rv} and Remark \ref{rem-factor} for details.
The first contribution of this paper is to propose a multilinear low-rank HAR model in Section 3.1 to forecast realized volatilities by using tensor techniques, where the low-rank assumption is imposed to both row and column spaces of coefficient matrices, and the heterogeneous volatility components are selected automatically. The asymptotic normality of its ordinary least squares estimation is discussed in Section 3.3.

Recently some efforts have been spent on constructing high-frequency models for intraday asset prices such that the corresponding low-frequency integrated volatilities have a GARCH-type representation.
Examples include the GARCH-It\^{o} model \citep{KW:2016}, the factor GARCH-It\^o model \citep{KimFan}, the realized GARCH-It\^o model \citep{song2020realized}, the overnight GARCH-It\^o volatility model \citep{KW:2023}, the exponential realized GARCH-It\^o model \citep{Kim2023}, and many others.
These GARCH-It\^{o}-type models provide a theoretical bridge to reconcile low-frequency GARCH volatility representations and high-frequency volatility processes, and hence a better interpretation and more reliable inference can be achieved by harnessing GARCH models and realized volatilities.
As a natural, yet non-trivial follow-up of the GARCH-It\^{o} literature, the second contribution of this paper is to propose the univariate and multivariate HAR-It\^o models in Section 2, and their low-frequency integrated volatility admits a univariate and a multivariate HAR representations, respectively. 
To the best of our knowledge, this is the first attempt in the literature to explore the high-frequency dynamics for HAR-type models.

In the meanwhile, all current used HAR modeling tools are for the low-dimensional setting with a fixed number of assets. However it is common to encounter many financial assets in real applications, and it is urgent to design high-dimensional inference tools for HAR models, where the number of assets $N$ may diverge. 
The third contribution of this paper is to fill this gap by conducting the high-dimensional HAR modeling for high-frequency data in Section 4. Specifically, the proposed model has a form similar to that of vector autoregressive models with measurement errors, and Section 4.1 establishes the non-asymptotic properties of its high-dimensional estimation. A projected gradient descent algorithm is suggested to search for estimates in Section 4.2, and its theoretical justifications have also been provided. 

In addition, Section 5 conducts simulation experiments to evaluate the finite-sample performance of the proposed methodology, and its usefulness is further demonstrated by empirical examples in Section 6. Section 7 gives a short conclusion and discussion, and all technical proofs are relegated to the supplementary file.

\section{HAR-It\^o models for high-frequency data} \label{Sec2}
\subsection{Univariate HAR-It\^o model}
\label{Sec2.1}

This section proposes the univariate and multivariate HAR-It\^o models for high-frequency asset prices such that their low-frequency integrated volatilities have univariate and multivariate HAR representations at Propositions \ref{prop:1} and \ref{prop:2}, respectively.

Let $X_{t}$ be the log price of an asset at time $t\in \mathbb{R}_{+}$, where $\mathbb{R}_{+}=[0,\infty)$.
Denote by $\mathcal{F}_t$ the collection of all information up to time $t$, and then $\{\mathcal{F}_t, t\geq 0\}$ is a filtration.
To account for jump components in financial industry, we consider a jump diffusion model,
\begin{equation}
	\label{jump-diffusion}
	dX_{t}=\mu_t dt+\sigma_{t}dB_{t}+L_{t} d \Lambda_{t},
\end{equation}
where $\mu_t$ is a drift term, $\sigma_{t}$ is an instantaneous volatility process, $B_{t}$ is a standard Brownian motion,  $\Lambda_{t}$ is a standard Poisson process with constant intensity $\lambda>0$, the jump sizes $\{L_{t},t\geq 0\}$ are independent and identically distributed ($i.i.d.$) with finite fourth moment, the processes of $\sigma_{t}$, $B_{t}$ and $\Lambda_{t}$ are all adapted to the filtration $\{\mathcal{F}_t\}$, and the jump sizes are independent of the three processes.
Moreover, let $Z_{t}=\int_{[t]}^{t}dW_{t}$, where $W_{t}$ is another standard Brownian motion with respect to $\mathcal{F}_t$ such that $dB_{t}dW_{t}=\rho dt$ with $|\rho|<1$ almost surely, and $\rho$ can be seen as the leverage effect between the log price and the instantaneous volatility. We then define the HAR-It\^{o} model below.

\begin{definition}
	\label{def:har-ito}
	The log prices $\{X_t, t\in \mathbb{R}_{+}\}$ at \eqref{jump-diffusion} are said to follow an univariate HAR-It\^{o} model if the instantaneous volatility satisfies
	\begin{equation}\label{eq:har-ito}
		\begin{array}{rl}
			\sigma_{t}^{2}=\sigma_{[t]}^{2}&+(t-[t])\left(\omega-\sigma_{[t]}^{2}\right) +\alpha_1 \displaystyle\int_{[t]}^{t}\sigma_{s}^{2}ds
			+\sum\limits_{l=2}^{P}\alpha_l \displaystyle\int_{[t]-l+1}^{[t]-l+2}\sigma_{s}^{2}ds\\
			&+	\beta \displaystyle \int_{[t]}^{t} L_{s}^{2} d \Lambda_{s}+\upsilon([t]+1-t)Z_{t}^{2},
		\end{array}
	\end{equation}
	where $[t]$ denotes the integer part of $t$, i.e. $[t]=n-1$ for $n-1<t\leq n$, $P$ is a positive integer, the parameters of $\omega$, $\beta$, $\upsilon$, $\alpha_1$ are all positive, and $\alpha_l \geq 0$ with $2 \leq l \leq P$.
\end{definition}

Denote $\bm{\alpha}=(\alpha_1, \alpha_2, \cdots, \alpha_{P})^{\top}$ and $\bm{\theta}=(\lambda,\rho,\omega, \beta, \upsilon, \bm{\alpha}^{\top})^{\top}$, and the above definition relies on the parameter vector $\bm{\theta}$.
In addition, the instantaneous volatility $\sigma_{t}^{2}$ at \eqref{eq:har-ito} is a process with continuous time and is defined for all $t\in \mathbb{R}_{+}$. Consider the integer time points only, and then the process has the form of
\begin{equation*}
	\sigma_{n}^{2}=\omega
	+\sum\limits_{l=1}^{P}\alpha_l y_{n-l+1}+\beta J_n,
\end{equation*}
with probability one, where
\begin{equation*}
	y_n=IV_n=\displaystyle\int_{n-1}^{n}\sigma_{s}^{2}ds \hspace{5mm}\text{and}\hspace{5mm}
	J_n=\displaystyle\int_{n-1}^{n} L_{s}^{2} d \Lambda_{s}
\end{equation*}
are the integrated volatility and jump variation, respectively. 
Moreover, from \eqref{eq:har-ito}, the daily volatility (or integrated volatility) in the past will contribute to the spot volatility (or instantaneous volatility) by the introduction of $\alpha_l$ with $2\leq l\leq P$.
This paper will concentrate on the integrated volatility, $IV_n$ or $y_n$.

As in \cite{song2020realized}, we denote $\omega_{L}=\mathbb{E}(L_{t}^{2})$ and $M_{t}=L_{t}^{2}-\omega_{L}$.
Moreover, let
\begin{equation*}
	\omega^{g}=\varrho_{1}\omega+\varrho_{2}\beta\omega_L\lambda+(\varrho_{2}-2\varrho_{3})v, \hspace{5mm} 
	\beta^{g}=(\varrho_{1}-\varrho_{2})\beta\hspace{5mm}\text{and}\hspace{5mm}
	\alpha_{l}^{g}=(\varrho_{1}-\varrho_{2})\alpha_{l}
\end{equation*}
for all $1 \leq l \leq P$, where $\varrho_{1}=\alpha_1^{-1}(e^{\alpha_1}-1)$, $\varrho_{2}=\alpha_1^{-2}(e^{\alpha_1}-1-\alpha_1)$ and $\varrho_{3}=\alpha_1^{-3}(e^{\alpha_1}-1-\alpha_1-{\alpha_1^{2}}/{2})$. 

\begin{proposition}\label{prop:1}
Suppose that $n \geq P+1$ and $0< \alpha_1<1$. Then the integrated volatility for univariate HAR-It\^{o} models at Definition \ref{def:har-ito} satisfies
	\begin{align}
		\label{har-model}
		y_n=\omega^{g}+\sum_{j=1}^{P}\alpha_{j}^{g}y_{n-j}+\beta^{g} J_{n-1}+\varepsilon_{n}
	\end{align}
	with probability one, where $\varepsilon_{n}=\varepsilon_{n}^{c}+\varepsilon_{n}^{J}$,
	\begin{align*}
		\varepsilon_{n}^{c}=&2 v \alpha_1^{-2} \displaystyle \int_{n-1}^{n}\left\{\alpha_1\left(n-t-\alpha_1^{-1}\right) e^{\alpha_1(n-t)}+1\right\} Z_{t} d Z_{t},\\ 
		\varepsilon_{n}^{J}=&\beta \alpha_{1}^{-1}\left\{\displaystyle\int_{n-1}^{n}\left(e^{\alpha_{1}(n-t)}-1\right) [M_{t} d \Lambda_{t}+\omega_{L}(d \Lambda_{t}-\lambda d t)]\right\},
	\end{align*}
and $\varepsilon_{n}^{c}$ and $\varepsilon_{n}^{J}$ are innovation terms from the continuous diffusion process and the jump component, respectively.
Moreover, both $\{\varepsilon_{n}^{c}\}$ and $\{\varepsilon_{n}^{J}\}$ are $i.i.d.$ sequences with mean zero and finite variance.
\end{proposition}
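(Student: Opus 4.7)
The strategy is to localize on the unit interval $t\in(n-1,n]$, turn \eqref{eq:har-ito} into an explicit linear SDE for $\sigma_t^{2}$, solve it with an integrating factor, integrate the solution to obtain $y_n=\int_{n-1}^{n}\sigma_s^{2}\,ds$ in closed form, and then separate the resulting expression into a deterministic part (producing $\omega^g$, $\alpha_j^g$, $\beta^g$) and two orthogonal innovation processes $\varepsilon_n^c,\varepsilon_n^J$.

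First I would differentiate \eqref{eq:har-ito} over $t\in(n-1,n]$. Since $[t]=n-1$ there, $\sigma_{[t]}^{2}$ and all summands with $l\geq 2$ are $\mathcal{F}_{n-1}$-measurable and constant in $t$; applying It\^{o}'s formula to $(n-t)Z_t^{2}$ (using $dZ_t=dW_t$ and $d\langle Z\rangle_t=dt$) gives the linear SDE
\begin{equation*}
d\sigma_t^{2}=\bigl\{\alpha_1\sigma_t^{2}+(\omega-\sigma_{n-1}^{2})+\upsilon[(n-t)-Z_t^{2}]\bigr\}dt+2\upsilon(n-t)Z_t\,dW_t+\beta L_t^{2}\,d\Lambda_t.
\end{equation*}
Multiplying by $e^{-\alpha_1(t-(n-1))}$, integrating from $n-1$ to $t$, then integrating over $t\in[n-1,n]$ and exchanging the order of (stochastic) integration expresses $y_n$ as the sum of a term $\varrho_1\sigma_{n-1}^{2}$, explicit deterministic time integrals with kernel $\alpha_1^{-1}(e^{\alpha_1(n-s)}-1)$, and three martingale pieces (against $dZ_t$, against $d\Lambda_t$, and against the $\Lambda$-compensator).

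The third step is bookkeeping. Substituting $\sigma_{n-1}^{2}=\omega+\sum_{l=1}^{P}\alpha_l y_{n-l}+\beta J_{n-1}$ from \eqref{eq:har-ito} evaluated at the integer $n-1$, and combining with the $(\omega-\sigma_{n-1}^{2})\varrho_2$ contribution (which uses $\int_{0}^{1}(e^{\alpha_1 u}-1)du=\alpha_1\varrho_2$), produces the linear part $\varrho_1\omega+\sum_{j=1}^{P}\alpha_j^g y_{n-j}+\beta^g J_{n-1}$ with the claimed $\alpha_j^g=(\varrho_1-\varrho_2)\alpha_j$ and $\beta^g=(\varrho_1-\varrho_2)\beta$. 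The $\upsilon$ drift splits via $Z_s^{2}-(s-(n-1))=2\int_{n-1}^{s}Z_u\,dZ_u$: its deterministic half, after evaluating $\int_{0}^{1}(e^{\alpha_1 u}-1)u\,du$, adds $(\varrho_2-2\varrho_3)\upsilon$ to $\omega^g$, while its martingale half feeds into $\varepsilon_n^c$. The $\Lambda$-compensator acting on $L_s^{2}=\omega_L+M_s$ contributes $\varrho_2\beta\omega_L\lambda$ to $\omega^g$ and leaves the residual $\beta\alpha_1^{-1}\int_{n-1}^{n}(e^{\alpha_1(n-s)}-1)[M_s\,d\Lambda_s+\omega_L(d\Lambda_s-\lambda\,ds)]$, which is exactly $\varepsilon_n^J$.

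For $\varepsilon_n^c$ I need to assemble three $dZ_t$ integrals: stochastic Fubini on $\int_{n-1}^{n}\int_{n-1}^{t}e^{\alpha_1(t-s)}\cdot 2\upsilon(n-s)Z_s\,dW_s\,dt$ yields the kernel $(n-s)(e^{\alpha_1(n-s)}-1)$, while the same argument applied to the recast $-2\upsilon\int_{n-1}^{s}Z_u\,dZ_u$ piece of the drift yields two further integrals. Their sum collapses, using $(n-s)(e^{\alpha_1(n-s)}-1)+(n-s)=(n-s)e^{\alpha_1(n-s)}$, to the single kernel $\{\alpha_1(n-t-\alpha_1^{-1})e^{\alpha_1(n-t)}+1\}$ up to the prefactor $2\upsilon\alpha_1^{-2}$, matching the stated form of $\varepsilon_n^c$. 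The $i.i.d.$ and mean-zero properties are then immediate because both $\varepsilon_n^c$ and $\varepsilon_n^J$ depend only on the Brownian, Poisson and jump-size increments on $(n-1,n]$ via purely deterministic kernels; finite variance follows from the It\^{o} isometry and the assumed finite fourth moment of $L_t$ (needed because $L_t^{2}$ appears inside the jump integrand). The main obstacle I expect is the third step: the algebraic identity producing $(\varrho_2-2\varrho_3)$ from the $\upsilon$ drift and the collapse of the three $dZ_t$ integrals into a single clean kernel both require delicate, easily error-prone bookkeeping, especially confirming that the combination yields $\varrho_3$ (rather than just $\varrho_2$) in the $\omega^g$ formula.
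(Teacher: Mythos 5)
Your argument is correct and reaches the same intermediate identity as the paper, namely $y_n=(\varrho_1-\varrho_2)\sigma_{n-1}^{2}+\varrho_2\omega+(\varrho_2-2\varrho_3)\upsilon+\varrho_2\beta\omega_L\lambda+\varepsilon_n^{c}+\varepsilon_n^{J}$, but it gets there by a genuinely different route. The paper never solves the SDE explicitly: it introduces the family $R(k)=\int_{n-1}^{n}\frac{(n-t)^{k}}{k!}\sigma_t^{2}\,dt$, uses integration by parts and the same expression for $d\sigma_t^{2}$ that you derive to obtain the recursion $R(k)=R_1(k)+R_2(k)+R_3(k)+R_4(k)+\alpha_1R(k+1)$, and then iterates, using $R(k)\le\int_{n-1}^{n}\sigma_t^2\,dt$ and $0<\alpha_1<1$ to kill the remainder $\alpha_1^{k}R(k)$ and summing the resulting power series termwise to produce the exponential kernels and the constants $\varrho_1,\varrho_2,\varrho_3$. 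You instead solve the linear SDE by variation of constants and then apply stochastic Fubini to collapse $\int_{n-1}^{n}\int_{n-1}^{t}e^{\alpha_1(t-s)}(\cdot)\,dt$ into single integrals with kernel $\alpha_1^{-1}(e^{\alpha_1(n-s)}-1)$; I checked your bookkeeping (the cancellation $(n-s)(e^{\alpha_1(n-s)}-1)+(n-s)=(n-s)e^{\alpha_1(n-s)}$ in $\varepsilon_n^{c}$, the $(\varrho_2-2\varrho_3)\upsilon$ constant, and the substitution $\sigma_{n-1}^{2}=\omega+\sum_{l=1}^{P}\alpha_ly_{n-l}+\beta J_{n-1}$) and it all comes out right. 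The trade-off is that your route is the more standard and transparent one (an explicit Duhamel formula plus one interchange of integration order, justified by the stochastic Fubini theorem on a compact interval with square-integrable integrands), whereas the paper's route avoids invoking stochastic Fubini for the explicit solution at the cost of justifying the convergence and termwise summation of the infinite series of iterated integrals; the paper's series device also transfers more mechanically to the multivariate case of Proposition \ref{prop:2}, where $\alpha_1$ becomes the matrix $\bm{\alpha}^{(1)}$ and the scalar geometric argument is replaced by a spectral-radius condition, though your integrating-factor argument would also extend there with the matrix exponential.
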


From the above proposition, the integrated volatility $\{y_n\}$ admits an iterative form of autoregression with an exogenous variable of $J_{n-1}$, and it will be a stationary low-frequency time series if $1-\sum_{j=1}^{P}\alpha_{j}^{g}z^j\neq 0$ for all complex values satisfying $|z|<1$.
Moreover, the term of $Z_{t}=\int_{[t]}^{t}dW_{t}$ plays a key role in designing the instantaneous volatility $\sigma_t$ at \eqref{eq:har-ito}. We may consider a more general form of $Z_{t}=\int_{[t]}^{t}z_tdW_{t}$ with $z_t$ being a nonrandom continuous function, and a result similar to Proposition \ref{prop:1} can be established. 
However, when $z_t$ is a random function, say $z_t=\sigma_t$ in \cite{KW:2016}, the derived innovations $\{\varepsilon_{n}^{c}\}$ will be a martingale difference sequence only although we can still have an autoregressive form similar to \eqref{har-model}.
In fact, by choosing suitable random functions for $z_t$, the derived innovation $\varepsilon_{n}^{c}$ can even be heavy-tailed \citep{Shin2021}.
This paper will focus on the design at Definition \ref{def:har-ito} since we aim at the  HAR model.

When there is no jump component, i.e. $L_t=0$ with probability one for all $t$, the iterative form at \eqref{har-model} can be further simplified into
\begin{equation}
	\label{har-extension}
	y_n=\omega^{g}+\sum\limits_{j=1}^{P}\alpha_{j}^{g}y_{n-j}+\varepsilon_n,
\end{equation}
and the integrated volatility $y_n$ can be estimated consistently by the realized volatility, denoted by $RV_n$.
Let $IV_n=y_n$, $IV_{n}^{(w)}=5^{-1}\sum_{l=1}^{5}y_{n+1-l}$ and $IV_{n}^{(m)}=22^{-1}\sum_{l=1}^{22}y_{n+1-l}$ be the daily, weekly and monthly integrated volatilities, respectively, and we consider a special case of \eqref{har-extension} below,
\begin{equation}\label{har-rv}
IV_{n}= \beta_{0}+\beta_{d} IV_{n-1}+\beta_{w} IV_{n-1}^{(w)}+\beta_{m} IV_{n-1}^{(m)}
+\varepsilon_{n},
\end{equation}
where $P=22$, $\omega^{g}=\beta_{0}$,  $(\alpha_1^g,\dots,\alpha_P^g)=(\beta_{d},\beta_{w},\beta_{m})\mathbf{U}_{\mathrm{C}}^{\top}$, and 
\[
\mathbf{U}_{\mathrm{C}}^{\top} =	\begin{pmatrix}
&	1 &\quad 0 &\quad 0& \quad 0 &\quad 0 &\quad 0 &\quad\cdots &\quad 0\\
&	1/5&\quad 1/5& \quad 1/5 & \quad 1/5 &\quad  1/5&\quad 0 & \quad\cdots& \quad 0\\
&1/22 &\quad 1/22&\quad 1/22 &\quad 1/22 &\quad 1/22 &\quad 1/22 &\quad\cdots& \quad 1/22
\end{pmatrix} \in\mathbb{R}^{3\times 22}.
\]
\cite{Corsi:2009} made use of model \eqref{har-rv} to propose the HAR model for predicting the realized volatility, and the proposed HAR-It\^{o} model at Definition \ref{def:har-ito} and Proposition \ref{prop:1}  provides a probabilistic support for this popular model.
On the other hand, the cascade structure at \eqref{har-rv} reveals that there usually exists a low-dimensional structure among the parameters $\alpha_j$'s and $\alpha_j^g$'s along the direction of lags.


For the case with jumps, the realized volatility $RV_n$ is a consistent estimator of $IV_n+J_n$, while the integrated volatility $IV_n$ can be estimated consistently by the bipower variation, denoted by $BV_n$.
From Proposition \ref{prop:1}, the following HAR-type model can be used to predict the bipower variation,
\[
B V_{n}= \beta_{0}+\beta_{d} B V_{n-1}+\beta_{w} B V_{n-1}^{(w)}+\beta_{m} B V_{n-1}^{(m)}
+\beta_{J} \widehat{J}_{n-1}+\epsilon_{n}^{J},
\]
where $B V_{n-1}^{(w)}$ and $B V_{n-1}^{(m)}$ are weekly and monthly averages of the daily bipower variation $B V_{n}$, respectively, and $\widehat{J}_{n} = \max \{R V_{n}-B V_{n}, 0\}$ is the estimated jump; see \cite{Cubadda:2017}.
The error term $\epsilon_{n}^{J}$ refers to model error $\varepsilon_{n}$ in Proposition \ref{prop:1} and estimation errors when one uses $B V_{n}$ and $\widehat{J}_{n}$ to estimate $IV_n$ and $J_{n}$, respectively.
For other HAR-type models with jump components, such as \cite{An:2007}, in the literature, their high-frequency models can be constructed with a form similar to Definition \ref{def:har-ito}.

\subsection{Multivariate HAR-It\^o model}
\label{Sec2.2}

This subsection defines a multivariate HAR-It\^o model for $N$ financial assets.
Denote by $X_{i, t}$ and $\sigma_{i, t}$ the log price of the $i$-th asset and its instantaneous volatility at $t\in\mathbb{R}_{+}$ for each $1\leq i\leq N$, respectively, and let $\mathbf{X}_{t}=(X_{1, t}, \ldots, X_{N, t})^{\top}$ and $\bm{\sigma}_{t}=(\sigma_{1, t}, \ldots, \sigma_{N, t})^{\top}$.

Define two $N$-dimensional Brownian motions, $\mathbf{B}_{t}=\left(B_{1, t}, \ldots, B_{N, t}\right)^{\top}$ and $\mathbf{W}_{t}=\left(W_{1, t}, \ldots, W_{N, t}\right)^{\top}$, where $d B_{i,t}\cdot d B_{j,t}= \rho_{i,j}^{B} dt$, $d W_{i,t}\cdot d W_{j,t}=\rho_{i,j}^{W} dt$, $d B_{i,t}\cdot dW_{j,t}=\rho_{i,j} dt$, $\rho_{i,i}^{B}=1$, and $\rho_{i,i}^{W}=1$ for all $1\leq i, j\leq N$.
Let $\bm{\rho}^B=(\rho_{i,j}^B)\in\mathbb{R}^{N\times N}$, $\bm {\rho}^W=(\rho_{i,j}^W)\in\mathbb{R}^{N\times N}$ and $\bm{\rho}=(\rho_{i,j})\in\mathbb{R}^{N\times N}$, where
$\bm \rho^B$ and $\bm \rho^W$ are symmetric.
Moreover, let $Z_{i,t}=\int_{[t]}^{t}dW_{i,t}$ for $1\leq i\leq N$, and denote $\mathbf{Z}_t=(Z_{1, t}, \ldots, Z_{N, t})^{\top}$.
For the jump component, let $\bm{\Lambda}_{t}=(\Lambda_{1, t}, \ldots, \Lambda_{N, t})^{\top}$ be an $N$-dimensional Poisson process, and $\mathbf{L}_{t}=(L_{1, t}, \ldots, L_{N, t})^{\top}$ be $N$-dimensional jump sizes. We assume that the jump components for different assets are independent, $\{\mathbf{L}_{t}, t\in\mathbb{R}_{+}\}$ are $i.i.d.$ with finite fourth moment, $\mathbf{L}_{t}$ is independent of $\bm{\Lambda}_{t}$, $\mathbf{B}_{t}$, $\mathbf{W}_{t}$ and $\bm{\sigma}_{t}$, and all the above diffusion processes are adapted to the filtration $\{\mathcal{F}_t,t\in\mathbb{R}_{+}\}$.
As a result, the jump diffusion model can be introduced below,
\begin{equation}
	\label{high-dimensional-har}
	d{X}_{i,t}=\mu_{i,t} dt+\sigma_{i,t}d B_{i,t}+ L_{i,t}d\Lambda_{i,t}, \hspace{5mm}1\leq i\leq N.
\end{equation}
\begin{definition}
	\label{def:mhar-ito}
	The log prices of $N$ assets $\{\mathbf{X}_{t}, t \in\mathbb{R}_{+}\}$ at \eqref{high-dimensional-har} are said to follow a multivariate HAR-It\^{o} model if, for each $1\leq i\leq N$, the instantaneous volatility of the $i$-th asset satisfies
	\begin{equation}
		\label{eq:mhar-ito}
		\begin{array}{rl}
			\sigma_{i,t}^{2}=\sigma_{i,[t]}^{2}&+(t-[t])\left(\omega_{i}-\sigma_{i,[t]}^{2}\right)
			+\sum\limits_{j=1}^{N}\alpha_{i,j}^{(1)} \displaystyle\int_{[t]}^{t}\sigma_{j,s}^{2}ds
			+\sum\limits_{l=2}^{P}\sum\limits_{j=1}^{N}\alpha_{i,j}^{(l)} \displaystyle\int_{[t]-l+1}^{[t]-l+2}\sigma_{j,s}^{2}ds\\
			&+\beta_i \displaystyle\int_{[t]}^{t} L_{i,s}^{2}d\Lambda_{i,s}+ v_i([t]+1-t) Z_{i,t}^{2},
		\end{array}
	\end{equation}
	where $[t]$ denotes the integer part of $t$, i.e., $[t]=n-1$ for $n-1<t\leq n$,
	$P$ is a positive integer, the parameters of $\omega_{i}$, $\beta_{i}$, $\upsilon_i$, $\alpha_{i,j}^{(1)}$'s are all positive, and $\alpha_{i,j}^{(l)}\geq 0$ for all $2\leq l\leq P$.
\end{definition}

Let $\bm \omega=(\omega_1, \cdots, \omega_N )^{\top}\in\mathbb{R}^{N}$, $\bm \alpha_{i}^{(l)}=(\alpha_{i,1}^{(l)}, \cdots, \alpha_{i,N}^{(l)} )^{\top}\in\mathbb{R}^{N}$, $\bm {\alpha}^{(l)}=(\bm \alpha_{1}^{(l)}, \cdots, \bm \alpha_{N}^{(l)})^{\top}\in\mathbb{R}^{N\times N}$, and $\bm{\beta}=\diag\{\beta_1,\ldots,\beta_N\}\in\mathbb{R}^{N\times N}$.
The instantaneous volatility $\bm{\sigma}_{t}$ defined at \eqref{eq:mhar-ito} is an $N$-dimensional process with continuous time and, when being restricted to integer time points, it has the form of
\begin{equation*}
	\bm {\sigma}_{n}^{2}=\bm\omega
	+\sum\limits_{l=1}^{P}\bm {\alpha}^{(l)} \mathbf y_{n-l+1}+	\bm{\beta} \mathbf {J}_n,
\end{equation*}
with probability one, where the integrated volatility and jump variation for the $i$-th asset are defined as
\[
y_{i,n}=\displaystyle \int_{n-1}^{n} \sigma_{i,t}^{2}dt \hspace{5mm}\text{and}\hspace{5mm}
J_{i,n}=\displaystyle \int_{n-1}^{n} L_{i,s}^{2} d \Lambda_{i,s},
\]
respectively, $\mathbf{y}_n=(y_{1,n}, \cdots, y_{N,n})^{\top}$, and $\mathbf{J}_n=(J_{1,n},\cdots,  J_{N,n})^{\top}$.
Moreover, from \eqref{eq:mhar-ito}, the values of $\alpha_{i,j}^{(l)}$ with $2\leq l\leq P$ measure the contribution of the $j$-th stock's daily volatility to the $i$-th stock's spot volatility.

For $1 \leq i \leq N$, let $\omega_{i,L}=E(L_{i,t}^{2})$ and $M_{i,t}=L_{i,t}^{2}-\omega_{i,L}$, and denote  $\bm{\omega}_L=\diag\{\omega_{1,L}, \cdots, \omega_{N,L}\}\in\mathbb{R}^{N\times N}$, $\mathbf{V}=\diag\{v_1, \cdots, v_N\}\in\mathbb{R}^{N\times N}$, and $\bm{\lambda}= (\lambda_1, \cdots, \lambda_N)^{\top}\in\mathbb{R}^N$.  
Moreover, let
$\bm{\varrho}_{1}={\bm{\alpha}^{(1)}}^{-1}(\mathbf{e}^{{\bm{\alpha}}^{(1)}}-\mathbf {I}_{N})\in\mathbb{R}^{N\times N}$, 
$\bm{\varrho}_{2}={\bm{\alpha}^{(1)}}^{-2}(\mathbf {e}^{{\bm{\alpha}^{(1)}}}-\mathbf {I}_{N}-\bm{\alpha}^{(1)})\in\mathbb{R}^{N\times N}$, and
$\bm{\varrho}_{3}={\bm{\alpha}^{(1)}}^{-3}(\mathbf {e}^{\bm{\alpha}^{(1)}}-\mathbf {I}_{N}-\bm{\alpha}^{(1)}-\frac{{\bm{\alpha}^{(1)}}^{2}}{2})\in\mathbb{R}^{N\times N}$, where $\mathbf {e}^{\mathbf {A}}=\sum_{k=0}^{\infty}\frac{{\mathbf{A}}^k}{k!}\in\mathbb{R}^{N\times N}$ for a matrix $\mathbf{A}\in\mathbb{R}^{N\times N}$.
Denote
$
\bm{\omega}^{g} =\bm{\varrho}_{1} \bm{\omega} +\bm{\varrho}_{2}\bm{\beta} \bm \omega_L \bm {\lambda }
+\left(\bm{\varrho}_{2}-2 \bm{\varrho}_{3}\right) \mathbf{V} \mathbf{1}_N\in\mathbb{R}^N,
$
\[
{\bm{\beta}}^{g}=\left(\bm{\varrho}_{1}- \bm{\varrho}_{2}\right) \bm{\beta}\in \mathbb{R}^{N\times N} \hspace{3mm}\text{and}\hspace{3mm}
\mathbf{A}_j=\left(\bm{\varrho}_{1}- \bm{\varrho}_{2}\right)\bm{\alpha}^{(j)}\in \mathbb{R}^{N\times N}\hspace{3mm}\text{with}\hspace{2mm} 1\leq j\leq P,
\]
where $\mathbf{1}_N=(1,\ldots,1)^{\top}\in\mathbb{R}^N$.
For simplicity, we further define $d \bm{\Lambda}_t=(d  \Lambda_{1,t},\ldots, d  \Lambda_{N,t})$ and $\mathbf{Z}_t d\mathbf{Z}_t =(Z_{1,t}dZ_{1,t} , \cdots, Z_{N,t}d Z_{N,t})^{\top}$.

\begin{proposition}
	\label{prop:2}
	Suppose that  $n \geq P+1$, and $\bm{\alpha}^{(1)}$ has spectral radius less than one.
	Then the integrated volatility for multivariate HAR-It\^{o} models at Definition \ref{def:mhar-ito} satisfies
	\begin{align}
		\mathbf{y}_n
		=\bm{\omega}^{g}+\sum\limits_{j=1}^{P}\mathbf{A}_j\mathbf{y}_{n-j}+\bm{\beta}^{g}\mathbf{J}_{n-1}+\bm {\varepsilon}_n
		\label{vector_integrate}
	\end{align}
   with probability one, where $\bm{\varepsilon}_{n}=\bm{\varepsilon}^{c}_{n}+	\bm{\varepsilon}^{J}_{n}\in\mathbb{R}^N$, 
	\begin{align*}
		\bm{\varepsilon}^{c}_{n}=&2 \mathbf{V}{{ \bm{\alpha}}^{(1)}}^{-2}\displaystyle \int_{n-1}^{n}\left\{\left((n-t){\bm\alpha}^{(1)}- \mathbf{I}_N\right)\mathbf{e}^{(n-t){ \bm{\alpha}}^{(1)}}+ \mathbf{I}_N\right\} \mathbf {Z}_{t} d \mathbf {Z}_{t}\in\mathbb{R}^N, \\
		\bm{\varepsilon}_{n}^{J}=&\bm{\beta} {{ \bm{\alpha}}^{(1)}}^{-1}\left\{\displaystyle \int_{n-1}^{n}\left(\mathbf{e}^{{(n-t){ \bm {\alpha}}^{(1)}}}-\mathbf {I}_N\right) [\mathbf{M}_{t} d \bm{\Lambda}_{t}+\bm{\omega}_{L}(d \bm{\Lambda}_{t}- \bm{\lambda} d t)]\right\}\in\mathbb{R}^N,
	\end{align*}
    and $\bm{\varepsilon}^{c}_{n}$ and $\bm{\varepsilon}^{J}_{n}$ are innovation terms from the continuous diffusion process and the jump component, respectively.
	Moreover, $\{\bm{\varepsilon}^{c}_{n}\}$ and $\{\bm{\varepsilon}^{J}_{n}\}$ are $i.i.d.$ sequence with mean zero and finite variance matrices.
\end{proposition}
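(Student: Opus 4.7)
The plan is to mirror the univariate argument of Proposition~\ref{prop:1}: view \eqref{eq:mhar-ito} as defining $\bm{\sigma}_t^{2}$ piecewise on each interval $(n-1,n]$, solve the resulting linear vector-valued SDE there, integrate to obtain $\mathbf{y}_n=\int_{n-1}^{n}\bm{\sigma}_t^{2}dt$, and finally eliminate the spot volatility via an integer-time identity. The first step is to differentiate \eqref{eq:mhar-ito} in $t$ on $(n-1,n]$: each lagged integral $\int_{[t]-l+1}^{[t]-l+2}\sigma_{j,s}^{2}ds$ with $l\geq 2$ is constant in $t$ on this interval and therefore drops out of the differential, whereas $\int_{[t]}^{t}\sigma_{j,s}^{2}ds$ supplies the linear driving matrix $\bm{\alpha}^{(1)}$. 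This produces a vector SDE with drift $\{\bm{\omega}-\bm{\sigma}_{[t]}^{2}+\bm{\alpha}^{(1)}\bm{\sigma}_t^{2}\}dt$, a jump integrator whose $i$-th component is $\beta_i L_{i,t}^{2}d\Lambda_{i,t}$, and a Brownian piece obtained by differentiating $([t]+1-t)Z_{i,t}^{2}$; the lagged components enter only through the right-limit $\bm{\sigma}_{(n-1)^{+}}^{2}$.

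Next, I would apply variation of constants with integrating factor $e^{-(t-[t])\bm{\alpha}^{(1)}}$, obtaining $\bm{\sigma}_t^{2}$ on $(n-1,n]$ explicitly in terms of $\bm{\sigma}_{(n-1)^{+}}^{2}$ and matrix-exponential weighted stochastic integrals. Integrating over $t\in(n-1,n]$ and invoking the power-series identities
\begin{equation*}
\int_{0}^{1}e^{u\bm{\alpha}^{(1)}}du=\bm{\varrho}_{1},\qquad \int_{0}^{1}u\,e^{u\bm{\alpha}^{(1)}}du=\bm{\varrho}_{1}-\bm{\varrho}_{2},
\end{equation*}
together with an analogous integral that produces $\bm{\varrho}_{3}$ from the $Z_{i,t}^{2}$-piece, make all three coefficient matrices appear in the resulting formula for $\mathbf{y}_n$. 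Evaluating \eqref{eq:mhar-ito} at $t=n-1$ with $[t]=n-2$ then supplies the integer-time identity $\bm{\sigma}_{n-1}^{2}=\bm{\omega}+\sum_{l=1}^{P}\bm{\alpha}^{(l)}\mathbf{y}_{n-l}+\bm{\beta}\mathbf{J}_{n-1}$; substituting this into the expression for $\mathbf{y}_n$ removes the spot volatility, and collecting like terms collapses the constants to $\bm{\omega}^{g}$, the lag coefficients to $\mathbf{A}_{j}=(\bm{\varrho}_{1}-\bm{\varrho}_{2})\bm{\alpha}^{(j)}$, and the jump coefficient to $\bm{\beta}^{g}=(\bm{\varrho}_{1}-\bm{\varrho}_{2})\bm{\beta}$.

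The leftover stochastic terms form $\bm{\varepsilon}_{n}=\bm{\varepsilon}_{n}^{c}+\bm{\varepsilon}_{n}^{J}$. For $\bm{\varepsilon}_{n}^{c}$, I would apply It\^o's formula, $d\{(n-t)Z_{i,t}^{2}\}=\{(n-t)-Z_{i,t}^{2}\}dt+2(n-t)Z_{i,t}dW_{i,t}$, and then integrate the matrix-exponential weight by parts to reorganise the time-varying coefficient into the stated factor $\{((n-t)\bm{\alpha}^{(1)}-\mathbf{I}_N)e^{(n-t)\bm{\alpha}^{(1)}}+\mathbf{I}_N\}$ multiplying $\mathbf{Z}_t d\mathbf{Z}_t$, with the deterministic $dt$-drifts absorbed into $\bm{\omega}^{g}$; this is where $\bm{\varrho}_{3}$ is consumed. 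For $\bm{\varepsilon}_{n}^{J}$, decomposing $L_{i,t}^{2}=M_{i,t}+\omega_{i,L}$ and compensating $d\Lambda_{i,t}=(d\Lambda_{i,t}-\lambda_{i}dt)+\lambda_{i}dt$ sends the deterministic drift $\bm{\omega}_{L}\bm{\lambda}$ into $\bm{\omega}^{g}$ and leaves exactly the claimed compensated-jump integrand. The i.i.d.\ property across $n$ follows from the independent-increment structure of $\mathbf{W}_t$, $\mathbf{Z}_t$, and $\bm{\Lambda}_t$ on disjoint unit intervals together with the i.i.d.\ assumption on $\mathbf{L}_t$; mean zero follows from the It\^o and compensated-Poisson martingale structure; finite variance follows from It\^o isometry combined with the finite-fourth-moment assumption on $\mathbf{L}_{t}$ and the boundedness of $\|e^{u\bm{\alpha}^{(1)}}\|$ on $[0,1]$.

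The main obstacle will be the non-commutative bookkeeping: $\bm{\alpha}^{(1)}$, $\bm{\alpha}^{(l)}$, $\bm{\beta}$, and their matrix exponentials do not commute in general, so verifying that every coefficient matrix lands on the correct side of each $\mathbf{y}_{n-j}$ and $\mathbf{J}_{n-1}$ requires care. The most delicate step is checking that the piecewise-constant lag contributions combine with the $\bm{\varrho}_{1}$- and $\bm{\varrho}_{2}$-weighted terms from integrating the matrix exponential to give precisely $(\bm{\varrho}_{1}-\bm{\varrho}_{2})\bm{\alpha}^{(j)}$, rather than a scrambled ordering such as $\bm{\alpha}^{(j)}(\bm{\varrho}_{1}-\bm{\varrho}_{2})$. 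The spectral-radius hypothesis on $\bm{\alpha}^{(1)}$ is invoked only to interpret $(\bm{\alpha}^{(1)})^{-1}$ and $(\bm{\alpha}^{(1)})^{-2}$ appearing in the innovation formulas consistently with the power-series definitions of the $\bm{\varrho}_{k}$.
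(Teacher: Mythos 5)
Your route is genuinely different from the paper's. The paper never solves the SDE for $\bm{\sigma}_t^2$: it introduces the polynomially weighted integrals $\mathbf{R}(k)=\int_{n-1}^{n}\frac{(n-t)^{k}}{k!}\bm{\sigma}_{t}^{2}\,dt$, obtains the one-step recursion $\mathbf{R}(k)=\mathbf{R}^{(1)}(k)+\cdots+\mathbf{R}^{(4)}(k)+\bm{\alpha}^{(1)}\mathbf{R}(k+1)$ from a single integration by parts, iterates it to infinity (this is where the spectral-radius hypothesis is actually used, to guarantee $(\bm{\alpha}^{(1)})^{k}\mathbf{R}(k)\to 0$), and sums the resulting power series to produce the matrix exponentials and the $\bm{\varrho}_{j}$'s. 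Your variation-of-constants argument instead produces a closed form for $\bm{\sigma}_t^2$ and then integrates it, at the price of a stochastic Fubini step when you exchange $\int_{n-1}^{n}dt$ with the inner $d\mathbf{Z}_s$ and $d\bm{\Lambda}_s$ integrals; your identities $\int_{0}^{1}e^{u\bm{\alpha}^{(1)}}du=\bm{\varrho}_{1}$ and $\int_{0}^{1}u\,e^{u\bm{\alpha}^{(1)}}du=\bm{\varrho}_{1}-\bm{\varrho}_{2}$ are correct, and the left-multiplication ordering you worry about comes out the same in both arguments.

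There is, however, a concrete gap at exactly the step you flag as most delicate. In your setup the initial condition on $(n-1,n]$ is the right limit $\bm{\sigma}_{(n-1)^{+}}^{2}=\bm{\sigma}_{n-1}^{2}+\sum_{l=2}^{P}\bm{\alpha}^{(l)}\mathbf{y}_{n-l+1}$, since the lagged integrals in \eqref{eq:mhar-ito} do not vanish as $t\downarrow n-1$; integrating $e^{(t-(n-1))\bm{\alpha}^{(1)}}$ against it gives this term weight $\bm{\varrho}_{1}$, while the drift term $-\bm{\sigma}_{n-1}^{2}$ receives weight $-\bm{\varrho}_{2}$. After substituting the integer-time identity for $\bm{\sigma}_{n-1}^{2}$ you are therefore left with $(\bm{\varrho}_{1}-\bm{\varrho}_{2})\bm{\alpha}^{(j)}\mathbf{y}_{n-j}$ \emph{plus} an extra $\bm{\varrho}_{1}\bm{\alpha}^{(l)}\mathbf{y}_{n-l+1}$ for each $l\geq 2$, which is not what \eqref{vector_integrate} asserts. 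The paper's proof does not see these terms because its integration by parts assigns the boundary value $\bm{\sigma}_{n-1}^{2}$, not the right limit, to $\mathbf{R}(k)$; i.e., it identifies $\bm{\sigma}_{(n-1)^{+}}^{2}$ with $\bm{\sigma}_{n-1}^{2}$, an identification that is automatic only when $P=1$ and that your more literal reading of the right limit breaks. To make your argument deliver the stated coefficients $\mathbf{A}_j=(\bm{\varrho}_{1}-\bm{\varrho}_{2})\bm{\alpha}^{(j)}$ you must start the SDE from $\bm{\sigma}_{n-1}^{2}$ and let the lagged terms enter only through the integer-time identity, and you should say explicitly that this is the convention being adopted, because as written your two statements --- ``the lagged components enter only through the right limit'' and ``the lag coefficients collapse to $(\bm{\varrho}_{1}-\bm{\varrho}_{2})\bm{\alpha}^{(j)}$'' --- are mutually inconsistent.
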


Similar to the univariate case, the integrated volatility $\{\mathbf{y}_n\}$ has a representation of vector autoregression with exogenous variable of $\mathbf{J}_{n-1}$.   
In the absence of jump components, the representation at \eqref{vector_integrate} will reduce to the following vector HAR model,
\begin{align}
	\label{vector_integrate_2}
	\mathbf{y}_n
	=\bm{\omega}^{g}+\sum\limits_{j=1}^{P}\mathbf{A}_j\mathbf{y}_{n-j}+\bm {\varepsilon}_n,
\end{align}
and the integrated volatility $\{\mathbf{y}_n\}$ can be consistently estimated by the realized volatility.
Let $\mathbf{y}_{n}^{(j)}=j^{-1}\sum_{l=1}^{j}\mathbf{y}_{n+1-l}$ be the average of $j$-day integrated volatilities, and model \eqref{vector_integrate_2} can be rewritten into
$\mathbf{y}_n
=\bm{\omega}^{g}+\sum\limits_{j=1}^{P}\bm{\beta}_j\mathbf{y}_{n-1}^{(j)}+\bm {\varepsilon}_n$,
which is used by \cite{HongLeeHwang:2020} to predict the realized volatility.
As a result, the proposed HAR-It\^o model at Definition \ref{def:mhar-ito} and Proposition \ref{prop:2} provides a probabilistic support for this vector HAR model.

\begin{remark}\label{rem-sec2}
When the number of assets $N$ is large, there may exist a low-dimensional structure in coefficient matrices $\mathbf{A}_j$'s or $\bm{\alpha}^{(j)}$'s since a market of many assets is usually driven by a few summarized forces \citep{bai2008large, lam2012factor}.
To this end, by using the idea of autoregressive index models \citep{Reinsel:1983}, \cite{Cubadda:2017} imposed a low-rank assumption to the following matrix
\[
(\mathbf{A}_1^{\top},\cdots,\mathbf{A}_P^{\top}) = (\bm{\alpha}^{(1)\top},\cdots,\bm{\alpha}^{(P)\top})(\bm{\varrho}_{1}^{\top}- \bm{\varrho}_{2}^{\top}) \in\mathbb{R}^{N\times NP},
\]
corresponding to the row space of $\mathbf{A}_j$'s or $\bm{\alpha}^{(j)}$'s, and a significant improvement of forecasting accuracy can be observed for the realized volatility.
Consider another matrix $(\mathbf{A}_1,\cdots,\mathbf{A}_P) = (\bm{\varrho}_{1}- \bm{\varrho}_{2})(\bm{\alpha}^{(1)},\cdots,\bm{\alpha}^{(P)}) \in\mathbb{R}^{N\times NP}$, corresponding to the column space of $\mathbf{A}_j$'s or $\bm{\alpha}^{(j)}$'s, and the ranks of these two matrices are not equal in general.
This motivates us to consider the low-rankness of the column space of $\mathbf{A}_j$'s or $\bm{\alpha}^{(j)}$'s or even both row and column spaces simultaneously.
\end{remark}

\section{Multilinear low-rank HAR model}
\label{Sec3}
\subsection{Tensor notations and decomposition}
\label{Tensor Decomposition}
This subsection introduces some tensor notations and Tucker decomposition, which will be used to state the new model and to derive inference tools in this and next sections.

Tensors are natural generalizations of matrices for higher-order data which provide useful data representation formats. The order of a tensor is dimensions, also known as ways or modes. A multidimensional array $\cm {X}\in \mathbb{R}^{p_{1} \times \cdots \times p_{d}}$ is called a $d$th-order tensor, and this paper will focus on third-order tensors. We refer readers to \cite{KoldaandBader:2009} for a detailed review on tensor notations and operations. 

We denote tensors by Euler script capital letters $ \cm{X},\cm{Y}, \ldots$ throughout the article. For a matrix $\mathbf{X}$, denote its Frobenius norm, operator norm, nuclear norm, vectorization, transpose, spectral radius, the $j$-th largest singular value by $\|\mathbf{X}\|_{\rm{F}}$, $\|\mathbf{X}\|_{\rm{op}}$, $\|\mathbf{X}\|_{{*}}$,
$\rm{vec}(\mathbf{X})$, $\mathbf{X}^{\top}$, $\rho(\mathbf{X})$ and $\sigma_{j}(\mathbf{X})$, respectively. For a square matrix $\mathbf{X}$, denote the maximum and minimum eigenvalues by $\lambda_{\max}(\mathbf{X})$ and $\lambda_{\min}(\mathbf{X})$, respectively. Furthermore, for a tensor $\cm{X} \in \mathbb{R}^{p_{1} \times p_{2} \times p_{3}}$, let $\|\cm{X}\|_{\mathrm{F}}=\left(\sum_{i=1}^{p_{1}} \sum_{j=1}^{p_{2}} \sum_{k=1}^{p_{3}} \cm{X}_{i j k}^{2}\right)^{1/2}$
be its Frobenius norm.

Matricization, also known as unfolding or flattening, is the process of reordering the elements of a high-order tensor into a matrix. For any third-order tensor $\cm{X} \in \mathbb{R}^{p_{1} \times p_{2} \times p_{3}}$, its mode-$1$ matricization $\cm{X} _{(1)}$ is the $p_{1}$-by-$(p_{2} p_{3})$ matrix by setting the first tensor mode as its rows and collapsing all the others into its columns. Specially, the $(i_1, i_2, i_3)$-th element of $\cm{X}$ is mapped to the $(i_1, j)$-th element of $\cm{X}_{(1)}$, where
\begin{equation*}
	j=1+\sum_{k=2 }^{3}\left(i_{k}-1\right) J_{k} \quad {\rm{with}} \quad J_{k}=\prod_{\ell=2}^{k-1} p_{\ell}.
\end{equation*}
The mode-$2$ and mode-$3$ matricizations can be defined similarly.
The matricization of tensors helps to link the concepts and properties between matrices and tensors. 
The mode-$1$ multiplication $\times_{1}$ of tensor $\cm{X} \in \mathbb{R}^{p_{1} \times p_{2} \times p_{3}}$ and matrix $\mathbf{Y} \in \mathbb{R}^{q_{1} \times p_{1}}$ is defined as
\begin{equation*}
	\cm{X} \times_{1} \mathbf{Y}=\left(\sum_{i=1}^{p_{1}} \cm{X}_{i j k} \mathbf{Y}_{s i}\right)_{1 \leq s \leq q_{1}, 1 \leq j \leq p_{2}, 1 \leq k \leq p_{3}}\in \mathbb{R}^{q_{1} \times p_{2} \times p_{3}},
\end{equation*}
and we can define the model-$2$ and mode-$3$ multiplication similarly.

For any third-order tensor $\cm{X} \in \mathbb{R}^{p_{1} \times p_{2} \times p_{3}}$, its multilinear
ranks $\left(r_{1}, r_{2}, r_{3}\right)$ are defined as the matrix ranks of its one-mode matricizations, namely
$r_{i}=\operatorname{rank}\left(\cm{X}_{(i)}\right)$ with $1\leq i\leq 3$, and they are not necessarily equal for third- and higher-order tensors. 
Accordingly, there exists a \textit{Tucker decomposition} \citep{Tucker:1966, DeLathauwer2000}:
\begin{equation}
	\cm{X}=\cm{G} \times_{1} \mathbf{U}_{1} \times_{2} \mathbf{U}_{2}  \times_{3} \mathbf{U}_{3}=\cm{G} \times{ }_{i=1}^{3} \mathbf{U}_{i},
	\label{Tucker decomposition}
\end{equation}
where $\cm{G} \in \mathbb{R}^{r_{1} \times r_{2} \times r_{3}}$ is the core
tensor, and $\mathbf{U}_{i} \in \mathbb{R}^{p_{i} \times r_{i}}$ for $i=1, 2, 3$ are the factor matrices. The above decomposition can be denoted by 
$\cm{X}=[\![ \cm{G} ; \mathbf{U}_{1}, \mathbf{U}_{2}, \mathbf{U}_{3} ]\!]$ for simplicity.

Note that, for any nonsingular matrices $\mathbf{O}_{i} \in \mathbb{R}^{r_{i} \times r_{i}}$ with $1\leq i\leq 3$, it holds that
$
	[\![ \cm{G} ; \mathbf{U}_{1}, \mathbf{U}_{2}, \mathbf{U}_{3} ]\!]=[\![  \cm{G} \times_{1} \mathbf{O}_{1} \times_{2}  \mathbf{O}_{2} \times_{3}
	\mathbf{O}_{3}; \mathbf{U}_{1} \mathbf{O}_{1}^{-1}, \mathbf{U}_{2}\mathbf{O}_{2}^{-1}, \mathbf{U}_{3} \mathbf{O}_{3}^{-1} ]\!]
$,
i.e. the decomposition is not unique. 
We can consider the higher-order singular value decomposition (HOSVD) of $\cm{X}$, namely the special Tucker decomposition uniquely defined by choosing $\mathbf{U}_i$ as the tall matrix consisting of the top $r_i$ left singular vectors of $\cm{X}_{(i)}$ and then setting $\cm{G}=\cm{X} \times{ }_{1} \mathbf{U}_{1}^{\top} \times{ }_{2} \mathbf{U}_{2}^{\top} \times{ }_{3} \mathbf{U}_{3}^{\top}$.
The factor matrices $\mathbf{U}_i$'s are then orthonormal, while $\cm{G}$ has the  \textit{all-orthogonal} property: for each $1\leq j\leq 3$, the rows of $\cm{G}_{(j)}$ are pairwise orthogonal.

\subsection{Multilinear low-rank HAR model}
\label{Sect3.1}

Consider $N$ financial assets, and their log prices $\mathbf{X}_{t}=(X_{1, t}, \ldots, X_{N, t})^{\top}$ with $t\in\mathbb{R}_{+}$ are generated by the multivariate HAR-It\^{o} model at \eqref{high-dimensional-har} and \eqref{eq:mhar-ito} with no jump component.
Let $\mathbf{y}_{n}=(y_{1,n}, \cdots, y_{N,n})^{\top} \in \mathbb{R}^{N}$ be the centered $N$-dimensional integrated volatility with $\mathbb{E}(\mathbf{y}_{n})=0$, and we next construct a realized measure for it.

For asset $i$, let $ t_{n,k}\in[n-1,n]$ with $0\leq k\leq m$ be time points with intraday observations, where $m$ is the number of observations within day $n$, and $n-1= t_{n,0} < t_{n,1} < \cdots < t_{n, m}=n$. The realized volatility can be defined as $RV_{i,n}=\sum_{k=1}^{m}(X_{i,t_{n,k}}-X_{i,t_{n,k-1}})^2$, and we focus on the centered version, $\widetilde{y}_{i,n}=RV_{i,n}-T^{-1}\sum_{l=1}^{T}RV_{i,l}$, where $T$ is the total number of days. Denote $\widetilde{\mathbf{y}}_{n}=(\widetilde{y}_{1,n}, \cdots, \widetilde{y}_{N,n})^{\top} \in \mathbb{R}^{N}$, and we then have 
\begin{align}
\widetilde{\mathbf{y}}_n=\mathbf{y}_n+\bm {\eta}_n, \hspace{9mm}  1\leq n \leq T,
\label{estimation error}
\end{align}
where $\bm {\eta}_n=({\eta}_{1,n}, \cdots, {\eta}_{N,n})^{\top} \in \mathbb{R}^{N}$ is the estimation error, and the dependence of notations $\widetilde{\mathbf{y}}_n$ and $\bm {\eta}_n$ on $m$ is suppressed for simplicity.
Suppose that the number of observed prices $m$ is the same for all $1\leq n\leq T$ and $1\leq i\leq N$, and $\{t_{n,k}\}$ are synchronized and equally spaced, i.e. $t_{n,k}-t_{n,k-1}= m^{-1}$.
By a method similar to those in \cite{KW:2016,KW:2023} and \cite{song2020realized}, we can show that 
\begin{equation}
\max\limits_{1 \leq n \leq T} \max\limits_{1 \leq i \leq N}E(|{\eta}_{i,n}|^2)  \leq O( m^{-1 / 2}).
\label{condition_estimation error}
\end{equation}


\begin{remark}
	The assumption of synchronization and equal space is for simplicity only, and it can be relaxed to the generalized sampling time \citep{Ait:2010}, refresh time \citep{Barndorff:2011} or previous tick \citep{Zhang:2011}.
	Moreover, besides the realized volatility, we may consider other realized measures, such as the multi-scale realized volatility \citep{Zh:2006}, kernel realized volatility \citep{B:2008}, and pre-averaging realized volatility \citep{J:2009}.
\end{remark}

From Proposition \ref{prop:2}, the low-frequency integrated volatility satisfies  
\begin{equation}
{\mathbf{y}}_{n}=\sum\limits_{j=1}^{P}\mathbf{A}_j	{\mathbf{y}}_{n-j}+\bm{\varepsilon}_{n}, \hspace{5mm}P+1\leq n\leq T,
\label{autoregressive}
\end{equation}
where the intercept $\bm{\omega}^{g}$ is centered out since $\mathbb{E}(\mathbf{y}_{n})=0$, $\mathbf{A}_j$'s are all $N\times N$ coefficient matrices, and innovations $\{\bm {\varepsilon}_n\}$ are $i.i.d$ with $\bm{\varepsilon}_{n}=({\varepsilon}_{1,n}, \cdots, {\varepsilon}_{N,n})^{\top}\in\mathbb{R}^N$, $\mathbb {E}(\bm{\varepsilon}_{n})=0$ and var$(\bm{\varepsilon}_{n})<\infty$. 
The  matrix polynomial for model \eqref{autoregressive} is defined as $\cm{A}(z)=\mathbf{I}_{N}-\mathbf A_{1} z-\cdots-\mathbf A_{P} z^{P}$, where $z \in \mathbb{C}$ with $\mathbb{C}$ being the complex space.

\begin{assum}
	The determinant of $\cm{A}(z)$ is not equal to zero for all $|z|<1$.
	\label{assumption 1}
\end{assum}

	The above assumption is a necessary and sufficient condition for the strict stationarity of a vector autoregression, and it hence makes sure that $\{\mathbf{y}_n\}$ is strictly stationary.
	In the meanwhile, the integrated volatility $\{\mathbf{y}_n\}$ are generated by the It\^{o} process at \eqref{high-dimensional-har} and \eqref{eq:mhar-ito} rather than the equation at \eqref{autoregressive}, i.e., the sequence $\{\mathbf{y}_n\}$ is not a vector autoregressive process.
	As a result, Assumption \ref{assumption 1} is not necessary, while the stationarity of $\{\mathbf{y}_n\}$ can simplify the presentation, as well as technical details, dramatically.
	When Assumption \ref{assumption 1} does not hold, we cannot center the integrated volatility since it is no longer stationary, and this problem can be solved by using the original integrated volatility at \eqref{estimation error} and then including an intercept at \eqref{autoregressive}. The corresponding theoretical discussions can be referred to \cite{Zheng2021}.

\begin{figure}[H]
	\includegraphics[width=0.9\textwidth]{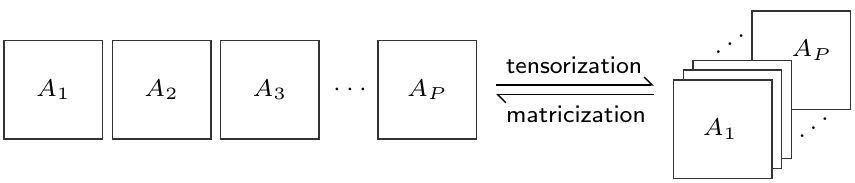}
	\caption{Rearranging $\mathbf{A}_j$s into a third-order tensor \cm{A}.}
	\label{fig_1}
\end{figure}

For model \eqref{autoregressive}, the number of parameters is $N^2P$, which can be very large, and this paper uses tensor techniques to conduct dimension reduction for the parameter space.
Specifically, the coefficient matrices are first rearranged into a third-order tensor $ {\cm{A}} \in \mathbb{R}^{N\times N \times P}$ such that $\cm{A}_{(1)}=(\mathbf {A}_{1}, \ldots, \mathbf{A}_{P})$; see Figure \ref{fig_1} for an illustration.
We then assume multilinear low ranks $(r_1,r_2,r_3)$ to the coefficient tensor $ {\cm{A}}$, and accordingly there exists a Tucker decomposition \citep{DeLathauwer2000},
\begin{equation}\label{tucker-decomposition}
	{\cm{A}}=\mathbf{\cm{G}} \times_{1} \mathbf{U}_{1} \times{ }_{2} \mathbf{U}_{2} \times_{3} \mathbf{U}_{3},
\end{equation}
where $\mathbf{\cm{G}} \in \mathbb{R}^{r_{1} \times r_{2} \times r_{3}}$ is the core tensor, and $\mathbf{U}_{1} \in \mathbb{R}^{N \times r_{1}}$, $\mathbf{U}_{2} \in \mathbb{R}^{N \times r_{2}}$ and $\mathbf{U}_{3} \in \mathbb{R}^{P \times r_{3}}$ are factor matrices.
We call formulas \eqref{estimation error} and  \eqref{autoregressive}, together with the low-rank structure at \eqref{tucker-decomposition}, the \textit{multilinear low-rank HAR} (MLR-HAR) model for simplicity.

Note that $\cm{A}_{(2)}=(\mathbf{A}_1^{\top},\cdots,\mathbf{A}_P^{\top})$ and $\cm{A}_{(3)}=(\vectorize(\mathbf{A}_1),\ldots,\vectorize(\mathbf{A}_P))^{\top}$, and the spaces spanned by $\cm{A}_{(1)}$, $\cm{A}_{(2)}$ and $\cm{A}_{(3)}$ are the column, row and temporal spaces of coefficient matrices, respectively.
As a result, the low-rank assumption at \eqref{tucker-decomposition} restricts the parameter space from three directions simultaneously, and the number of parameters is reduced to $r_{1} r_{2} r_{3}+\left(N-r_{1}\right) r_{1}+\left(N-r_{2}\right) r_{2}+\left(P-r_{3}\right) r_{3}$.

The proposed MLR-HAR model has a form similar to that of vector autoregressive models with measurement errors \citep{Staudenmayer2005} and, by plugging \eqref{autoregressive} into \eqref{estimation error}, we have
\begin{equation}
		\widetilde{\mathbf{y}}_{n}=\sum\limits_{j=1}^{P}\mathbf{A}_j	\widetilde{\mathbf{y}}_{n-j}+\bm{\epsilon}_{n}\hspace{5mm}\text{with}\hspace{5mm}
		\bm{\epsilon}_{n}=\underbrace {\bm{\eta}_n-\sum\limits_{j=1}^{P}\mathbf{A}_j\bm {\eta}_{n-j}}_{estimation \quad  error}+\underbrace{\bm {\varepsilon}_n}_{model \quad error}, 
	\label{MHAR_1}
\end{equation}
where $P+1 \leq n\leq T$.
Suppose that the coefficient tensor admits the HOSVD, ${\cm{A}}=\mathbf{\cm{G}} \times_{1} \mathbf{U}_{1} \times{ }_{2} \mathbf{U}_{2} \times_{3} \mathbf{U}_{3}$, i.e. $\mathbf{\cm{G}}$ is all-orthogonal, and $\mathbf{U}_{j}$'s are orthonormal.
Let $\cm{H}=\cm{G}\times_{3} \mathbf{U}_{3}$, and $\mathbf{H}_j\in\mathbb{R}^{r_1\times r_2}$ be its $j$-th frontal slice for $1\leq j\leq P$, i.e. $\cm{H}_{(1)}=(\mathbf{H}_1,\mathbf{H}_2,\cdots,\mathbf{H}_P)$. Thus, ${\cm{A}}=\mathbf{\cm{H}} \times_{1} \mathbf{U}_{1} \times{ }_{2} \mathbf{U}_{2} $, and we can rewrite model \eqref{MHAR_1} into
\begin{equation*}\label{eq:var-factor}
\widetilde{\mathbf{y}}_{n}=\mathbf{U}_1\sum_{j=1}^{P}\mathbf{H}_j \mathbf{U}_2^{\top} \widetilde{\mathbf{y}}_{n-j}+\bm{\epsilon}_n \hspace{5mm}\text{or}\hspace{5mm}
\mathbf{U}_1^{\top}\widetilde{\mathbf{y}}_{n}=\sum_{j=1}^{P}\mathbf{H}_j \mathbf{U}_2^{\top} \widetilde{\mathbf{y}}_{n-j}+\mathbf{U}_1^{\top}\bm{\epsilon}_n,
\end{equation*}
where $\mathbf{U}_1^{\top}\widetilde{\mathbf{y}}_n$ and $\mathbf{U}_2^{\top} \widetilde{\mathbf{y}}_{n-j}$ are the summarized factors of responses and predictors, respectively.
The HAR model in \cite{Cubadda:2017} corresponds to the case with $r_1=N$ and $\mathbf{U}_1$ being an identity matrix, and it hence has more parameters than the proposed MLR-HAR model; see Remark \ref{rem-sec2} for more discussions.

\begin{remark}\label{rem-factor}
Let $\cm{S}=\cm{G}\times_{1} \mathbf{U}_{1}\times_{2} \mathbf{U}_{2}\in\mathbb{R}^{N\times N\times r_3}$, and  $\mathbf{S}_j\in\mathbb{R}^{N\times N}$ be its $j$-th frontal slice for $1\leq j\leq r_3$, i.e. $\cm{S}_{(1)}=(\mathbf{S}_1,\mathbf{S}_2,\cdots,\mathbf{S}_{r_3})$. 
Denote $\mathbf{U}_{3}=(\mathbf{u}^{(1)},\ldots,\mathbf{u}^{(r_3)})\in\mathbb{R}^{P\times r_3}$ and $\mathbf{u}^{(k)}=(u_1^{(k)},\ldots,u_P^{(k)})^\top\in\mathbb{R}^P$, where $1\leq k\leq r_3$.
As a result, ${\cm{A}}=\mathbf{\cm{S}} \times_{3} \mathbf{U}_{3}$, and model \eqref{MHAR_1} can be reformulated into
\begin{equation}\label{sec3-eq1}
	\widetilde{\mathbf{y}}_{n}= \mathbf{S}_1\widetilde{\mathbf{x}}_{n}^{(1)}+ \cdots+ \mathbf{S}_{r_3}\widetilde{\mathbf{x}}_{n}^{(r_3)}+ \bm{\epsilon}_{n} \hspace{5mm}\text{with}\hspace{5mm}
	\widetilde{\mathbf{x}}_{n}^{(k)}=\sum_{j=1}^{P}u_j^{(k)}\widetilde{\mathbf{y}}_{n-j}.
\end{equation}
Just like the daily, weekly and monthly realized volatilities, $\widetilde{\mathbf{x}}_{n}^{(k)}$'s are the summarized factors along the temporal direction, and they can be treated as $r_3$ heterogeneous volatility components, which are automatically selected by the estimation method.
Moreover, $\mathbf{U}_{3}$ is the corresponding loading matrix, and model \eqref{sec3-eq1} will reduce to the vector HAR model \citep{Bubak:2011, SoucekandTodorova:2013} when $\mathbf{U}_{3}=\mathbf{U}_{\mathrm{C}}$ at \eqref{har-rv}.
\end{remark}

\subsection{Ordinary least squares estimation}
\label{Sect3.3}

Suppose that the multilinear ranks $(r_1, r_2, r_3)$ of the coefficient tensor $\cm{A}$ are known.
From \eqref{tucker-decomposition} and \eqref{MHAR_1}, the ordinary least squares (OLS) estimator for MLR-HAR models can be defined as
\begin{align*}
	\widehat{{\cm{A}}}_{\rm MLR}\equiv [\![\widehat{\mathbf{\cm{G}}}; \widehat{\mathbf{U}}_{1}, \widehat{\mathbf{U}}_{2}, \widehat{\mathbf{U}}_{3}]\!]=	 \mathop{\argmin}L(\cm{G}, {\mathbf{U}}_{1}, {\mathbf{U}}_{2}, {\mathbf{U}}_{3}),
\end{align*}
where $\widetilde{\mathbf{x}}_{n}=(\widetilde{\mathbf{y}}_{n-1}^{\top},\ldots,\widetilde{\mathbf{y}}_{n-P}^{\top})^\top$, and
\begin{align*}
	L(\cm{G}, {\mathbf{U}}_{1}, {\mathbf{U}}_{2}, {\mathbf{U}}_{3})=\frac{1}{T}\sum_{n=P+1}^{T}||	\widetilde{\mathbf{y}}_{n}-(\mathbf{\cm{G}}\times _{1}\mathbf{U}_{1}\times _{2}\mathbf{U}_{2}\times_{3}\mathbf{U}_{3})_{(1)}\widetilde{\mathbf{x}}_{n}||_2^2.
\end{align*}
Although the components of Tucker decomposition, ${\mathbf{\cm{G}}}$, ${\mathbf{U}}_{1}$, ${\mathbf{U}}_{2}$ and ${\mathbf{U}}_{3}$, are not identifiable, $\cm{A}$ can be uniquely identified. 
For the low-dimensional case with both $N$ and $P$ being fixed, this subsection establishes the asymptotic normality of $\widehat{{\cm{A}}}_{\rm MLR}$ by adapting the technique for overparameterized models in \cite{shapiro:1986}.

Let $\bm{\varphi}=\left(\rm vec(\mathbf{\cm{G}_{(1)})}^{\top},  \rm vec(\mathbf{U}_1)^{\top},  \rm vec(\mathbf{U}_2)^{\top},   \rm vec(\mathbf{U}_3)^{\top}\right)^{\top}$, and $\bm{h}(\bm{\varphi})=\rm vec(\mathbf{\cm{A}}_{(1)})=\rm vec(\mathbf{U}_1\mathbf{\cm{G}}_{(1)}(\mathbf{U}_{3} \otimes\mathbf{U}_2 )^{\top})$
be a function of $\bm{\varphi}$.
Denote $\bm{\Sigma}_{{\varepsilon}}=\rm var(\bm{\varepsilon}_{n})$ and
\begin{align*}
	\bm{\Gamma}^{*}=
	\begin{pmatrix}
		\bm{\Gamma}_0 & \bm{\Gamma}_1 &\cdots& \bm {\Gamma}_{P-1}&\\
		\bm {\Gamma}_1^{\top} & \bm {\Gamma}_0 &\cdots& \bm {\Gamma}_{P-2}&\\
		\vdots & \vdots &\ddots&\vdots&\\
		\bm {\Gamma}_{P-1}^{\top} & \bm {\Gamma}_{P-2}^{\top} &\cdots& \bm{ \Gamma}_{0}&
	\end{pmatrix}\in\mathbb{R}^{NP \times NP},
\end{align*} 
where $\bm{\Gamma}_{j}=\mathrm{cov}(\mathbf{y}_{n+j},\mathbf{y}_n)\in\mathbb{R}^{N\times N}$ with $j  \geq 0$.
As a result, the Jacobian matrix $\mathbf{H}:={\partial \bm{h}(\bm{\varphi})}/{\partial\bm{\varphi}}$ has the form of
\begin{align*}
	\mathbf{H}=&\left( (\mathbf{U}_{3} \otimes \mathbf{U}_{2} \otimes \mathbf{U}_{1}),[( \mathbf{U}_{3}  \otimes \mathbf{U}_{2}) \mathbf{\cm{G}}_{(1)}^{\top}] \otimes\mathbf{ I}_{N}, \mathbf{T}_{21}	
	\left\{\left[(\mathbf{U}_{3} \otimes \mathbf{U}_{1} ) \mathbf{\cm{G}}_{(2)}^{\top}\right]\otimes\mathbf{I}_{N}\right\},\right.\\
	& \hspace{5mm} \left.	 \mathbf{T}_{31}\left\{\left[(\mathbf{U}_{2} \otimes \mathbf{U}_{1} ) \mathbf{\cm{G}}_{(3)}^{\top}\right] \otimes\mathbf{ I}_{P}\right\}
	\right) \in\mathbb{R}^{N^2P \times (r_1 r_2 r_3+N r_1+N r_2+ P r_3)},
\end{align*}
where $\mathbf{I}_{\ell}\in\mathbb{R}^{\ell \times \ell}$ is an identity matrix, and $\mathbf{T}_{ij}\in\mathbb{R}^{(N^2P)\times (N^2P)}$ is a permutation matrix such that $\mathrm{vec}({\cm{A}}_{(j)})=\mathbf{T}_{ij}\mathrm{vec}({\cm{A}}_{(i)})$ for $1\leq i,j\leq 3$.
Moreover, let $\mathbf{J}=\bm\Gamma^{*} \otimes \bm{\Sigma}_{\varepsilon}^{-1}$, and $\bm{\Sigma}_{\rm MLR} =\mathbf{H}(\mathbf {H}^{\top}\mathbf{J}\mathbf {H})^{\dagger}\mathbf{H}^{\top}$,  where $\dagger$ denotes the Moore-Penrose inverse.

%
%

\begin{thm}
	Suppose that \eqref{condition_estimation error} and Assumption \ref{assumption 1} hold, $\mathbb{E}\|\bm{\varepsilon}_{n}\|^4<\infty$, and both $N$ and $P$ are fixed.
	If $m  \rightarrow  \infty$, $T\rightarrow  \infty$ and $T^{4+\delta}m^{-1} \rightarrow  0$ for some $\delta>0$, then
	\begin{equation*}
		\sqrt{T}\{{\rm{vec}}((\widehat{\cm{A}}_{\rm{MLR}})_{(1)})-\rm{vec}(\mathbf{\cm {A}}_{(1)})\}  \rightarrow  N(\mathbf{0}, {\bm{\Sigma}}_{\rm{MLR}})
	\end{equation*}
	in distribution, where $\mathbf{0}$ is the vector of zeros.
	\label{theorem 1}
\end{thm}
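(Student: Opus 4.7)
The plan is to combine two ingredients: (i) a standard vector-autoregression central limit theorem for the \emph{unrestricted} OLS estimator of $\cm{A}_{(1)}$ on $\{\mathbf{y}_n\}$, and (ii) Shapiro's overparameterised-model theorem \citep{shapiro:1986} to translate this into a limit distribution for the constrained minimiser $\widehat{\cm{A}}_{\mathrm{MLR}}$; in between one must show that the measurement error $\bm{\eta}_n$ is asymptotically negligible, which is where the rate condition $T^{4+\delta}m^{-1}\to 0$ enters. The key observation is that while the Tucker factors $(\cm{G},\mathbf{U}_1,\mathbf{U}_2,\mathbf{U}_3)$ are not identifiable, the composite map $\bm{h}(\bm{\varphi})=\mathrm{vec}(\cm{A}_{(1)})$ is, and this is exactly the situation Shapiro's theorem is designed for.

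Concretely, from \eqref{MHAR_1} the unconstrained OLS estimator $\widehat{\cm{A}}_{\mathrm{OLS}}$ of the coefficient tensor satisfies
\begin{align*}
\mathrm{vec}\{(\widehat{\cm{A}}_{\mathrm{OLS}})_{(1)}-\cm{A}_{(1)}\}=\Bigl[\bigl(T^{-1}\sum_{n=P+1}^{T}\widetilde{\mathbf{x}}_{n}\widetilde{\mathbf{x}}_{n}^{\top}\bigr)^{-1}\!\otimes\mathbf{I}_{N}\Bigr]\,T^{-1}\sum_{n=P+1}^{T}(\widetilde{\mathbf{x}}_{n}\otimes\bm{\epsilon}_{n}),
\end{align*}
where $\bm{\epsilon}_n=\bm{\varepsilon}_n+\bm{\eta}_n-\sum_{j=1}^{P}\mathbf{A}_j\bm{\eta}_{n-j}$. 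Under Assumption~\ref{assumption 1} with $E\|\bm{\varepsilon}_n\|^4<\infty$, Proposition~\ref{prop:2} gives strict stationarity and finite fourth moments of $\{\mathbf{y}_n\}$, so a standard martingale-difference CLT yields $T^{-1/2}\sum_{n}\mathbf{x}_n\otimes\bm{\varepsilon}_n\Rightarrow N(0,\bm{\Gamma}^{*}\otimes\bm{\Sigma}_{\varepsilon})$ with $\mathbf{x}_n=(\mathbf{y}_{n-1}^{\top},\ldots,\mathbf{y}_{n-P}^{\top})^{\top}$, together with $T^{-1}\sum\mathbf{x}_n\mathbf{x}_n^{\top}\to\bm{\Gamma}^{*}$ in probability. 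It then remains to show that replacing $\mathbf{x}_n,\bm{\varepsilon}_n$ by their tilded counterparts changes nothing at $\sqrt{T}$ order: all linear cross terms involving $\bm{\eta}_n$ are controlled in $L^2$ via \eqref{condition_estimation error}, and the remaining quadratic $\bm{\eta}$--$\bm{\eta}$ pieces together with the inverse-Gram remainder $(T^{-1}\sum\widetilde{\mathbf{x}}_n\widetilde{\mathbf{x}}_n^{\top})^{-1}-\bm{\Gamma}^{*-1}$ are shown to be smaller order by the rate $T^{4+\delta}m^{-1}\to 0$. This delivers $\sqrt{T}\,\mathrm{vec}\{(\widehat{\cm{A}}_{\mathrm{OLS}})_{(1)}-\cm{A}_{(1)}\}\Rightarrow N(0,\mathbf{J}^{-1})$ since $\mathbf{J}^{-1}=\bm{\Gamma}^{*-1}\otimes\bm{\Sigma}_{\varepsilon}$.

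With the unrestricted CLT in hand, \citet{shapiro:1986} says that the image $\bm{h}(\widehat{\bm{\varphi}})=\mathrm{vec}((\widehat{\cm{A}}_{\mathrm{MLR}})_{(1)})$ of the constrained minimiser has a Gaussian limit whose covariance is the oblique projection of $\mathbf{J}^{-1}$ onto $\mathrm{range}(\mathbf{H})$ in the $\mathbf{J}$-inner product; a short Moore--Penrose computation identifies this projection with $\mathbf{H}(\mathbf{H}^{\top}\mathbf{J}\mathbf{H})^{\dagger}\mathbf{H}^{\top}=\bm{\Sigma}_{\mathrm{MLR}}$. Shapiro's smoothness hypothesis is immediate because $\bm{h}$ is polynomial, and the constant-rank condition on $\mathbf{H}$ near $\bm{\varphi}_0$ follows from working at a true multilinear-rank-$(r_1,r_2,r_3)$ tensor via the HOSVD parameterisation. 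The main obstacle is the measurement-error step: $\bm{\eta}_n$ is assumed only to satisfy the weak bound \eqref{condition_estimation error} and is \emph{not} assumed to be independent, martingale-difference, or mean zero across $n$, so no CLT for it is available and every bound must pass through moments. The strong rate $T^{4+\delta}m^{-1}\to 0$ is precisely what is needed to control uniformly the many bilinear and quadratic $\bm{\eta}$ terms that the nonlinear Tucker parameterisation produces, and a careful accounting of these terms is the heart of the argument.
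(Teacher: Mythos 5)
Your proposal follows essentially the same two-step route as the paper: first establish the $\sqrt{T}$-normality of the unrestricted OLS estimator with limit covariance $\mathbf{J}^{-1}=\bm{\Gamma}^{*-1}\otimes\bm{\Sigma}_{\varepsilon}$, using the rate $T^{4+\delta}m^{-1}\to 0$ to make the measurement-error terms (controlled through $\eta_{\max}=\max_{n,i}|\eta_{i,n}|$ via Markov bounds from \eqref{condition_estimation error}) negligible at $\sqrt{T}$ order, and then invoke Proposition 4.1 of \cite{shapiro:1986} with the discrepancy function built from the least-squares objective to obtain ${\bm{\Sigma}}_{\rm MLR}=\mathbf{H}(\mathbf{H}^{\top}\mathbf{J}\mathbf{H})^{\dagger}\mathbf{H}^{\top}$. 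The decomposition of the Gram matrix and cross-product terms, the role of the non-identifiable Tucker factors versus the identifiable map $\bm{h}(\bm{\varphi})$, and the final Moore--Penrose simplification all match the paper's argument.
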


The asymptotic variance matrix ${\bm{\Sigma}}_{\rm{MLR}}$ in the above theorem is degenerated since $\widehat{\cm{A}}_{\rm{MLR}}$ has low Tucker ranks.
Moreover, the number of intraday observations, $m$, is required to diverge with a rate faster than $T^4$, and this condition can be relaxed if we can achieve a higher order moment on estimation errors at \eqref{condition_estimation error}.

As a comparison, we also consider the cases without low-rank constraint and with low-rankness on $\cm {A}_{(2)}$ only, and their OLS estimators are given below,
\begin{equation*}
\widehat{\cm{A}}_{\rm{OLS}}= \mathop{\argmin} \frac{1}{T}\sum_{n=P+1}^{T}|| \widetilde{\mathbf{y}}_n- \cm{A}_{(1)}\widetilde{\mathbf{x}}_n	||_2^2\hspace{2mm}\text{and}\hspace{2mm}
\widehat{\cm{A}}_{\rm{MRI}}= \mathop{\argmin}\limits_{ {\rm {rank}}( \cmt{A}_{(2)}) \leq r_2}\frac{1}{T}\sum_{n=P+1}^{T} || \widetilde{\mathbf{y}}_n-  \cm{A}_{(1)}\widetilde{\mathbf{x}}_n	||_2^2	,
\end{equation*}
where $	\widehat{\cm{A}}_{\rm{MRI}}$ corresponds to the multivariate autoregressive index model (MRI) in \cite{Reinsel:1983}.
For two positive semi-definite matrices ${\bm{\Sigma}}_1$ and ${\bm{\Sigma}}_2$, denote ${\bm{\Sigma}}_1\leq {\bm{\Sigma}}_2$ if ${\bm{\Sigma}}_2-{\bm{\Sigma}}_1$ is positive semi-definite.
The next corollary theoretically verifies that $\widehat{\cm{A}}_{\rm{MLR}}$ is the most efficient, while $\widehat{\cm{A}}_{\rm{OLS}}$ performs worst.


\begin{cor}
	If the conditions of Theorem \ref{theorem 1} hold, then $	\sqrt{T}\{{\rm{vec}}((\widehat{\cm{A}}_{\rm{OLS}})_{(1)})-{\rm{vec}}(\mathbf{\cm {A}}_{(1)})\}  \rightarrow  N(\mathbf{0}, {\bm{\Sigma}}_{\rm{OLS}})$ and
	$	\sqrt{T}\{{\rm{vec}}((\widehat{\cm{A}}_{\rm{MRI}})_{(1)})-{\rm{vec}}(\mathbf{\cm {A}}_{(1)})\}  \rightarrow  N(\mathbf{0}, {\bm{\Sigma}}_{\rm{MRI}})$ in
	distribution as $m \rightarrow \infty$ and $T \rightarrow \infty$, where $\bm{\Sigma}_{\rm{OLS}} =\mathbf{J}^{-1}$ and ${\bm{\Sigma}}_{\rm{MRI}}$ is defined in the proof. Moreover, it holds that ${\bm{\Sigma}}_{\rm{MLR}} \leq{\bm{\Sigma}}_{\rm{MRI}} \leq {\bm{\Sigma}}_{\rm{OLS}}$.
	\label{corollary 1}
\end{cor}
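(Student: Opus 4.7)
The three estimators in Corollary~\ref{corollary 1} minimize the same quadratic loss $T^{-1}\sum_n\|\widetilde{\mathbf y}_n - \cm{A}_{(1)}\widetilde{\mathbf x}_n\|_2^2$ but over three nested parameter sets: the full space for OLS, $\{\cm{A}:\mathrm{rank}(\cm{A}_{(2)})\leq r_2\}$ for MRI, and $\{\cm{A}:\mathrm{rank}(\cm{A}_{(i)})\leq r_i,\, i=1,2,3\}$ for MLR. My plan is to (i) derive the asymptotic normality of $\widehat{\cm{A}}_{\rm OLS}$ and $\widehat{\cm{A}}_{\rm MRI}$ by the same two-step argument used for Theorem~\ref{theorem 1}, and (ii) deduce the Loewner ordering from the nesting of the corresponding Jacobian column spaces.

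For (i), the OLS case is the classical VAR one: a Taylor expansion together with a martingale CLT under Assumption~\ref{assumption 1} yields $\sqrt{T}\,\mathrm{vec}(\widehat{\cm{A}}_{\rm OLS,(1)} - \cm{A}_{(1)})\to N(\mathbf{0},(\bm{\Gamma}^*)^{-1}\otimes\bm{\Sigma}_{\varepsilon}) = N(\mathbf{0},\mathbf{J}^{-1})$, and the condition $T^{4+\delta}/m\to 0$ absorbs the contribution of the measurement error $\bm{\eta}_n$ exactly as in the proof of Theorem~\ref{theorem 1}. For MRI, I would parameterize the rank-$r_2$ constraint as $\cm{A}_{(2)}=\mathbf{V}\mathbf{M}$ with $\mathbf{V}\in\mathbb{R}^{N\times r_2}$ and $\mathbf{M}\in\mathbb{R}^{r_2\times NP}$, and invoke Shapiro's overparameterization lemma verbatim as in Theorem~\ref{theorem 1}; this produces $\bm{\Sigma}_{\rm MRI}=\mathbf{H}_{\rm MRI}(\mathbf{H}_{\rm MRI}^{\top}\mathbf{J}\mathbf{H}_{\rm MRI})^{\dagger}\mathbf{H}_{\rm MRI}^{\top}$, where $\mathbf{H}_{\rm MRI}$ is the Jacobian of $(\mathbf{V},\mathbf{M})\mapsto\mathrm{vec}(\cm{A}_{(1)})$ at the truth.

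For (ii), I observe that all three variances share the form $\mathbf{M}_i:=\mathbf{H}_i(\mathbf{H}_i^{\top}\mathbf{J}\mathbf{H}_i)^{\dagger}\mathbf{H}_i^{\top}$ once we take $\mathbf{H}_{\rm OLS}=\mathbf{I}_{N^2P}$, since then $\bm{\Sigma}_{\rm OLS}=\mathbf{J}^{-1}$ fits the same mould. Writing $\mathbf{V}=\mathbf{U}_2$ and $\mathbf{M}=\cm{G}_{(2)}(\mathbf{U}_3\otimes\mathbf{U}_1)^{\top}$ and differentiating shows that every MLR tangent direction is realizable as an MRI tangent direction, giving $\mathrm{col}(\mathbf{H}_{\rm MLR})\subseteq\mathrm{col}(\mathbf{H}_{\rm MRI})\subseteq\mathrm{col}(\mathbf{H}_{\rm OLS})$. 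Since $\mathbf{J}$ is positive definite, $\mathbf{M}_i\mathbf{J}$ is the $\mathbf{J}$-orthogonal projector onto $\mathrm{col}(\mathbf{H}_i)$, and nested subspaces yield the double-projection identities $\mathbf{M}_1\mathbf{J}\mathbf{M}_2=\mathbf{M}_2\mathbf{J}\mathbf{M}_1=\mathbf{M}_1$. Combined with the Moore--Penrose identity $\mathbf{M}_i\mathbf{J}\mathbf{M}_i=\mathbf{M}_i$, these give $(\mathbf{M}_2-\mathbf{M}_1)\mathbf{J}(\mathbf{M}_2-\mathbf{M}_1)=\mathbf{M}_2-\mathbf{M}_1\succeq\mathbf{0}$. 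Applying this successively to the pairs $({\rm MLR},{\rm MRI})$ and $({\rm MRI},{\rm OLS})$ delivers $\bm{\Sigma}_{\rm MLR}\leq\bm{\Sigma}_{\rm MRI}\leq\bm{\Sigma}_{\rm OLS}$.

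\textbf{Main obstacle.} The delicate step will be verifying the Shapiro regularity conditions for the MRI parameterization in the presence of the measurement error $\bm{\eta}_n$: one must show the constrained loss is locally quadratic around the true $\cm{A}$ and that the unidentifiable directions of $(\mathbf{V},\mathbf{M})$ cancel out in the limit of $\cm{A}_{(1)}$. This is handled by arguing on the manifold of $\cm{A}$ rather than on $(\mathbf{V},\mathbf{M})$, precisely as is done for $(\cm{G},\mathbf{U}_1,\mathbf{U}_2,\mathbf{U}_3)$ in Theorem~\ref{theorem 1}. Once in hand, the efficiency ordering is a short linear-algebra consequence of the projector identities above.
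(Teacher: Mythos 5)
Your proposal is correct and follows essentially the same route as the paper: the OLS and MRI limits are obtained by the same two-step (martingale CLT plus Shapiro overparameterization) argument as Theorem \ref{theorem 1}, with the MRI constraint parameterized through a factorization of the mode-2 matricization (the paper writes it as $\cm{A}=\cm{H}\times_2\mathbf{U}_2$, equivalent to your $\cm{A}_{(2)}=\mathbf{V}\mathbf{M}$), and the Loewner ordering follows from the nesting of Jacobian column spaces via the chain rule $\mathbf{H}_{\rm MLR}=\mathbf{H}_{\rm MRI}\cdot\partial\bm{\theta}/\partial\bm{\varphi}$. Your $\mathbf{J}$-orthogonal double-projection identities are just the paper's computation $\bm{\Sigma}_{\rm OLS}-\bm{\Sigma}_{\rm MLR}=\mathbf{J}^{-1/2}\mathbf{Q}_{\mathbf{J}^{1/2}\mathbf{H}}\mathbf{J}^{-1/2}$ conjugated by $\mathbf{J}^{1/2}$, so the arguments coincide.
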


\section{High-dimensional HAR modeling}\label{Sect4}

\subsection{High-dimensional HAR modeling}
\label{Sect4.1}

It is common to encounter many assets in real applications, and the number of assets $N$ can be very large, say growing with sample size $T$ with arbitrary rates. This corresponds to the high-dimensional setting, and the derived asymptotic properties in the previous section are no longer satisfied. This section alternatively establishes the non-asymptotic properties of the OLS estimation for the high-dimensional case. 


\begin{assum}
	Model error $\bm{\varepsilon}_{n} = \bm{\Sigma}_{\varepsilon}^{1/2} \bm{\xi}_{n}$, where $\{\bm{\xi}_{n}\}$ are $i.i.d.$ random vectors with $\mathbb{E}(\bm{\xi}_{n})=0$, ${\rm {var}}(\bm{\xi}_{n})=\mathbf{I}_{N}$, and $\bm{\Sigma}_{\varepsilon}={\rm{var}}(\bm{\varepsilon}_{n})$ is a positive definite matrix. In addition, the entries $(\bm{\xi}_{in})_{1\leq i \leq N}$ of $\bm{\xi}_{n}$ are mutually independent and $\kappa^2$-sub-Gaussian, and model errors $\{\bm{\varepsilon}_{n}\}$ are independent of estimation errors $\{\bm{\eta}_n\}$.
	\label{sub-Gaussian}
\end{assum}

The sub-Gaussianity in the above assumption is commonly used for high-dimensional settings in the literature \citep{Wainwright2019}. The independence between model errors and estimation errors is mainly used to simplify the technical proofs for theorems in this section, and it can be relaxed with lengthy proofs.

We next derive the non-asymptotic error bounds, which will rely on the temporal and cross-sectional dependence of $\{\widetilde{\mathbf{y}}_{n}\}$ \citep{basu2015regularized}.
To this end, two dependence measures are first defined below,
\begin{align*}
\mu_{\min}(\cm{A}):= \min_{|z|=1} \lambda_{\min}(\cm{A}^{*}(z)\cm{A}(z))\quad \text{and}\quad 
\mu_{\max}(\cm{A}):= \max_{|z|=1} \lambda_{\max}(\cm{A}^{*}(z)\cm{A}(z)),
\end{align*}
where the matrix polynomial $\cm{A}(z)=\mathbf{I}_{N}-\mathbf A_{1} z-\cdots-\mathbf A_{P} z^{P}$, $\cm{A}^{*}(z)$ is the conjugate transpose of $\cm{A}(z)$. For any two sequences $\{a_n\}$ and $\{b_n\}$, denote by $a_n \lesssim b_n$ (or $a_n \gtrsim b_n$) if there exists a constant $C$ such that $a_n\leq C b_n$ (or $a_n\geq C b_n$) for all $n$. Let $\kappa_L = \lambda_{\min}(\bm{\Sigma}_{\varepsilon})/\mu_{\max}(\cm{A})$, $\kappa_U = \lambda_{\max}(\bm{\Sigma}_{\varepsilon})/\mu_{\min}(\cm{A})$, and $d_{\mathcal{M}} = r_1r_2r_3+Nr_1 + Nr_2 + Pr_3$ be the number of parameters for MLR-HAR models.

\begin{thm}
	Suppose that \eqref{condition_estimation error} and Assumptions \ref{assumption 1} and \ref{sub-Gaussian} hold. If the sample size $T \gtrsim \max(\kappa^2,\kappa^4) (\kappa_U/\kappa_L)^{2}d_{\mathcal{M}}$, $m^{1/4}T^{2\delta} \gtrsim N^2P\exp(d_{\mathcal{M}})$ and $m^{1/4} \gtrsim T^{1+2\delta}$ for some $\delta>0$, then
	\[
	\|\widehat{{\cm{A}}}_{\rm MLR} - \cm{A}\|_{\rm F} \leq  \frac{C}{\kappa_L} \left[(\kappa^2 \sqrt{\lambda_{\max}(\bm{\Sigma}_{\varepsilon})\kappa_U} +\kappa\sqrt{\kappa_U})\sqrt{\frac{d_{\mathcal{M}}}{T}} + \frac{T^{1+2\delta}}{m^{1/4}}\right] \quad \text{and}
	\]
	\[
	\frac{1}{T}\sum_{n=P+1}^T\|(\widehat{{\cm{A}}}_{\rm MLR})_{(1)}\widetilde{\mathbf{x}}_{n} - \cm{A}_{(1)}\widetilde{\mathbf{x}}_{n}\|_{2}^2 \leq  \frac{C}{\kappa_L}  \left[(\kappa^2 \sqrt{\lambda_{\max}(\bm{\Sigma}_{\varepsilon})\kappa_U} +\kappa\sqrt{\kappa_U})\sqrt{\frac{d_{\mathcal{M}}}{T}} + \frac{T^{1+2\delta}}{m^{1/4}}\right]^2,
	\]
	with probability at least $1- \exp(-C d_{\mathcal{M}})-2\exp(-CT(\kappa_L/\kappa_U)^2\min\{\kappa^{-2},\kappa^{-4}\})$, where $C$ is a positive constant given in the proof.
	\label{thm2}
\end{thm}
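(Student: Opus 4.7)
The plan is to follow the standard $M$-estimator framework for constrained least-squares, adapted to the errors-in-variables time-series setting induced by using realized-volatility proxies. First I would set up the basic inequality from optimality of $\widehat{\cm{A}}_{\rm MLR}$: writing $\widehat{\Delta} = \widehat{\cm{A}}_{\rm MLR} - \cm{A}$ and using model \eqref{MHAR_1}, the inequality $L(\widehat{\cm{A}}_{\rm MLR}) \leq L(\cm{A})$ rearranges into
\[
\frac{1}{T}\sum_{n=P+1}^{T}\|\widehat{\Delta}_{(1)}\widetilde{\mathbf{x}}_n\|_2^2 \;\leq\; \frac{2}{T}\sum_{n=P+1}^{T}\langle \widehat{\Delta}_{(1)}\widetilde{\mathbf{x}}_n, \bm{\epsilon}_n\rangle,
\]
where $\bm{\epsilon}_n$ decomposes into the sub-Gaussian model error $\bm{\varepsilon}_n$ and the measurement-error piece $\bm{\eta}_n - \sum_j \mathbf{A}_j \bm{\eta}_{n-j}$. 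Because both $\widehat{\cm{A}}_{\rm MLR}$ and $\cm{A}$ have multilinear ranks at most $(r_1, r_2, r_3)$, the difference $\widehat{\Delta}$ has multilinear ranks at most $(2r_1, 2r_2, 2r_3)$ and hence lies in a variety of effective dimension $O(d_{\mathcal{M}})$; this low-rank membership is what allows the final rate to scale with $d_{\mathcal{M}}$ rather than $N^2P$.

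Next I would establish a restricted strong convexity (RSC) for the empirical Gram $T^{-1}\sum_n \widetilde{\mathbf{x}}_n\widetilde{\mathbf{x}}_n^\top$ on the cone of tensors with multilinear ranks $\leq (2r_1,2r_2,2r_3)$. By the spectral dependence-measure argument of \cite{basu2015regularized}, the stationary process $\{\mathbf{y}_n\}$ (stationarity from Assumption \ref{assumption 1}) has population autocovariance satisfying $\lambda_{\min}(\mathbb{E}\mathbf{x}_n\mathbf{x}_n^\top) \geq \kappa_L$ and $\lambda_{\max}(\mathbb{E}\mathbf{x}_n\mathbf{x}_n^\top) \leq \kappa_U$, and a standard sub-Gaussian concentration combined with an $\varepsilon$-net of cardinality $\exp(O(d_{\mathcal{M}}))$ on the rank cone yields
\[
\frac{1}{T}\sum_{n=P+1}^{T}\|\widehat{\Delta}_{(1)}\widetilde{\mathbf{x}}_n\|_2^2 \;\geq\; c\,\kappa_L\,\|\widehat{\Delta}\|_{\rm F}^2
\]
with the claimed probability $1-\exp(-CT(\kappa_L/\kappa_U)^2\min\{\kappa^{-2},\kappa^{-4}\})$, provided $T\gtrsim\max(\kappa^2,\kappa^4)(\kappa_U/\kappa_L)^2 d_{\mathcal{M}}$, where the contamination from $\bm{\eta}_n$ in passing from $\mathbf{x}_n$ to $\widetilde{\mathbf{x}}_n$ is absorbed via \eqref{condition_estimation error}. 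The score term is then handled dually: for any unit-Frobenius low-rank $\Delta$,
\[
\Bigl|\Bigl\langle \Delta_{(1)},\,\tfrac{1}{T}\sum_{n}\bm{\varepsilon}_n\widetilde{\mathbf{x}}_n^\top\Bigr\rangle\Bigr| \;\lesssim\; \bigl(\kappa^2\sqrt{\lambda_{\max}(\bm{\Sigma}_\varepsilon)\kappa_U}+\kappa\sqrt{\kappa_U}\bigr)\sqrt{d_{\mathcal{M}}/T}
\]
via the same $\varepsilon$-net plus concentration of sub-Gaussian bilinear forms in dependent variables; the measurement-error portion of the score, involving $\bm{\eta}_n$, contributes the additive bias $T^{1+2\delta}/m^{1/4}$ through a Markov/union bound, and it is precisely here that the condition $m^{1/4}T^{2\delta}\gtrsim N^2 P\exp(d_{\mathcal{M}})$ enters. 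Chaining RSC with these two upper bounds gives $c\kappa_L\|\widehat{\Delta}\|_{\rm F}^2 \leq 2\|\widehat{\Delta}\|_{\rm F}\cdot[\text{score bound}]$, and solving for $\|\widehat{\Delta}\|_{\rm F}$ yields the Frobenius-norm rate; substituting back into the basic inequality then delivers the prediction rate.

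The main obstacle I anticipate is the interaction between the non-convex Tucker rank constraint and the errors-in-variables contamination. Since the set of rank-$(r_1,r_2,r_3)$ tensors is a variety rather than a convex set, the usual decomposable-regularizer machinery does not apply: one must instead exploit the geometric fact that $\widehat{\Delta}$ has doubled Tucker ranks and construct an explicit low-rank $\varepsilon$-net, then verify that the RSC and score bounds hold uniformly over the associated tangent cone. Simultaneously, because $\bm{\eta}_n$ pollutes both the response and the regressors, the RSC step accumulates a cross term of the form $T^{-1}\sum_n\bm{\eta}_n^\top\mathbf{x}_n$ and the score accumulates $T^{-1}\sum_n\bm{\eta}_n\widetilde{\mathbf{x}}_n^\top$; since $\bm{\eta}_n$ only enjoys an $L^2$-control from \eqref{condition_estimation error} rather than a sub-Gaussian tail, these terms must be handled via moment/union arguments, which is the source of both the condition on $m$ and of the non-standard bias term $T^{1+2\delta}/m^{1/4}$ in the final rate. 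Finally, the sequence $\{\mathbf{y}_n\}$ is generated by the It\^o dynamics of Definition \ref{def:mhar-ito} and only satisfies the VAR recursion \eqref{autoregressive} at integer times, so the spectral dependence quantities $\mu_{\min}(\cm{A}),\mu_{\max}(\cm{A})$ must be legitimized through Proposition \ref{prop:2} rather than invoked directly from the VAR literature.
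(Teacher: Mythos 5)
Your proposal is correct and follows essentially the same route as the paper: the basic inequality from optimality, the observation that $\widehat{\cm{A}}_{\rm MLR}-\cm{A}$ has Tucker ranks at most $(2r_1,2r_2,2r_3)$, a restricted strong convexity bound built from the spectral dependence measures of \cite{basu2015regularized} together with an $\varepsilon$-net of cardinality $\exp(O(d_{\mathcal{M}}))$ over the low-rank set, and a deviation bound that splits the score into the sub-Gaussian model-error part (martingale Chernoff argument) and the measurement-error part controlled by \eqref{condition_estimation error} via Markov plus union bounds, which is exactly where the $T^{1+2\delta}/m^{1/4}$ term and the conditions on $m$ arise. The anticipated obstacles you list (non-convexity of the rank constraint handled by explicit nets, and the errors-in-variables cross terms having only $L^2$ control) are precisely the ones the paper's auxiliary Lemmas A.1--A.3 address.
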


The above theorem provides the upper bounds of both estimation and prediction errors, and they consist of two terms: the first one is due to the OLS estimation error for model \eqref{autoregressive}, and the second is caused by using realized volatilities to estimate integrated volatilities at \eqref{estimation error}.
When $\kappa_L$ and $\kappa_U $ are bounded away from zero and infinity, the estimation error $\|\widehat{{\cm{A}}}_{\rm MLR} - \cm{A}\|_{\rm F} =O_P(\sqrt{d_{\mathcal{M}}/T}+{T^{1+2\delta}}/{m^{1/4}})$, where $d_{\mathcal{M}}$ measures the complexity of MLR-HAR models, and the term of ${T^{1+2\delta}}/{m^{1/4}}$ converges to zero when $m$ diverges with a rate faster than $T^4$ as in the low-dimensional case in Theorem \ref{theorem 1}.
Similarly, the prediction error $T^{-1}\sum_{n=P+1}^T\|(\widehat{{\cm{A}}}_{\rm MLR})_{(1)}\widetilde{\mathbf{x}}_{n} - \cm{A}_{(1)}\widetilde{\mathbf{x}}_{n}\|_{2}^2= O_P({d_{\mathcal{M}}/T}+{T^{2+4\delta}}/{m^{1/2}})$, and their consistency can be achieved if  $m  \rightarrow  \infty$, $T\rightarrow  \infty$, ${d_{\mathcal{M}}/T}\rightarrow  0$ and $T^{4+8\delta}m^{-1} \rightarrow  0$.

\subsection{Projected gradient descent algorithm}
\label{Sect4.2}

From Section 3.3, it is a nonconvex problem to search for the OLS estimator $\widehat{{\cm{A}}}_{\rm MLR}$, and this makes the parameter estimation challenging numerically and theoretically. This subsection introduces a projected gradient descent (PGD) algorithm by adopting the method in \cite{chen2019non}, and its theoretical guarantee is also provided. 

Consider the parameter space of MLR-HAR models at  \eqref{estimation error}, \eqref{autoregressive} and \eqref{tucker-decomposition},
\begin{align*}
	\bm{\Theta}(r_1,r_2,r_3)=\{{\cm{A}} \in \mathbb{R}^{N\times N \times P}: \textrm{rank}({\cm{A}_{(i)}}) \leq r_i \quad {\rm {for}} \quad 1\leq i \leq 3 \}.
\end{align*}
We first introduce a projection of any tensor $\cm{B} \in \mathbb{R}^{N\times N \times P}$ onto $\bm{\Theta}(r_1,r_2,r_3)$.
For $1\leq i \leq 3$, let $\mathcal{M}_i$ be the matricization operator, which maps a tensor to its mode-$i$ matricization, and $\mathcal{M}_i^{-1}$ be the inverse operator, i.e. $\mathcal{M}_i(\cm{B})=\cm{B}_{(i)}$ and $\mathcal{M}_i^{-1}(\cm{B}_{(i)})=\cm{B}$.
Moreover, denote by $P_r$ a projection operator, which maps a matrix to its best rank $r$ approximation. Specifically, for a matrix, $P_r$ first conducts the SVD to it, and then the $r$ largest singular values are kept while the others are suppressed to zero.   
As a result, for a tensor $\cm{B} \in \mathbb{R}^{N\times N \times P}$, we can define its projection onto $\bm{\Theta}(r_1,r_2,r_3)$ below,
\begin{equation*}
	\widehat{P}_{\bm{\Theta}(r_1,r_2,r_3)}({\cm{B}}) :=  (\mathcal{M}_3^{-1}\circ P_{r_3} \circ \mathcal{M}_3) \circ (\mathcal{M}_2^{-1}\circ P_{r_2} \circ \mathcal{M}_2) \circ (\mathcal{M}_1^{-1}\circ P_{r_1} \circ \mathcal{M}_1) ({\cm{B}}).
\end{equation*}
Specifically, we first calculate mode-1 matricization of $\cm{B}$, then find out the best rank $r_1$ approximation by the SVD, and finally fold it back to a third-order tensor. The same action is further applied to the second and third modes sequentially; see Algorithm \ref{alg2} for details.
The order of which matricization is performed is nonessential, and it will not affect the forthcoming convergence analysis.
Moreover, $\widehat{P}_{\bm{\Theta}(r_1,r_2,r_3)}(\cdot)$ is an approximate projection onto $\bm{\Theta}(r_1,r_2,r_3)$ only, while the exact projection is well known to be an NP-hard problem \citep{hillar2013most}. 

\begin{algorithm}
	\caption{Projected gradient descent algorithm for HAR modeling}
	\begin{algorithmic}[H]
		\State {\bf Input :} data $\{\widetilde{\mathbf{y}}_n\}$, parameter space $\bm{\Theta}=\bm {\Theta}(r_{1},r_{2},r_{3})$, iterations $K$, step size $\eta$
		\State {\bf Initialize :} $k=0$ and $\widehat{{\cm{A}}}_0 \in \bm{\Theta}$.
		\For{$k=1,2,\dots, K$}
		\State $\widetilde{{\cm{A}}}_{k} = \widehat{{\cm{A}}}_{k-1} - \eta \nabla L(\widehat{{\cm{A}}}_{k-1})$ (gradient descent)
		\For{$j=1,2,3$}
		\State $\bm{B}_j = \mathcal{M}_j(\widetilde{{\cm{A}}}_{k})$ (mode-$j$ matricization)
		\State $\widehat{\bm{B}}_j =  P_{r_j}(\bm{B}_j)$ (best rank $r_j$ approximation by the SVD)
		\State $\widehat{{\cm{A}}}_{k} = \mathcal{M}_j^{-1}(\widehat{\bm{B}}_j)$ (fold into tensor by reversing the mode-$j$ matricization)
		\EndFor
		\EndFor 
		\State {\bf Output :} $\widehat{{\cm{A}}}_{K}$
	\end{algorithmic}
	\label{alg2}
\end{algorithm}

It is ready to introduce the PGD method to search for the OLS estimator $\widehat{{\cm{A}}}_{\rm MLR}$; see Algorithm \ref{alg2}.
Specifically, we first update the estimate by the commonly used gradient descent method, and the updated tensor is then projected onto $\bm{\Theta}(r_1,r_2,r_3)$ since it may not have low Tucker ranks.
Let $(r_1',r_2',r_3')$ be the running Tucker ranks used in Algorithm \ref{alg2}, and denote $d_{\mathcal{M}}^{\prime} = (r_1+r_1')(r_2+r_2')(r_3+r_3')+N(r_1+r_1') + N(r_2+r_2') + P(r_3+r_3')$.

\begin{thm}
	Suppose that \eqref{condition_estimation error} and Assumptions \ref{assumption 1} and \ref{sub-Gaussian} hold, step size $\eta = {2}/({3\kappa_U})$, and the running Tucker ranks  $r_i^\prime\geq\left(\sqrt[3]{1+{\kappa_{L}}/({24\kappa_{U}}})-1\right)^{-2}r_i$ with $1\leq i\leq 3$.
	If $T \gtrsim \max(\kappa^2,\kappa^4)$ $(\kappa_U/\kappa_L)^2d_{\mathcal{M}}^{\prime}$, $m^{1/4}T^{2\delta} \gtrsim N^2P\exp(d_{\mathcal{M}}^{\prime})$ and $m^{1/4} \gtrsim T^{1+2\delta}$ for some $\delta>0$, then
	\begin{equation*}
	\|\widehat{{\cm{A}}}_K- {\cm{A}}\|_{\rm F} \le (1-\frac{\kappa_{L}}{24\kappa_{U}})^K \|\widehat{\cm{A}}_0 - {\cm{A}}\|_{\rm F} + \frac{C}{\kappa_L} \left[(\kappa^2 \sqrt{\lambda_{\max}(\bm{\Sigma}_{\varepsilon})\kappa_U} +\kappa\sqrt{\kappa_U})\sqrt{\frac{d_{\mathcal{M}}^{\prime}}{T}} +\frac{T^{1+2\delta}}{m^{1/4}}\right],
	\end{equation*}
	with probability at least $1- \exp(-C d_{\mathcal{M}}^{\prime})-2\exp(-CT(\kappa_L/\kappa_U)^2\min\{\kappa^{-2},\kappa^{-4}\})$, where $C$ is a positive constant given in the proof. 
	\label{thm3}
\end{thm}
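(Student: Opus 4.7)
The plan is to adapt the non-convex projected gradient descent framework of \cite{chen2019non} to the present measurement-error HAR setting. Three ingredients are required: (i) restricted strong convexity (RSC) and smoothness (RSS) of the empirical loss $L$ on an inflated low-Tucker-rank set; (ii) a contraction-type inequality for the HOSVD-based approximate projection $\widehat{P}_{\bm{\Theta}(r_1',r_2',r_3')}$, quantified by a factor that shrinks as $r_i'/r_i$ grows; and (iii) a high-probability bound on $\nabla L({\cm{A}})$ at the true coefficient tensor, which is essentially the content of the proof of Theorem \ref{thm2}.

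For ingredient (i), I would work on the inflated set $\bm{\Theta}(r_1+r_1',r_2+r_2',r_3+r_3')$, whose covering entropy is of order $d_{\mathcal{M}}'$. Combining an $\epsilon$-net argument with the sub-Gaussian concentration for $T^{-1}\sum_n\widetilde{\mathbf{x}}_n\widetilde{\mathbf{x}}_n^\top$ in the style of \cite{basu2015regularized}, and using the independence of $\{\bm{\eta}_n\}$ from $\{\bm{\varepsilon}_n\}$ in Assumption \ref{sub-Gaussian} together with \eqref{condition_estimation error} to peel off the measurement-error contribution, one establishes
\[
\tfrac{1}{2}\kappa_L\,\|\cm{B}\|_{\rm F}^2\;\leq\;\frac{1}{T}\sum_{n=P+1}^T\|\cm{B}_{(1)}\widetilde{\mathbf{x}}_n\|_2^2\;\leq\;\tfrac{3}{2}\kappa_U\,\|\cm{B}\|_{\rm F}^2
\]
uniformly over $\cm{B}\in\bm{\Theta}(r_1+r_1',r_2+r_2',r_3+r_3')$, with probability at least $1-\exp(-Cd_{\mathcal{M}}')-2\exp(-CT(\kappa_L/\kappa_U)^2\min\{\kappa^{-2},\kappa^{-4}\})$, provided $T\gtrsim\max(\kappa^2,\kappa^4)(\kappa_U/\kappa_L)^2 d_{\mathcal{M}}'$. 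The rate condition $m^{1/4}T^{2\delta}\gtrsim N^2 P\exp(d_{\mathcal{M}}')$ is exactly what keeps the $\bm{\eta}$-contamination negligible on this event.

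For ingredient (ii), a mode-by-mode analysis of the HOSVD-based approximation yields, for every $\cm{C}\in\bm{\Theta}(r_1,r_2,r_3)$ and every $\cm{B}$,
\[
\|\widehat{P}_{\bm{\Theta}(r_1',r_2',r_3')}(\cm{B})-\cm{C}\|_{\rm F}^2\;\leq\;\prod_{i=1}^3(1+\tau_i)\cdot\|\cm{B}-\cm{C}\|_{\rm F}^2,
\]
where each $\tau_i$ depends only on $r_i'/r_i$ and vanishes as $r_i'/r_i\to\infty$; the threshold $r_i'\geq(\sqrt[3]{1+\kappa_L/(24\kappa_U)}-1)^{-2}r_i$ is calibrated precisely so that $\prod_{i=1}^3(1+\tau_i)\leq 1+\kappa_L/(24\kappa_U)$, which is what is needed to absorb the RSC/RSS gap. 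Composing this with the gradient step $\widetilde{\cm{A}}_k=\widehat{\cm{A}}_{k-1}-\eta\nabla L(\widehat{\cm{A}}_{k-1})$, the choice $\eta=2/(3\kappa_U)$, and the decomposition $\nabla L(\widehat{\cm{A}}_{k-1})=[\nabla L(\widehat{\cm{A}}_{k-1})-\nabla L({\cm{A}})]+\nabla L({\cm{A}})$, standard RSC/RSS algebra delivers the one-step contraction
\[
\|\widehat{\cm{A}}_k-{\cm{A}}\|_{\rm F}\;\leq\;\bigl(1-\tfrac{\kappa_L}{24\kappa_U}\bigr)\|\widehat{\cm{A}}_{k-1}-{\cm{A}}\|_{\rm F}+C_*\,\xi_T,
\]
where $\xi_T$ is the restricted-gradient magnitude at $\cm{A}$. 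Iterating over $k=1,\dots,K$ and summing the geometric series gives the stated bound. The quantity $\xi_T$ is controlled exactly as in the proof of Theorem \ref{thm2}: a sub-Gaussian VAR piece contributes $(\kappa^2\sqrt{\lambda_{\max}(\bm{\Sigma}_{\varepsilon})\kappa_U}+\kappa\sqrt{\kappa_U})\sqrt{d_{\mathcal{M}}'/T}$, while the measurement-error piece is bounded by $T^{1+2\delta}/m^{1/4}$ under $m^{1/4}\gtrsim T^{1+2\delta}$.

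The main obstacle, as I see it, is the bookkeeping in ingredient (i): because the measurement error $\bm{\eta}_n$ enters both the design $\widetilde{\mathbf{x}}_n$ and the response $\widetilde{\mathbf{y}}_n$ in \eqref{MHAR_1}, the RSC/RSS bounds have to be proved for the \emph{contaminated} Gram matrix rather than a clean autocovariance, and the uniform control of the contamination over the low-rank cone must not be so loose as to blow up the implied constants and thereby spoil the sharp projection threshold $r_i'\geq(\sqrt[3]{1+\kappa_L/(24\kappa_U)}-1)^{-2}r_i$. Once that is handled cleanly, ingredients (ii) and (iii) follow by routine adaptations of the tensor-PGD and sub-Gaussian concentration arguments.
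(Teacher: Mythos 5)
Your proposal is correct and follows essentially the same route as the paper: the one-step contraction is obtained by combining the contractive projection property of the HOSVD-based approximate projection (with the running ranks calibrated so the contraction factor is at most $\kappa_L/(24\kappa_U)$), restricted strong convexity/smoothness of the contaminated Gram matrix on the inflated low-Tucker-rank set of dimension $d_{\mathcal{M}}'$, and the deviation bound on $\nabla L(\cm{A})$ from the proof of Theorem \ref{thm2}, followed by summing the geometric series. The only cosmetic difference is that the paper routes the argument through a projection onto the two-dimensional linear span of $\widehat{\cm{A}}_k$ and $\cm{A}$ (as in \cite{chen2019non}) before invoking these lemmas, which is the bookkeeping device your ``standard RSC/RSS algebra'' implicitly relies on.
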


The two terms of the upper bound in the above theorem correspond to the optimization and statistical errors, respectively, and the statistical error has a form similar to that in Theorem \ref{thm2}.
Note that $\kappa_{L}<\kappa_{U}$, and hence the linear convergence rate is implied for the optimization error. Specifically, for any $\epsilon>0$, we can choose the number of iterations $K=[\log(\epsilon)-\log \|\widehat{\cm{A}}_0 - {\cm{A}}\|_{\rm F}]/\log[1-{\kappa_{L}}/{(24\kappa_{U})}]$ such that the optimization error is smaller than $\epsilon$.
For the initial value $\widehat{{\cm{A}}}_0$, we may simply set it to zero in practice.
Finally, Tucker ranks of $\cm{A}$ are usually unknown in real applications, and they can be chosen empirically or by a high-dimensional Bayesian information criterion (BIC),
\begin{equation}\label{eq:BIC}
	{\rm BIC}({\mathbf{r}}) = \log\left\{\frac{1}{T}\sum_{n=P+1}^T\|\tilde{\mathbf{y}}_n - \widehat{\cm{A}}({\mathbf{r}})\tilde{\mathbf{x}}_n\|_2^2\right\} + \frac{\lambda d_{\mathcal{M}}({\mathbf{r}})\log(T)}{T},
\end{equation}
where $\widehat{\cm{A}}({\mathbf{r}}) = \widehat{\cm{A}}_{\rm MLR}$ with Tucker ranks ${\mathbf{r}}=({r}_1,{r}_2,{r}_3)$, $d_{\mathcal{M}}({\mathbf{r}}) = {r}_1{r}_2{r}_3 + N{r}_1+N{r}_2 + P{r}_3$ is the number of parameters, and $\lambda$ is a tuning parameter.

\section{Simulations studies} \label{Sect5}
Three simulation experiments are conducted in this section: the first two are to evaluate the finite-sample performance of OLS estimation for the proposed MLR-HAR model under low- and high-dimensional settings in Sections 3.3 and 4.1, respectively, and the third one is to verify the convergence of the proposed algorithm in Section 4.2.

In the first experiment, the high-frequency data are generated by using the multivariate HAR-It\^o model at \eqref{high-dimensional-har} and \eqref{eq:mhar-ito} with the absence of jump components and drift terms, i.e. $L_{i,t}=0$ and $\mu_{i,t}=0$ with $1\leq i\leq N$.
The dimension is fixed at $N=5$, and we set the time interval to $\Delta=1/780$ during discretization.  
For the two Brownian motions ${\mathbf {B}}_{t}$ and $\mathbf{W}_t$, the increments ${\mathbf {B}}_{t+\Delta}-{\mathbf {B}}_{t}$ and ${\mathbf {W}}_{t+\Delta}-{\mathbf {W}}_{t}$ follow multivariate normal distributions with mean zero, variance matrix $\Delta \cdot\mathbf{I}_N$, and their correlation coefficient matrix being $-0.6\cdot\mathbf{I}_N$.
We fix the initial log prices to $X_{i,0}=\log(50)$ and the initial instantaneous volatility to $\sigma_{i,0}=0.1$ for all $1\leq i\leq N$.  
The model parameters are set to $(\omega_{i},v_i)=(0.2,0.4)$ with $1\leq i\leq N$, and 
$(\bm {\alpha}^{(1)}, \ldots, \bm {\alpha}^{(P)}) = \bm {\alpha}^*  (\mathbf {U}_{\mathrm{C}}^{\top} \otimes \mathbf{I}_N)\in \mathbb{R}^{N\times N\times 22}$, where $\mathbf {U}_{\mathrm{C}} \in \mathbb{R}^{22\times 3} $ is defined in \eqref{har-rv}, and $\bm {\alpha}^*\in \mathbb{R}^{N\times 3N}$ is a randomly generated matrix with rank two and Assumption \ref{assumption 1} being satisfied.
From Remark \ref{rem-sec2}, we can calculate the coefficient tensor, $\cm{A} \in \mathbb{R}^{N\times N \times 22}$, of its low-frequency representation, and it can be verified to have the low Tucker rank of $(r_{1},r_{2},r_{3})=(2,2,3)$.

We consider five different sample sizes, $T=500(1+j)$ with $1\leq j\leq 5$, and there are 2000 replications for each setting.
The realized volatility is first calculated with the number of intraday observations being $m=78$ or $780$, and then
Algorithm \ref{alg2} is employed to search for the estimate $\widehat{\cm{A}}_{\rm MLR}$ with step size $5\times 10^{-4}$, tolerance $10^{-7}$ and initial values $\widehat {\cm{A}}_{0}=0$.
As a comparison, we also calculate the estimators without low-rank constraint and with low-rankness on $\cm {A}_{(2)}$ only, i.e. $\widehat{\cm{A}}_{\rm OLS}$ and $\widehat{\cm{A}}_{\rm MRI}$ in Section 3.3.
Moreover, the asymptotic variance matrices of the three estimators can be obtained according to Theorem \ref{theorem 1} and Corollary \ref{corollary 1}.
Figure \ref{thm1-plot} presents the maximum singular value of empirical variance (EVar) matrices and averaged maximum singular values of estimated asymptotic variance (AVar) matrices. The bias is also given in terms of averaged absolute deviations from the true coefficient tensor, and it is squared to be comparable with the EVar and AVar.
We have four findings below. (i.) All bias and variance go to zero as the sample size $T$ increases, and we may conclude the consistency of the three estimators.
(ii.) Comparing with $\widehat{\cm{A}}_{\rm OLS}$ and $\widehat{\cm{A}}_{\rm MRI}$, $\widehat{\cm{A}}_{\rm MLR}$ has the smallest bias and variances, and it is consistent with our intuition that $\widehat{\cm{A}}_{\rm MLR}$ makes use of more low-rank structures.
(iii.) The EVar generally matches the corresponding AVar well, with their differences getting smaller as $T$ increases, although the AVar tends to underestimate the variances for all cases. 
(iv.) Finally, all three estimators have slightly smaller bias, EVar and AVar with a larger number of intraday observations, i.e. $m=780$. 

The second experiment is designed to verify the non-asymptotic estimation error bound in Theorem \ref{thm2}, which consists of two parts, $O_p(\sqrt{d_{\mathcal{M}}/T})$ and $O_p(T^{1+2\delta}/m^{1/4})$. The two parts are due to the model error in low-frequency representations and the high-frequency error brought in by the realized volatility, respectively, and they are hard to split and study separately.
As a result, we consider two different data generating processes to evaluate each of them individually.
The first data generating process produces the low-frequency data $\{y_{i,n}\}$ directly from model \eqref{autoregressive} with no high-frequency error involved, i.e. $m=\infty$, and hence the error bound reduces to $O_p(\sqrt{d_{\mathcal{M}}/T})$.
The coefficient tensor has the form of ${\cm{A}}=\mathbf{\cm{G}} \times_{1} \mathbf{U}_{1} \times_{2} \mathbf{U}_{2} \times_{3} \mathbf{U}_{3}\in \mathbb{R}^{N\times N \times 22}$, where the entries of core tensor $\cm{G}\in\mathbb{R}^{r_1\times r_2\times r_3}$ are first generated independently from the standard normal distribution and then rescaled such that $\| \cm{G} \|_{\rm F}=0.5$, and the factor matrices $\mathbf{U}_i$'s are generated by extracting the first $r_i$ left singular vectors of Gaussian random matrices while ensuring the stationarity condition in Assumption \ref{assumption 1}.
The error terms $\{\bm{\varepsilon}_{n}\}$ follow the multivariate standard normal distribution, and the Tucker rank is set to $(r_1,r_2,r_3) = (2,2,2)$ or $(3,3,3)$.
We consider three dimensions, $N=10$, 20 and 25, and five sample sizes, $T=50(4+3j+j^2)$ with $1\leq j\leq 5$, and
Algorithm \ref{alg2} is used again to search for the estimate $\widehat{\cm{A}}_{\rm MLR}$.
Figure \ref{thm2-plot} gives the estimation error $\| \widehat{\cm{A}}_{\rm MLR}-\cm{A} \|_{\rm F}$, averaged over 500 replications, and its linearity with respect to $\sqrt{d_{\mathcal{M}}/T}$ can be observed.
In fact, the estimation error will approach zero as the sample size $T$ keeps increasing.
As a result, the first part of error bounds is hence confirmed.
It can also be seen that these lines have different slopes, and this is due to the fact that the constant terms in the bound at Theorem \ref{thm2}, such as $\kappa$, $\kappa_{L}$ and $\kappa_{U}$, may vary for different dimensions of $N$. 

For the second data generating process in the second experiment, we generate an $i.i.d.$ high-frequency estimation error sequence $\{\bm\eta_n\}$ to model \eqref{estimation error}, where $\bm\eta_n$ follows multivariate normal distribution with mean zero and variance $m^{-1/2}\cdot \mathbf{I}_N$, and $\{\mathbf{y}_n\}$ are generated from the first data generating process.
We set five sample sizes $T=50(1+j)$ with $1\leq j \leq 5$, and all the other settings are the same as those for the first data generating process.
The realized volatility is calculated with the number of intraday observations being $m=T^4$. 
Note that, from Theorem \ref{thm2}, $\sqrt{T/d_{\mathcal{M}}}\| \widehat{\cm{A}}_{\rm MLR}-\cm{A} \|_{\rm F} =O_p(1)+O_p (T^{3/2+2\delta}/m^{1/4})$, and hence we plot the adjusted estimation error $\sqrt{T/d_{\mathcal{M}}}\| \widehat{\cm{A}}_{\rm MLR}-\cm{A} \|_{\rm F}$ against $T^{3/2}/m^{1/4}$ in Figure \ref{thm2-plot}. The clear linearity confirms the latter part of Theorem \ref{thm2}. Moreover, these lines have different intercepts, due to the term of $O_p(1)$.

The third experiment is to evaluate the convergence performance of Algorithm $\ref{alg2}$ in Section 4.2. A sample is generated using the data generating process in the first experiment with $(N,T)=(30,1000)$, and realized volatility is calculated with $m=78$, 390 or 780.
Note that the true Tucker ranks are $(r_1, r_2, r_3) = (2, 2, 3)$, and we consider three different running ranks, $(r_1^{\prime}, r_2^{\prime},  r_3^{\prime}) = (2, 2, 3)$, $(5, 5, 5)$ and $(10, 10, 10)$, in the algorithm.
Figure \ref{convergence_plot} gives the standardized root mean square errors $ \| \widehat{\cm{A}}_{K}-\cm{A} \|_{\rm F}/ \| \cm{A} \|_{\rm F}$ for the first 20 iterations, and it can be seen that all cases have a similar decay pattern.
In the meanwhile, lower estimation errors can be implied by more accurate pre-specified ranks and/or larger numbers of intraday observations.
We have also tried more replications and even different data generating processes, and a similar phenomenon can be observed.

\section{Real data analysis}
\label{Sect6}

This section analyzes the high-frequency trading data for the constituent stocks of S\&P 500 Index from April 1, 2009 to December 30, 2013, and the data from the first quarter of 2009 are dropped in order to reduce the effect of structural breaks.
Specifically, we consider $N=31$, 42 or 90 stocks with the largest trading volumes on January 2, 2013, and the data in 2013 are used to evaluate the out-of-sample performance. As a result, there are $T=937$ days for estimation and $M=249$ days for prediction.

The daily trading data from 9:30 am to 4:00 pm are downloaded from the Wharton Research Data Services, and we use the most commonly used five-minute returns, corresponding to the number of intraday observations $m=78$, in the literature \citep{Zhang:2011,Andersen:1998,LPS:2015}. 
Overnight returns are excluded since they usually have the jumps influenced by external factors.
Two realized measures, the RV and median RV (medRV), are used to estimate the integrated volatility, where the medRV can reduce the jump effects; see \cite{Anderson:2012}.
Both measures are first transformed into a logarithmic form and then centralized with mean zero.

The proposed MLR-HAR model with order $P=22$ is compared to two competitors: the vector HAR (VHAR) and vector HAR-index (VHARI) models.
The volatility components in both competitors are fixed to be the daily, weekly and monthly volatilities, while those in our model are chosen automatically.
We also consider the order $P=66$ for the MLR-HAR model to explore the possible longer-term volatility components.
No further dimension reduction is involved in the VHAR model, and a low rank of $r<N$ is assumed to the row space of coefficient matrices in the VHARI model. 
As a result, there are $3N^2$, $4Nr-r^2$ and $r_1r_2r_3+(N-r_1)r_1+(N-r_2)r_2+(P-r_3)r_3$ parameters in the VHAR, VHARI and MLR-HAR models, respectively, and our model has much less parameters.
The BIC at \eqref{eq:BIC} is used to search for the Tucker ranks of MLR-HAR models with tuning parameter $\lambda=10^{-4}$ and $1\leq r_1,r_2,r_3\leq 10$, leading to $(r_1,r_2,r_3)=(3,3,2)$ for both orders $P=22$ and 66 and all cases with $N=31$, 42 and 90 stocks by using the training data from 2009 to 2012.
It is also modified to select the rank of VHARI models, and we have $r=4$, 4 and 5 for $N=31$, 42 and 90 stocks, respectively.


A rolling forecast procedure is employed to evaluate the out-of-sample performance of the four models: the ending point of historical data iterates in the out-of-sample period of 2013 with the window size being fixed at $T=937$, and then one-step ahead prediction is conducted for each iteration.
The ranks for MLR-HAR and VHARI models are fixed as in the above during the prediction, and we adopt the most commonly used empirical quasi-likelihood (QLIKE) in the literature to evaluate the forecasting accuracy, 
\begin{equation*}
	\text{QLIKE}_i=\frac{1}{M}\sum_{n=1}^{M}\left(\frac{{\widetilde {y}}_{i,n}}{\widehat{y}_{i,n}}-{\rm log}\left( \frac{\widetilde{y}_{i,n}}{\widehat{y}_{i,n}}\right)-1\right) \hspace{2mm}\text{with}\hspace{2mm} 1\leq i\leq N,
\end{equation*}
where $\widehat{y}_{i,n}$ and $\widetilde {y}_{i,n}$ are the predicted and calculated realized measures for the $i$-th asset at the $n$-th trading day in 2013, respectively, and $M=249$ is the number of trading days in 2013; see \cite{Patton:2009,Patton:2011b,Patton:2011a}.
Figure \ref{figure:prediction_performance} gives the boxplots of QLIKEs from VHAR, VHARI and two MLR-HAR models with $N=31$, 42 or 90 stocks, and we have four findings below. 
(i.) The MLR-HAR models have much better forecasting accuracy than the VHAR and VHARI models, indicating the importance of exploring the low-rank structures among $N$ stocks from two directions.
(ii.) The two MLR-HAR models have a similar performance, and the model with $P=66$ even has a slightly worse performance.
This confirms the common practice in the literature to use volatility components within one month for forecasting realized measures, and the longer-term volatility component may have no significant contribution.
(iii.) The prediction error becomes larger generally when there are more stocks, and it may be due to the fact that a larger $N$ implies a more complicated model.
(iv.) Finally, the QLIKEs for medRV are much smaller than those for RV, and jumps may exist in the data.

To better understand the impact of high dimensionality on MLR-HAR models, we consider the training data with a shorter period from January 2, 2011 to December 30, 2012, and there are only $T=497$ days in total. 
The selected ranks of VHARI models are $r=3$, 4 and 5 for $N=31$, 42 and 90 stocks, respectively, while the Tucker ranks of MLR-HAR models are the same as before.
The corresponding boxplots of QLIKEs are presented in Figure \ref{figure:prediction_performance}, and it can be seen that all prediction errors become larger.
Especially, the VHAR model with $N=90$ stocks has the apparently worst performance, and this may be due to the less historical data but massive number of parameters.
As a comparison, both MLR-HAR models are not influenced too much, and we may argue that our model has the capability of handling many assets simultaneously. 

The cascade structure with three heterogeneous volatility components has been widely used in the literature, while the MLR-HAR modeling in the above insists on two data-driven components only for all cases. 
This motivates us to further study whether the three commonly used components with fixed forms in the literature are optimal in forecasting volatility.
To this end, we first refit the MLR-HAR model with order $P=22$ to the data from April 1, 2009, to December 30, 2013, and the Tucker ranks are set to $(r_1,r_2,r_3) = (3,3,3)$, i.e. the number of heterogeneous volatility components is fixed to three such that we can make a comparison between the estimated loading matrix $\widehat{\mathbf{U}}_3\in\mathbb{R}^{22\times 3}$ and $\mathbf{U}_{\mathrm{C}}\in\mathbb{R}^{22\times 3}$ at \eqref{har-rv}.
Note that, from Tucker decomposition at \eqref{tucker-decomposition}, $\widehat{\mathbf{U}}_3$ is not unique, while its column space, $\textrm{colspace}(\widehat{\mathbf{U}}_3)$, can be uniquely determined.
We next consider the discrepancy measure in \cite{pan2008modelling} to evaluate the distance between $\widehat{\mathbf{U}}_3$ and $\mathbf{U}_{\mathrm{C}}$, $\mathcal{D}\{\textrm{colspace}(\widehat{\mathbf{U}}_3),\textrm{colspace}(\mathbf{U}_{\mathrm{C}})\} = \{1-\rm{tr}(\mathbf{D}_1\mathbf{D}_1^\top\mathbf{D}_2\mathbf{D}_2^\top)/3\}^{1/2}$, where $\mathbf{D}_1$ and $\mathbf{D}_2$ are the orthonormal bases of $\textrm{colspace}(\widehat{\mathbf{U}}_3)$ and $\textrm{colspace}(\mathbf{U}_{\mathrm{C}})$, respectively, and it takes values within the range of $[0,1]$ with a larger value corresponding to more discrepancy between two spaces.
The metric has the values of $0.717$, $0.736$ and $0.652$ for the cases with $N=31$, $42$ and $90$ stocks, respectively, i.e. there exists a significant discrepancy between $\widehat{\mathbf{U}}_3$ and $\mathbf{U}_{\mathrm{C}}$.
This confirms the necessity of the proposed data-driven method to automatically select the heterogeneous volatility components in HAR models.

\section{Conclusion and discussion}
\label{Sect7}

This paper provides important extentions to the popular HAR model and solves its major drawbacks in both the high-frequency and low-frequency domains. Specifically, in the high-frequency regime, we establish the univariate and multivariate HAR-It\^o models to bridge the HAR-type models and their corresponding It\^o diffusion processes, which is the first to explore the high-frequency dynamics for HAR-type models. On the other hand, the multilinear low rank HAR model is proposed in the low-frequency regime, that exerts low rank assumptions on response, predictor and lag directions simultaneously. This low rank structure not only reduce the parameter space dramatically, enabling the model to handle the case with much more assets, but also replace the fixed heterogeneous volatility components in HAR model with a data-driven one. As a result, the flexibility is greatly enhanced, and the necessity is supported by the real data analysis. Finally, the theoretical properties of the high-dimensional HAR modeling are derived and projected gradient descent algorithm is suggested with theoretically justified linear convergence.

We next briefly discuss some possible extentions from this work. Firstly, the idea of HAR models \citep{Corsi:2009} stems from grouping different time horizons into different types of volatility components.
Along this line, we may rearrange $\widetilde{\mathbf{y}}_{n-1},\ldots, \widetilde{\mathbf{y}}_{n-P}$ at \eqref{autoregressive} into a third-order tensor such that the three modes correspond to assets, weeks and months, and the tensor technique can then be used to automatically select the weekly (medium-term) and then monthly (long-term) volatility components. 
This will lead to a fourth-order coefficient tensor $\cm{A}$, and the high-dimensional modeling tool in this paper may be adapted for it.
Secondly, the non-asymptotic properties derived in Section 4 depend on the sub-Gaussian assumption, while financial and economic data are usually heavy-tailed \citep{Shin2021}. It hence is of practical importance to discuss theoretical properties of the proposed high-dimensional modeling under a more heavy-tailed assumption. 
Finally, the realized covariance \citep{BauerandVorkink:2011} has recently attracted more and more attention, and they can be treated as matrix-valued time series  \citep{chen2021}. 
As in this paper, we may also bridge the HAR-covariance representation and It\^o diffusion processes, and then propose a low-rank HAR-covariance model to forecast realized covariances.

\bibliography{myReferences}

\newpage
\begin{figure}[H]
	\centerline{\includegraphics[width=1\textwidth]{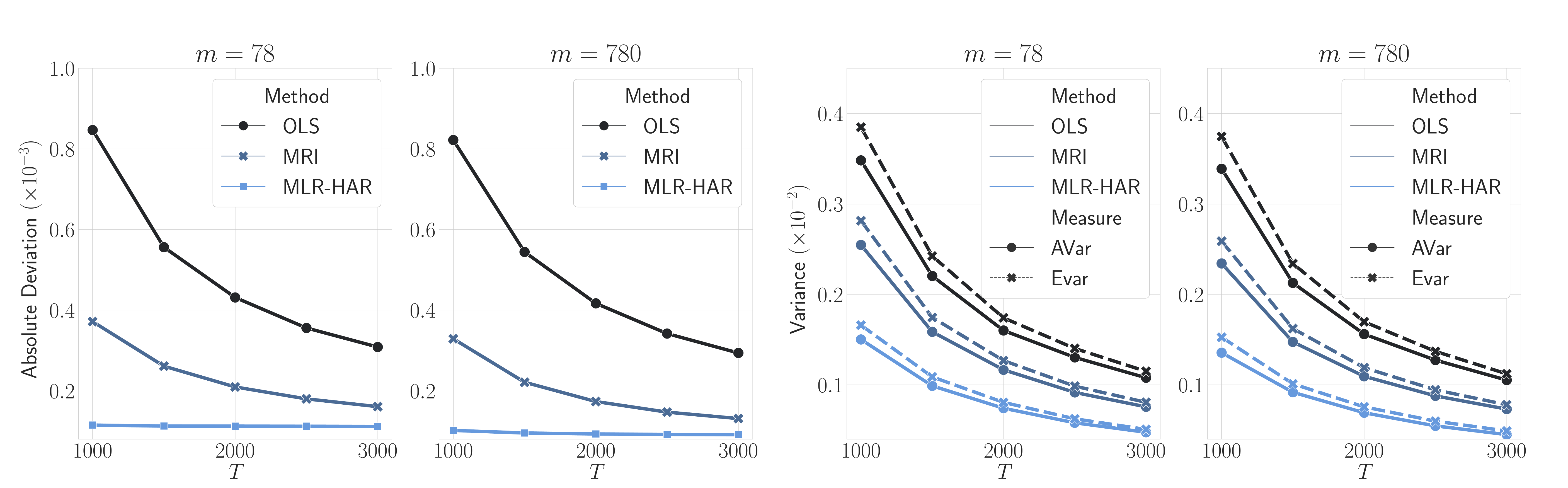}}
	\caption{Squared absolute deviations (two left panels), and empirical variance (EVar) and estimated asymptotic variance (AVar) matrices (two right panels) for the estimators $\widehat{\cm{A}}_{\rm MLR}$, $\widehat{\cm{A}}_{\rm MRI}$ and $\widehat{\cm{A}}_{\rm OLS}$ with the number of intraday observations being $m=78$ or $780$.}
	\label{thm1-plot}
\end{figure}

\begin{figure}[H]
	\centerline{\includegraphics[width=1\textwidth]{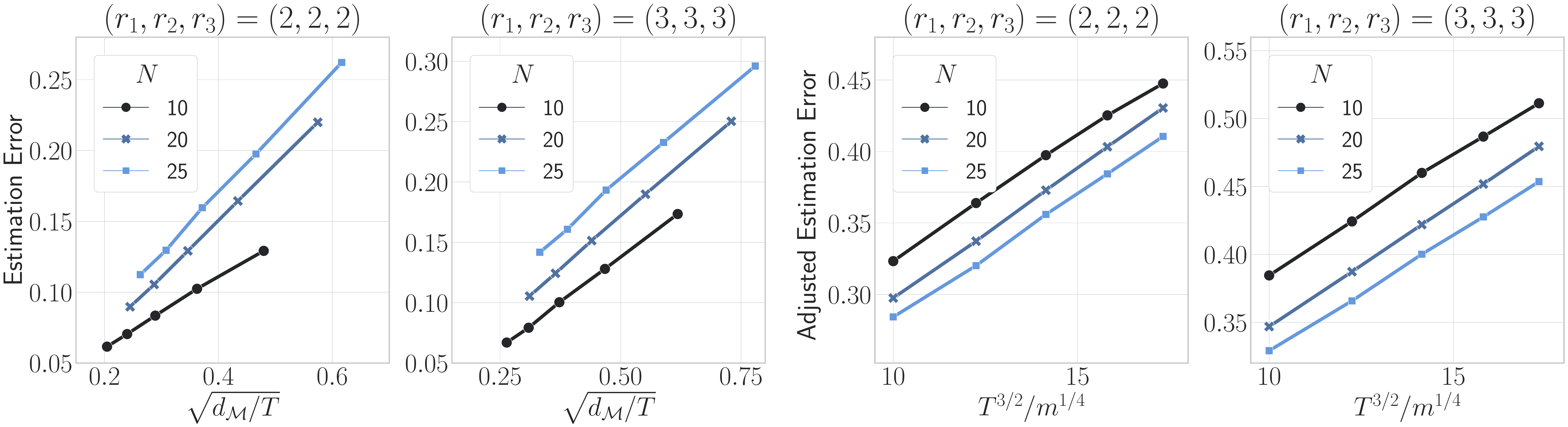}}
	\caption{Estimation errors $\| \widehat{\mathcal{A}}_{\rm MLR}-\mathcal{A} \|_{\rm F}$ for the first data generating process with respect to $\sqrt{d_{\mathcal{M}}/T}$ (two left panels), and adjusted estimation errors $\sqrt{T/d_{\mathcal{M}}}\| \widehat{\cm{A}}_{\rm MLR}-\cm{A} \|_{\rm F}$ from the second data generating process with respect to $T^{3/2}/m^{1/4}$ (two right panels).}
	\label{thm2-plot}
\end{figure}

\begin{figure}[H]
	\centerline{\includegraphics[width=1.0\textwidth]{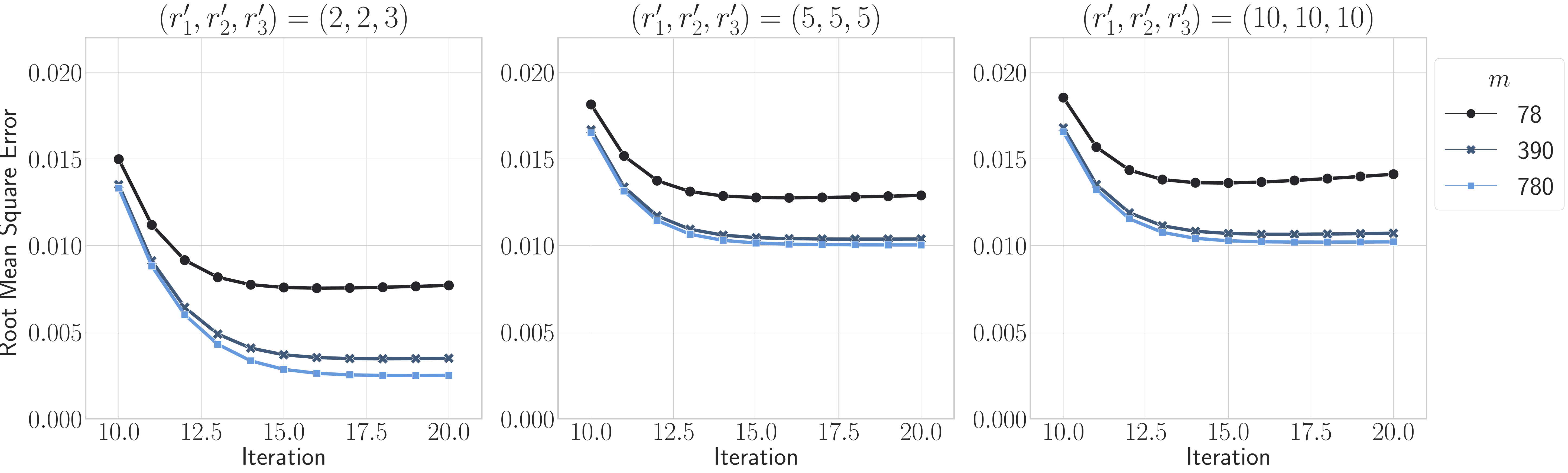}}
	\caption{Standardized root mean square errors $ \| \widehat{\cm{A}}_{\rm MLR}-{\cm{A}} \|_{\rm F}/\|{\cm{A}} \|_{\rm F}$ for the first 20 iterations with running ranks $(r_1^{\prime}, r_2^{\prime},  r_3^{\prime}) = (2, 2, 3)$, $ (5, 5, 5)$ and $ (10, 10, 10)$.}
	\label{convergence_plot}
\end{figure}

\begin{figure}[H]
	\centerline{\includegraphics[width=1.25\textwidth]{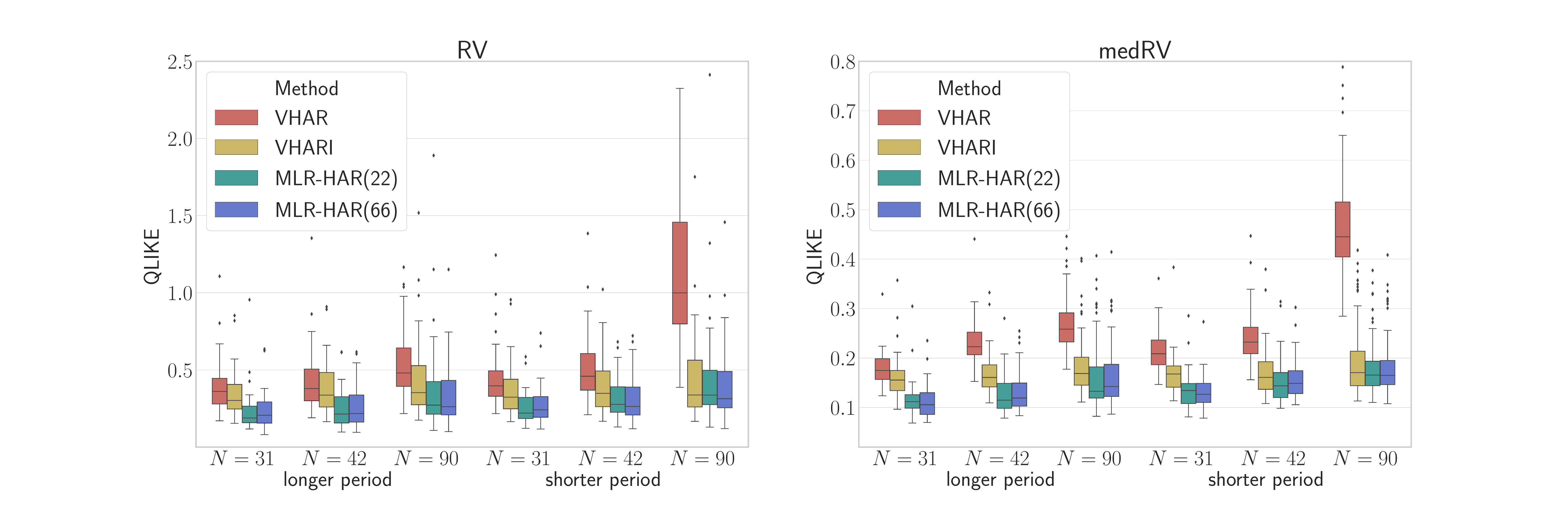}}
	\caption{Boxplots of QLIKEs from the VHAR model, VHARI model and MLR-HAR model with orders $P=22$ and 66 for $N=31$, 42 and 90 stocks with a longer period (2009.4--2013.12) and a shorter period (2011.1--2013.12) of data. Two realized measures are used: RV (left panel) and medRV (right panel).}
	\label{figure:prediction_performance}
\end{figure}

\newpage
\renewcommand{\thesection}{A}
\setcounter{subsection}{0} 
\renewcommand{\theequation}{A.\arabic{equation}}
\setcounter{equation}{0} 

\section*{Appendix A: technical details}
This appendix gives the technical proofs of theoretical results in Sections \ref{Sec2}-\ref{Sect4}, respectively.
\subsection{Proofs of Propositions \ref{prop:1} and \ref{prop:2} in Section \ref{Sec2}}
\begin{proof}[Proof of Proposition \ref{prop:1}]
	We define
	\begin{equation*}
		R(k)= \displaystyle \int_{n-1}^{n}\frac{(n-t)^{k}}{k!}\sigma_{t}^{2}dt =\frac{\sigma_{n-1}^{2}}{(k+1)!}+\displaystyle \int_{n-1}^{n}\frac{(n-t)^{k+1}}{(k+1)!}d\sigma_{t}^{2} 
	\end{equation*}
    for all $k\geq 0$ and $n\geq P+1$.
    From Definition \ref{def:har-ito} and It\^o lemma, it holds that
    \[
    d\sigma_{t}^{2}=[\omega+v(n-t)-\sigma_{n-1}^2- vZ_t^2+\alpha_1 \sigma_t^2]dt+2 \upsilon(n-t)Z_td Z_t+\beta L_t^2 d\Lambda_t
    \]
    for $n-1<t\leq n$, and
    \begin{equation*}
    		v \displaystyle\int_{n-1}^{n}\frac{(n-t)^{k+1}}{(k+1)!}Z_{t}^2dt
    		=\frac{v}{(k+3)!}+2v\displaystyle\int_{n-1}^{n}\frac{(n-t)^{k+2}}{(k+2)!}Z_{t}dZ_{t}.
    \end{equation*}
    As a result, for any $k\geq 0$,
    \begin{align}
    	\begin{split}\label{eq:iteration}
    	R(k) &=\frac{w}{(k+2)!}+\frac{(k+2)v}{(k+3)!}
    	+\frac{(k+1) \sigma_{n-1}^{2}}{(k+2)!} +2 v\displaystyle \int_{n-1}^{n}\frac{(n-t)^{k+2}}{(k+1)!}Z_{t}dZ_{t} \\ 
    	&\hspace{5mm}
    	- v\displaystyle\int_{n-1}^{n}\frac{(n-t)^{k+1}}{(k+1)!}Z_{t}^2dt +\beta \displaystyle \int_{n-1}^{n} \frac{(n-t)^{k+1}}{(k+1) !} L_{t}^2 d \Lambda_{t}+\alpha_1 R(k+1)\\
    	&:=R_1(k) +R_2(k) +R_3(k) +R_4(k)+\alpha_1 R(k+1),
    	\end{split}
    \end{align}
    where
    \begin{align*}
    	&R_1(k)=2 v
    \displaystyle\int_{n-1}^{n}\frac{(k+1)(n-t)^{k+2}}{(k+2)!}Z_{t}dZ_{t}, \hspace{3mm}R_2(k)=\beta \displaystyle\int_{n-1}^{n} \frac{(n-t)^{k+1}}{(k+1) !} L_{t}^2 d \Lambda_{t},\\
    	&R_3(k) =\frac{(k+3)w+(k+1)v}{(k+3)!}, \hspace{8mm}\text{and}\hspace{8mm} 
    	R_4(k)=\frac{ (k+1)\sigma_{n-1}^{2}}{(k+2)!}.
    \end{align*}

Since $R(k)\leq  \int_{n-1}^{n}\sigma_{t}^{2}dt$ and $0< \alpha_1<1$, it holds that $\alpha_1^{k}R(k)\rightarrow 0$ almost surely as $k\rightarrow\infty$ and, by iterating the formula at \eqref{eq:iteration}, we have
\begin{equation}\label{eq:r0}
	R(0) =\displaystyle \int_{n-1}^{n}\sigma_{t}^{2}dt =\sum_{k=0}^{\infty}\alpha_{1}^kR_1(k) +\sum_{k=0}^{\infty}\alpha_{1}^kR_2(k) +\sum_{k=0}^{\infty}\alpha_{1}^kR_3(k)+\sum_{k=0}^{\infty}\alpha_{1}^kR_4(k)
\end{equation}
with probability one.
By some algebra, it can be verified that
\begin{align}
	\begin{split}\label{eq:r0-1}
	\sum_{k=0}^{\infty}\alpha_{1}^kR_1(k) &=\sum_{k=0}^{\infty}2v\alpha_1^{-1}\displaystyle \int_{n-1}^{n}\frac{[\alpha_1(n-t)]^{k+1}(n-t)}{(k+1)!}Z_{t}dZ_{t}\\
	&\hspace{5mm}-\sum_{k=0}^{\infty}2v\alpha_1^{-2}\displaystyle \int_{n-1}^{n}\frac{[\alpha_1(n-t)]^{k+2}}{(k+2)!}Z_{t}dZ_{t}\\
	&=2 v \alpha_1^{-2} \displaystyle \int_{n-1}^{n}\left\{\alpha_1\left(n-t-\alpha_1^{-1}\right) e^{\alpha_1(n-t)}+1\right\} Z_{t} d Z_{t}:=\varepsilon_n^c,\\
	\sum_{k=0}^{\infty}\alpha_{1}^kR_2(k)&= \sum_{k=0}^{\infty}\beta\alpha_1^{-1}\int_{n-1}^{n}\frac{[\alpha_1(n-t)]^{k+1}}{(k+1)!}L_{t}^2 d \Lambda_{t} =\beta \alpha_{1}^{-1}\displaystyle \int_{n-1}^{n}\left(e^{\alpha_{1}(n-t)}-1\right) L_{t}^2 d \Lambda_{t}\\
	&=\varrho_{2}\beta\omega_L\lambda+\varepsilon_n^J,\\ 
	\sum_{k=0}^{\infty}\alpha_{1}^kR_3(k)&=\varrho_{2}\omega+(\varrho_{2}-2\varrho_{3})v \hspace{3mm}\text{and}\hspace{3mm}
	\sum_{k=0}^{\infty}\alpha_{1}^kR_4(k)=(\varrho_{1}-\varrho_{2})\sigma_{n-1}^{2},
	\end{split}
\end{align}
where $\varrho_{1}=\alpha_1^{-1}(e^{\alpha_1}-1)$, $\varrho_{2}=\alpha_1^{-2}(e^{\alpha_1}-1-\alpha_1)$, $\varrho_{3}=\alpha_1^{-3}(e^{\alpha_1}-1-\alpha_1-{\alpha_1^{2}}/{2})$, and
\[
\varepsilon_n^J=\beta \alpha_{1}^{-1}\left\{\displaystyle\int_{n-1}^{n}\left(e^{\alpha_{1}(n-t)}-1\right) [M_{t} d \Lambda_{t}+\omega_{L} \left(d \Lambda_{t}-\lambda d t\right)] \right\}.
\]
Moreover, the instantaneous volatility $\sigma_{t}^{2}$ at integer time point $n$ has the form of
$
	\sigma_{n}^{2}=\omega
	+\sum_{l=1}^{P}\alpha_l y_{n-l+1}+\beta J_n,
$
where
\begin{equation*}
	y_n=\displaystyle \int_{n-1}^{n}\sigma_{s}^{2}ds \hspace{5mm}\text{and}\hspace{5mm}
	J_n=\displaystyle \int_{n-1}^{n} L_{s}^{2} d \Lambda_{s}
\end{equation*}
are the integrated volatility and the jump variation, respectively. 
This, together with \eqref{eq:r0} and \eqref{eq:r0-1}, implies the equation at \eqref{har-model}.

Note that random variables $\varepsilon_{n}^{c}$ and $\varepsilon_{n}^{J}$ are both independent of the $\sigma$-field $\mathcal{F}_{n-1}$, and hence $\{\varepsilon_{n}^{c}\}$ and $\{\varepsilon_{n}^{J}\}$ are two $i.i.d.$ sequences. Moreover, it is readily to verify that both $\{\varepsilon_{n}^{c}\}$ and $\{\varepsilon_{n}^{J}\}$ are martingale difference sequences \citep{song2020realized}, i.e. $E(\varepsilon_{n}^{c})=E(\varepsilon_{n}^{J})=0$. We next show that both var$(\varepsilon_{n}^{c})$ and var$(\varepsilon_{n}^{J})$ are finite.

Let $\varrho_{4}(t)=\alpha_1\left(n-t-\alpha_1^{-1}\right) e^{\alpha_1(n-t)}+1$, and then $|\varrho_{4}(t)|<2e+1$ for all $n-1<t\leq n$. 
Moreover, $Z_{t}=\int_{[t]}^{t}dW_{t}=W_t-W_{n-1}$, $d Z_t=d W_t$ and $(d Z_t)^2=d t$. As a result, by It\^o isometry,
 \begin{align}
	\begin{split}\label{eq:error-c}
		E({\varepsilon_n^{c}})^2=&E\left(2 v \alpha_1^{-2} \displaystyle \int_{n-1}^{n}\varrho_{4}(t) Z_{t} d Z_{t}\right)^2
		=4 v^2 \alpha_1^{-4}E\displaystyle \int_{n-1}^{n}[\varrho_{4}(t)]^2Z_{t}^2(d Z_{t})^2\\
		< &4 v^2 \alpha_1^{-4}(2e+1)^2E\displaystyle \int_{n-1}^{n}Z_{t}^2dt.
	\end{split}
\end{align}
Due to It\^o lemma and definition of $Z_t$, it holds that $dZ_t^2=2Z_tdZ_t+dt$, $Z_{n-1}=0$, and $\{\int_{n-1}^{t}Z_sdZ_s\}$ is a martingale difference sequence. Thus, $Z_t^2=Z_t^2-Z_{n-1}^2=\int_{n-1}^{t}(2Z_sdZ_s+ds)$, and
\begin{equation*}
	E\int_{n-1}^{n}Z_{t}^2dt=2 E\left(\displaystyle \int_{n-1}^{n}\displaystyle \int_{n-1}^{t}Z_sdZ_sdt\right) +0.5 =0.5,
\end{equation*}
which, together with \eqref{eq:error-c}, implies that var$(\varepsilon_{n}^{c}) < 2v^2 \alpha_1^{-4}(2e+1)^2<\infty$.

We next consider var$(\varepsilon_n^{J})$. Let $\varrho_5(t)=e^{\alpha_{1}(n-t)}-1$, and then $|\varrho_5(t)|<e+1$ for all $n-1<t\leq n$.
From \eqref{eq:r0-1} and the fact that $E(\varepsilon_n^{J})=0$, it holds that
\begin{equation}\label{eq:error-j}
	\mathrm{var}(\varepsilon_n^{J})=\beta^2 \alpha_{1}^{-2}E\left(\displaystyle \int_{n-1}^{n}\varrho_5(t)L_t^2 d \Lambda_{t}\right)^2 -(\varrho_{2}\beta\omega_L\lambda)^2.
\end{equation}
Note that $\{L_t\}$ are independent of the Poisson process $\{\Lambda_{t}\}$ and, by the partitioning method, we have
\begin{align*}
	\begin{split}
		E\left(\int_{n-1}^{n}\varrho_5(t)L_t^2 d \Lambda_{t}\right)^2 &=E\left(\sum_{n-1<t\leq n}\varrho_5(t)L_{t}^2 \Delta\Lambda_{t}\right)^2
	<  (e+1)^2 E\left(\sum_{n-1<t\leq n}L_{t}^2 \Delta\Lambda_{t}\right)^2\\
		&	=(e+1)^2 E(L_t^4) E\left( \sum_{n-1<t\leq n}\Delta\Lambda_{t}^2\right)=	(e+1)^2 (\lambda^2+\lambda)E(L_t^4),
	\end{split}
\end{align*}
which, together with \eqref{eq:error-j} and the fact that $E(L_t^4)<\infty$, implies that var$(\varepsilon_n^{J})<\infty$.

\end{proof}

\begin{proof}[Proof of Proposition \ref{prop:2}]
	We define
	\begin{align*}
		R_i(k)= & \int_{n-1}^{n}\frac{(n-t)^{k}}{k!}\sigma_{i,t}^{2}dt=\frac{\sigma_{i,n-1}^{2}}{(k+1)!}+\displaystyle \int_{n-1}^{n}\frac{(n-t)^{k+1}}{(k+1)!}d\sigma_{i,t}^{2},\hspace{3mm}1\leq i\leq N,
	\end{align*}
	for all $k\geq 0$ and $n\geq P+1$.
	From Definition \ref{def:mhar-ito} and It\^o lemma, we have
	\begin{align*}
		d\sigma_{i,t}^{2}=&\left(\omega_{i}+v_i(n-t)-\sigma_{i,n-1}^2- \upsilon_i Z_{i,t}^2+\sum_{j=1}^{N}\alpha_{i,j}^{(1)} \sigma_{j,t}^2\right)dt
		+2 \upsilon_i(n-t)Z_{i,t}d Z_{i,t}\\
		&+\beta_i L_{i,t}^2 d\Lambda_{i,t}
	\end{align*}
	for $n-1<t\leq n$, and 
	\begin{equation*}
		\upsilon_i\int_{n-1}^{n}\frac{(n-t)^{k+1}}{(k+1)!}Z_{i,t}^2dt
		=\frac{v_i}{(k+3)!}+2\upsilon_i\int_{n-1}^{n}\frac{(n-t)^{k+2}}{(k+2)!}Z_{i,t}dZ_{i,t}.
	\end{equation*}
	As a result, for any $k\geq 0$, let
	\begin{align*}
		&R_i^{(1)}(k)=2 v_i
		\displaystyle \int_{n-1}^{n}\frac{(k+1)(n-t)^{k+2}}{(k+2)!}Z_{i,t}dZ_{i,t}, \hspace{3mm}	R_i^{(2)}(k)=\beta_i \displaystyle \int_{n-1}^{n} \frac{(n-t)^{k+1}}{(k+1) !}L_{i,t}^2 d \Lambda_{i,t}\\
		&R_i^{(3)}(k)=\frac{(k+3)w_{i}+(k+1)v_i}{(k+3)!}, \hspace{10mm}\text{and}\hspace{8mm}	R_i^{(4)}(k)=\frac{ (k+1)\sigma_{i,n-1}^{2}}{(k+2)!},
	\end{align*}
	and then it holds that
	\begin{align*}
		R_i(k)=&\frac{\omega_{i}}{(k+2)!}+\frac{(k+2)v_i}{(k+3)!}+\frac{(k+1)\sigma_{i,n-1}^{2}}{(k+2)!}\\
		&+2 v_i \displaystyle \int_{n-1}^{n}\frac{(n-t)^{k+2}}{(k+1)!}Z_{i,t}dZ_{i,t}- v_i
		\displaystyle \int_{n-1}^{n}\frac{(n-t)^{k+1}}{(k+1)!}Z_{i,t}^2dt\\
		&+\beta_i \displaystyle \int_{n-1}^{n} \frac{(n-t)^{k+1}}{(k+1) !} L_{i,t}^2 d \Lambda_{i,t}+\sum_{j=1}^{N}\alpha_{i,j}^{(1)} R_j(k+1)\\
		=&R_i^{(1)}(k)+R_i^{(2)}(k)+R_i^{(3)}(k)+R_i^{(4)}(k)+\sum_{j=1}^{N}\alpha_{i,j}^{(1)} R_j(k+1),
	\end{align*}
	which can be rewritten into
	\begin{align}
		\label{eq:iteration_vector}
		\mathbf{R}(k)=\mathbf{R}^{(1)}(k)+\mathbf{R}^{(2)}(k)+\mathbf{R}^{(3)}(k)+\mathbf{R}^{(4)}(k)+\bm{\alpha}^{(1)} \mathbf{R}(k+1),
	\end{align}
	where  $\mathbf{R}(k)=(R_1(k), R_2(k), \cdots, R_N(k))^{\top}$ and $\mathbf{R}^{(j)}(k)=(R_1^{(j)}(k), R_2^{(j)}(k), \cdots, R_N^{(j)}(k))^{\top}$ with $1\leq j\leq 4$.
	
	Due to the facts that $R_i(k)\leq  \int_{n-1}^{n}\sigma_{i,t}^{2}dt$ and the spectral radius of $\bm{\alpha}^{(1)}$ is less than one,  
	it holds that ${{\bm{\alpha}}^{(1)}}^{k}\mathbf{R}(k)\rightarrow 0$ almost surely as $k\rightarrow\infty$. Moreover, by iterating the formula at \eqref{eq:iteration_vector}, we have
	\begin{align}
		\begin{split}\label{eq:r0-vector}
			\mathbf R(0) =\displaystyle \int_{n-1}^{n}{\bm{\sigma}_{t}^{2}}dt =\sum_{k=0}^{\infty}{{\bm{\alpha}}^{(1)}}^k\mathbf {R}^{(1)}(k) &+\sum_{k=0}^{\infty}{{\bm{\alpha}}^{(1)}}^k\mathbf{R}^{(2)}(k) +\sum_{k=0}^{\infty}{{\bm{\alpha}}^{(1)}}^k\mathbf{R}^{(3)}(k)\\ &+\sum_{k=0}^{\infty}{{\bm {\alpha}}^{(1)}}^k\mathbf{R}^{(4)}(k)
		\end{split}
	\end{align}
	with probability one. Similar to Proposition \ref{prop:1}, it can be verified that
	\begin{align}
		\begin{split}\label{eq:vr0-1}
			\sum_{k=0}^{\infty}{{\bm{\alpha}}^{(1)}}^k \mathbf{R}_1(k) &=\sum_{k=0}^{\infty}{{\bm {\alpha}}^{(1)}}^k\left[2v_i \displaystyle \int_{n-1}^{n}\frac{(k+1)(n-t)^{k+2}}{(k+2)!}Z_{i,t}dZ_{i,t}\right]^{\top}_{i=1,\cdots,N}\\
			&=2 \mathbf{V}{{ \bm{\alpha}}^{(1)}}^{-2} \displaystyle \int_{n-1}^{n}\left\{\left((n-t){\bm{\alpha}}^{(1)}-\mathbf {I}_N\right)\mathbf{e}^{(n-t){ \bm{\alpha}}^{(1)}}+\mathbf{I}_N\right\} \mathbf{Z}_{t} d \mathbf{Z}_{t}\\
			&:=\bm{\varepsilon}_n^c,\\
			\sum_{k=0}^{\infty}{{\bm {\alpha}}^{(1)}}^k \mathbf{R}_2(k)&= \sum_{k=0}^{\infty}{{\bm{\alpha}}^{(1)}}^k \displaystyle \int_{n-1}^{n} \left[\beta_i\frac{(n-t)^{k+1}}{(k+1) !}L_{i,t}^2 d \Lambda_{i,t}\right]^{\top}_{i=1,\cdots,N}\\
			&=\bm{\varrho}_{2}\bm{\beta} \bm{\omega}_L \bm{\lambda}+\bm{\varepsilon}_n^J,\\ 
			\sum_{k=0}^{\infty}{{\bm{\alpha}}^{(1)}}^k \mathbf{R}_3(k)&=\bm {\varrho}_{2}\bm{\omega}+(\bm{\varrho}_{2}-2\bm{\varrho}_{3})\mathbf{V} \mathbf{1}_N \hspace{3mm}\text{and}\\
			\sum_{k=0}^{\infty}{{\boldsymbol\alpha}^{(1)}}^k \mathbf{R}_4(k)&=(\bm {\varrho}_{1}-\bm{\varrho}_{2})\bm{\sigma}_{n-1}^{2},
		\end{split}
	\end{align}
	where 
	\begin{align*}
		\bm{\varepsilon}_{n}^{J}=\bm \beta {{ \bm\alpha}^{(1)}}^{-1}\left\{\displaystyle \int_{n-1}^{n}\left(\mathbf{e}^{{{ \bm {\alpha}}^{(1)}}(n-t)}-\mathbf{I}_N\right) [ \mathbf{M}_{t} d \bm{\Lambda}_{t}+\bm{\omega}_{L}(d \bm{\Lambda}_{t}-\bm{\lambda} d t)]\right\}.
	\end{align*}
	Note that $\bm{\sigma}_{t}^{2}$ at integer time point $n$ has the form of
	\begin{equation*}
		\bm{\sigma}_{n}^{2}=\bm{\omega}
		+\sum\limits_{l=1}^{P}\bm{\alpha}^{(l)} \mathbf{y}_{n-l+1}+	\bm{\beta} \mathbf{J}_n,
	\end{equation*}
	and we then have the autoregressive form at \eqref{vector_integrate}.
	By a method similar to Proposition \ref{prop:1}, we further show that both $\{\bm{\varepsilon}_{n}^{c}\}$ and $\{\bm{\varepsilon}_{n}^{J}\}$ are $i.i.d.$ sequences with mean zero and finite variance matrices.
\end{proof}

\subsection{Proofs of Theorem \ref{theorem 1} and Corollary \ref{corollary 1} in Section \ref{Sec3}}

\begin{proof}[Proof of Theorem \ref{theorem 1}]
	The proof of this theorem consists of two steps: we first derive the asymptotic normality of the OLS estimator without low-rank constraint, and then adapt the proving technique in \cite{shapiro:1986} to establish the asymptotic normality of $\widehat{\cm{A}}_{\rm{MLR}}$.
	
	The first step considers the OLS estimator of model \eqref{MHAR_1} without any low-rank constraint, $\widehat{\cm{A}}_{\rm{OLS}}= \mathop{\argmin} T^{-1}\sum_{n=P+1}^{T}|| \widetilde{\mathbf{y}}_n- \cm{A}_{(1)}\widetilde{\mathbf{x}}_n||_2^2$, and it can be verified that
	\begin{equation}\label{proof-thm1-eq1}
	(\widehat{\cm{A}}_{\rm{OLS}})_{(1)}=\sum_{n=P+1}^T\widetilde{\mathbf{y}}_n\widetilde{\mathbf{x}}_n^{\top}\left(\sum_{n=P+1}^T\widetilde{\mathbf{x}}_n\widetilde{\mathbf{x}}_n^{\top}\right)^{-1} =\cm{A}_{(1)}+\sum_{n=P+1}^T\bm{\epsilon}_{n}\widetilde{\mathbf{x}}_n^{\top}\left(\sum_{n=P+1}^T\widetilde{\mathbf{x}}_n\widetilde{\mathbf{x}}_n^{\top}\right)^{-1},
	\end{equation} 
	where $\widetilde{\mathbf{x}}_{n}=(\widetilde{\mathbf{y}}_{n-1}^{\top},\ldots,\widetilde{\mathbf{y}}_{n-P}^{\top})^\top$, $\bm{\epsilon}_{n}=\bm{\eta}_n-\sum_{j=1}^{P}\mathbf{A}_j\bm {\eta}_{n-j}+\bm {\varepsilon}_n$, and $\bm{\eta}_n=\widetilde{\mathbf{y}}_n-{\mathbf{y}}_n$.
	Moreover, from \eqref{condition_estimation error}, we have
	\begin{equation}
	\mathbb{P}\left\{\max _{1 \leq n \leq T} \max _{1 \leq i \leq N}\left|{\eta}_{i,n}\right| \geq \frac{(NT)^{1/2+\delta/4}}{m^{1 / 4}}\right\} \leq \frac{C_1}{(NT)^{\delta/2}},
	\end{equation}
	where $C_1$ is a positive constant, and $\delta>0$ is given in the theorem.
	Let $\eta_{\mathrm{max}}=\max _{1 \leq n \leq T} \max _{1 \leq i \leq N}\left|{\eta}_{i,n}\right|$ for simplicity, and then $\eta_{\mathrm{max}}=O_p(T^{1/2+\delta/4}m^{-1 / 4})$. As a result, $T^{1/2}\eta_{\mathrm{max}} =O_p(T^{1+\delta/4}m^{-1 / 4}) =o_p(1)$ since $T^{4+\delta}m^{-1} \rightarrow  0$.
	
	We first handle the term of $T^{-1}\sum_{n=P+1}^T\widetilde{\mathbf{x}}_n\widetilde{\mathbf{x}}_n^{\top}$ at the right hand side of \eqref{proof-thm1-eq1}, and it holds that
	\begin{align}
	\begin{split}\label{proof-thm1-eq2}
	\frac{1}{T}\sum_{n=P+1}^T\widetilde{\mathbf{x}}_n\widetilde{\mathbf{x}}_n^{\top}=& \frac{1}{T}\sum_{n=P+1}^T{\mathbf{x}}_n{\mathbf{x}}_n^{\top} +\frac{1}{T}\sum_{n=P+1}^T(\widetilde{\mathbf{x}}_n-{\mathbf{x}}_n)(\widetilde{\mathbf{x}}_n-{\mathbf{x}}_n)^{\top}\\ &+\frac{1}{T}\sum_{n=P+1}^T{\mathbf{x}}_n(\widetilde{\mathbf{x}}_n-{\mathbf{x}}_n)^{\top} +\frac{1}{T}\sum_{n=P+1}^T(\widetilde{\mathbf{x}}_n-{\mathbf{x}}_n){\mathbf{x}}_n^{\top},
	\end{split}
	\end{align}
	where ${\mathbf{x}}_{n}=({\mathbf{y}}_{n-1}^{\top},\ldots,{\mathbf{y}}_{n-P}^{\top})^\top$.
	From Assumption \ref{assumption 1}, the fact that $\mathbb{E}\|\bm{\varepsilon}_{n}\|^4<\infty$, and the ergodic theorem, it holds that
	\begin{equation}\label{proof-thm1-eq3}
	\frac{1}{T}\sum_{n=P+1}^T{\mathbf{x}}_n{\mathbf{x}}_n^{\top} =\bm{\Gamma}^{*}+o_p(1) \hspace{5mm}\text{and}\hspace{5mm} 
	\frac{1}{T}\sum_{n=P+1}^T|{\mathbf{x}}_n| =\mathbb{E}(|{\mathbf{x}}_n|)+o_p(1),
	\end{equation}
	where $|\cdot|$ takes the absolute value in the element-wise sense.
	Moreover, it can be verified that $|\widetilde{\mathbf{x}}_{n}-{\mathbf{x}}_{n}|\leq \eta_{\mathrm{max}}\mathbf{1}_{NP}$, where $\mathbf{1}_{NP}$ is an $(NP)$-dimensional vector of ones. Thus,
	\begin{align*}
	|\frac{1}{T}\sum_{n=P+1}^T(\widetilde{\mathbf{x}}_n-{\mathbf{x}}_n)(\widetilde{\mathbf{x}}_n-{\mathbf{x}}_n)^{\top}| &\leq \eta_{\mathrm{max}}^2\mathbf{1}_{NP}\mathbf{1}_{NP}^\top \hspace{5mm}\text{and}\\
	|\frac{1}{T}\sum_{n=P+1}^T{\mathbf{x}}_n(\widetilde{\mathbf{x}}_n-{\mathbf{x}}_n)^{\top}|&\leq \eta_{\mathrm{max}}\left(\frac{1}{T}\sum_{n=P+1}^T|{\mathbf{x}}_n|\right)\bm{1}_{NP}^\top,
	\end{align*}
	which, together with \eqref{proof-thm1-eq2}, \eqref{proof-thm1-eq3} and the fact that $\eta_{\mathrm{max}}=o_p(1)$, implies that
	\begin{equation}\label{proof-thm1-eq4}
	\frac{1}{T}\sum_{n=P+1}^T\widetilde{\mathbf{x}}_n\widetilde{\mathbf{x}}_n^{\top}=\bm{\Gamma}^{*}+o_p(1).
	\end{equation}
	
	We next handle the term of $T^{-1/2}\sum_{n=P+1}^T\bm{\epsilon}_{n}\widetilde{\mathbf{x}}_n^{\top}$ at the right hand side of \eqref{proof-thm1-eq1}, and it holds that
	\begin{align}
	\begin{split}\label{proof-thm1-eq5}
	\frac{1}{\sqrt{T}}\sum_{n=P+1}^T\bm{\epsilon}_{n}\widetilde{\mathbf{x}}_n^{\top}=& \frac{1}{\sqrt{T}}\sum_{n=P+1}^T\bm{\varepsilon}_{n}{\mathbf{x}}_n^{\top} +\frac{1}{\sqrt{T}}\sum_{n=P+1}^T(\bm{\epsilon}_{n}-\bm{\varepsilon}_{n})(\widetilde{\mathbf{x}}_n-{\mathbf{x}}_n)^{\top}\\ &+\frac{1}{\sqrt{T}}\sum_{n=P+1}^T\bm{\varepsilon}_{n}(\widetilde{\mathbf{x}}_n-{\mathbf{x}}_n)^{\top} +\frac{1}{\sqrt{T}}\sum_{n=P+1}^T(\bm{\epsilon}_{n}-\bm{\varepsilon}_{n}){\mathbf{x}}_n^{\top}.
	\end{split}
	\end{align}
	From Assumption \ref{assumption 1}, there exists a positive constant $C$ such that $(\mathbf{I}_N+\sum_{j=1}^{P}|\mathbf{A}_j|)\mathbf{1}_N\leq C\cdot\mathbf{1}_N$ and then $|\bm{\epsilon}_{n}-\bm{\varepsilon}_{n}|=|\bm{\eta}_n-\sum_{j=1}^{P}\mathbf{A}_j\bm {\eta}_{n-j}|\leq C\eta_{\mathrm{max}}\mathbf{1}_N$.
	As a result,
	\begin{align*}
	\left|\frac{1}{\sqrt{T}}\sum_{n=P+1}^T(\bm{\epsilon}_{n}-\bm{\varepsilon}_{n})(\widetilde{\mathbf{x}}_n-{\mathbf{x}}_n)^{\top}\right|&\leq T^{1/2}\eta_{\mathrm{max}}^2\mathbf{1}_{NP}\mathbf{1}_{NP}^\top,\\
	\left|\frac{1}{\sqrt{T}}\sum_{n=P+1}^T\bm{\varepsilon}_{n}(\widetilde{\mathbf{x}}_n-{\mathbf{x}}_n)^{\top}\right|&\leq T^{1/2}\eta_{\mathrm{max}}\left(\frac{1}{T}\sum_{n=P+1}^T|\bm{\varepsilon}_{n}|\right)\bm{1}_{NP}^\top,\hspace{2mm}\text{and}\\
	\left|\frac{1}{\sqrt{T}}\sum_{n=P+1}^T(\bm{\epsilon}_{n}-\bm{\varepsilon}_{n}){\mathbf{x}}_n^{\top}\right|&\leq CT^{1/2}\eta_{\mathrm{max}}\bm{1}_{NP}\left(\frac{1}{T}\sum_{n=P+1}^T|{\mathbf{x}}_n^\top|\right).
	\end{align*}
	They, together with \eqref{proof-thm1-eq5}, \eqref{proof-thm1-eq3} and the facts that $T^{-1}\sum_{n=P+1}^T|\bm{\varepsilon}_{n}|=\mathbb{E}(|\bm{\varepsilon}_{n}|)+o_p(1)$ and $T^{1/2}\eta_{\mathrm{max}}=o_p(1)$, lead to
	\begin{equation}\label{proof-thm1-eq6}
	\frac{1}{\sqrt{T}}\sum_{n=P+1}^T\bm{\epsilon}_{n}\widetilde{\mathbf{x}}_n^{\top}= \frac{1}{\sqrt{T}}\sum_{n=P+1}^T\bm{\varepsilon}_{n}{\mathbf{x}}_n^{\top}+o_p(1).
	\end{equation}
	Combining \eqref{proof-thm1-eq1}, \eqref{proof-thm1-eq4}, \eqref{proof-thm1-eq6} and the central limit theorem for martingale difference sequences, we can obtain that
	\[
	\sqrt{T}\{{\rm{vec}}((\widehat{\cm{A}}_{\rm{OLS}})_{(1)})-\rm{vec}(\mathbf{\cm {A}}_{(1)})\}  \rightarrow  N(\mathbf{0}, {\bm{\Sigma}}_{\rm{OLS}})
	\]
	in distribution as $m\rightarrow\infty$ and $T\rightarrow\infty$, where ${\bm{\Sigma}}_{\rm{OLS}}=\bm{\Gamma}^{*-1}\otimes\bm{\Sigma}_{\varepsilon}$.
	
	The second step mainly follows Proposition 4.1 in \cite{shapiro:1986} for overparameterized models. 
	Let $\bm{\varphi}=\left(\rm vec(\mathbf{\cm{G}_{(1)})}^{\top},  \rm vec(\mathbf{U}_1)^{\top},  \rm vec(\mathbf{U}_2)^{\top},   \rm vec(\mathbf{U}_3)^{\top}\right)^{\top}$ be the component parameters in Tucker decomposition, and $\bm{h}(\bm{\varphi})=\rm vec(\mathbf{\cm{A}}_{(1)})=\rm vec(\mathbf{U}_1\mathbf{\cm{G}}_{(1)}(\mathbf{U}_{3} \otimes\mathbf{U}_2 )^{\top})$ is a function of $\bm{\varphi}$.
	Denote $\widehat{\bm{h}}_{\rm {OLS}}={\rm {vec}}((\widehat{\cm{A}}_{\rm {OLS}})_{(1)})$, and its asymptotic normality is established in the first step.	
	As in \cite{shapiro:1986}, we construct the following discrepancy function for any $\bm{h}$,
	\begin{align*}
	F(\bm{h}, \widehat{\bm{h}}_{\rm OLS})=\sum_{n=P+1}^{T}|| \widetilde{\mathbf{y}}_n- (\widetilde{\mathbf{x}}_n^\top \otimes\mathbf{I}_{N})\bm{h}||_2^2 -\sum_{n=P+1}^{T}|| \widetilde{\mathbf{y}}_n- (\widetilde{\mathbf{x}}_n^\top \otimes\mathbf{I}_{N})\widehat{\bm{h}}_{\rm OLS}||_2^2,
	\end{align*}
	which is a nonegative and twice continuously differentiable function. Moreover, it is equal to zero if and only if $\bm{h}=\widehat{\bm{h}}_{\rm OLS}$.
	
	Let ${\mathbf{T}_{ij}}(N, N, P) \in\mathbb{R}^{N ^2P \times N^2 P}$ be the tensor matricization transformation operator, which satisfies ${\rm {vec}} (\cm{A}_{(j)})={\mathbf{T}_{ij}}(N, N, P) \rm {vec}(\cm{A}_{(i)})$ for any tensor $\cm{A} \in \mathbb{R}^{N \times N \times P}$. Note that ${\mathbf{T}_{ij}}(N, N, P)$ is a full-rank matrix indicating the corresponding position of the tensor $\cm{A}$'s each entry in ${\rm {vec}} (\cm{A}_{(j)})$, and
	can be regarded as the natural extension of the permutation matrix for matrix transpose.  Although the ${\mathbf{T}_{ij} }(N, N, P)$ only depends on the value of $N$ and $P$,  we can simplify it to ${\mathbf{T}_{ij}}$ since both $N$ and $P$ are fixed in this theorem. 
	Thus,
	\begin{align*}
	{\rm vec}(\cm{A}_{(1)})=&{\rm vec}(\mathbf{U}_1\cm{G}_{(1)} ( \mathbf{U}_{3}\otimes\mathbf{U}_{2}) ^{\top}  )
	=\mathbf{T}_{21}{\rm vec}(\mathbf{U}_2\cm{G}_{(2)}( \mathbf{U}_{3}\otimes\mathbf{U}_{1})^{\top})\\
	=&\mathbf{T}_{31}{\rm vec}(\mathbf{U}_3\cm{G}_{(3)}( \mathbf{U}_{2}\otimes\mathbf{U}_{1})^{\top}),
	\end{align*}
	and the Jacobian matrix of $\bm h$ has the form of
	\begin{align*}
	\mathbf{H}={\partial \bm{h}(\bm{\varphi})}/{\partial\bm{\varphi}}=&\left( (\mathbf{U}_{3} \otimes \mathbf{U}_{2} \otimes \mathbf{U}_{1}),[( \mathbf{U}_{3}  \otimes \mathbf{U}_{2}) \mathbf{\cm{G}}_{(1)}^{\top}] \otimes\mathbf{ I}_{N}, \right.\\
	& \hspace{5mm} \left.	\mathbf{T}_{21}	
	\left\{\left[(\mathbf{U}_{3} \otimes \mathbf{U}_{1} ) \mathbf{\cm{G}}_{(2)}^{\top}\right]\otimes\mathbf{I}_{N}\right\}, \mathbf{T}_{31}\left\{\left[(\mathbf{U}_{2} \otimes \mathbf{U}_{1} ) \mathbf{\cm{G}}_{(3)}^{\top}\right] \otimes\mathbf{ I}_{P}\right\}
	\right).
	\end{align*}
	
	Denote by $\bm{h}(\widehat{\bm{\varphi}}_{\rm MLR})$ the minimizer of $F(\cdot,\widehat{\bm{h}}_{\rm OLS})$, and it corresponds to the MLR-HAR estimator, i.e. $\bm{h}(\widehat{\bm{\varphi}}_{\rm MLR})={\rm {vec}}((\widehat{\cm{A}}_{\rm {MLR}})_{(1)})$.
	Following Proposition 4.1 in \cite{shapiro:1986}, we can obtain the asymptotic normality below,
	\begin{align*}
	\sqrt{T} \{\bm{h}(\widehat{\bm{\varphi}}_{\rm MLR})  -  \bm{h} \}  \rightarrow N\left(\mathbf{0}, {\bm{\Sigma}}_{\rm MLR}\right)
	\end{align*}
	in distribution as $m\rightarrow\infty$ and $T\rightarrow\infty$, where ${\bm{\Sigma}}_{\rm MLR}=\mathbf{P}\bm{\Sigma}_{\rm{OLS}} \mathbf{P}^{\top}$, $\mathbf{P}= \mathbf{H}(\mathbf{H}^{\top}\mathbf{J}\mathbf{H})^{\dagger}\mathbf{H}^{\top}\mathbf{J}$ is the projection matrix,
	$\mathbf{J}=\bm{\Gamma}^{*} \otimes \bm{\Sigma}_{\varepsilon}^{-1} $ is the Fisher information matrix of $\bm{h}$, 
	and $\dagger$ denotes the Moore-Penrose inverse. Since $\bm{\Sigma}_{\rm{OLS}} =\mathbf{J}^{-1}$, we can easily obtain that $\bm{\Sigma}_{\rm MLR}=\mathbf{H}(\mathbf{H}^{\top}\mathbf{J}\mathbf{H})^{\dagger}\mathbf{H}^{\top}$.
	This accomplishes the proof of Theorem \ref{theorem 1}.
\end{proof}

\begin{proof}[Proof of Corollary \ref{corollary 1}]
	The asymptotic normality of $\widehat{\cm{A}}_{\rm OLS}$ has been proved at the first step of the proof of Theorem \ref{theorem 1}.
	Note that $\bm{\Sigma}_{\rm{OLS}} =\mathbf{J}^{-1}$ and $\bm{\Sigma}_{\rm MLR}=\mathbf{P}\mathbf{J}^{-1} \mathbf{P}^{\top}$, where $\mathbf{P}= \mathbf{H}(\mathbf{H}^{\top}\mathbf{J}\mathbf{H})^{\dagger}\mathbf{H}^{\top}\mathbf{J}$ is the projection matrix. As a result,
	\begin{equation}
	\bm{\Sigma}_{\rm{OLS}} -\bm{\Sigma}_{\rm MLR} =\mathbf{J}^{-1}-\mathbf{H}(\mathbf{H}^{\top}\mathbf{J}\mathbf{H})^{\dagger}\mathbf{H}^{\top}=\mathbf{J}^{-1/2}\mathbf{Q}_{\mathbf{J}^{1/2}\mathbf{H}}\mathbf{J}^{-1/2},
	\label{MLR}
	\end{equation}
	where $\mathbf{Q}_{\mathbf{J}^{1/2}\mathbf{H}}= \mathbf{ I}-\mathbf{J}^{1/2}\mathbf{H}(\mathbf{H}^{\top}\mathbf{J}\mathbf{H})^{\dagger}\mathbf{H}^{\top}\mathbf{J}^{1/2}$ is the projection matrix onto the orthogonal compliment  of span $({\mathbf{J}^{1/2}\mathbf{H}})$, and hence is positive semidefinite. Thus, ${\bm{\Sigma}}_{\rm{OLS}}\geq {\bm{\Sigma}}_{\rm{MLR}}$.
	
	For the MRI estimator, it is equivalent to restrict the coefficient tensor has Tucker decomposition of $\cm{A}=\cm{H}\times_{2}\mathbf{U}_2$ with a known rank $r_2$, and  the proving technique in Theorem \ref{theorem 1} is used to accomplish the corresponding proof. 
	
	Let $\bm{\theta}=\left(\rm{vec}(\cm{H}_{(1)})^{\top}, \rm{vec}(\mathbf{U}_2)^{\top} \right)^{\top}$, and $\bm{h}(\bm{\theta})=\rm{vec} (\cm{A}_{(1)}) =\rm{vec}(\mathbf{\cm{H}}_{(1)}(\mathbf{I}_{P} \otimes\mathbf{U}_2 )^{\top})$ be the function of $\bm{\theta}$. Similar to the case for $\widehat{\cm{A}}_{\rm MLR}$, we define the Jacobian matrix $\mathbf{R}:=\partial \bm{h}(\bm{\theta}) / \partial \bm{\theta}$ below, 
	\[
	\mathbf{R}=\left( \mathbf{I}_P \otimes \mathbf{U}_2 \otimes \mathbf{I}_N, \mathbf{T}_{21}	
	\left\{\left[(\mathbf{I}_{P} \otimes \mathbf{I}_{N} ) \mathbf{\cm{H}}_{(2)}^{\top}\right]\otimes \mathbf{I}_N\right\} \right)\in \mathbb{R}^{N^2P \times(NPr_2+Nr_2)}.
	\]
	Denote ${\bm\Sigma}_{\mathrm{MRI}}=\mathbf{R}\left(\mathbf{R}^{\top} \mathbf{J} \mathbf{R}\right)^{\dagger} \mathbf{R}^{\top}$. Similar to the proofs for $\widehat{\cm{A}}_{\rm MLR}$, we can show that $	\sqrt{T}\{{\rm{vec}}((\widehat{\cm{A}}_{\rm{MRI}})_{(1)})-{\rm{vec}}(\mathbf{\cm {A}}_{(1)})\}  \rightarrow  N(\mathbf{0}, {\bm{\Sigma}}_{\rm{MRI}})$ in
	distribution as $m \rightarrow \infty$ and $T \rightarrow \infty$, and $\bm\Sigma_{\mathrm{MRI}} \leq \mathbf{J}^{-1}=\bm{\Sigma}_{\rm{OLS}}$.
	
	Finally, since $\mathbf{U}_{2}$ in Tucker decomposition is exactly the same as the left singular vectors in the SVD of $\cm{A}_{(2)}$, we can view the Tucker decomposition as a further decomposition of the matrix $\cm{H}_{(1)}$, i.e. $\cm{A}=\cm{H}\times_{2}\mathbf{U}_2$ and $\cm{H}=\cm{G}\times_{1}\mathbf{U}_1\times_{3}\mathbf{U}_3$. Therefore, $\mathbf{H}=\partial \bm{h}(\bm{\theta}) / \partial  {\bm{\varphi}}=\partial \bm{h}(\bm{\theta}) / \partial \bm{\theta} \cdot \partial \bm{\theta}(\bm\varphi) / \partial {\bm\varphi}=\mathbf{R} \cdot \partial \bm{\theta}(\bm\varphi) / \partial {\bm \varphi}$. By a method similar to \eqref{MLR}, we can show that $\bm{\Sigma}_{\rm{MLR}} \leq \bm{\Sigma}_{\rm{MIR}}$, since $\rm{span}\left(\mathbf{J}^{1 / 2} \mathbf{H}\right) \subset \rm{span}\left(\mathbf{J}^{1 / 2} \mathbf{R}\right)$.
\end{proof}

\subsection{Proofs of Theorems \ref{thm2} and \ref{thm3} in Section \ref{Sect4}}

This subsection gives the technical proofs of Theorems \ref{thm2} and \ref{thm3} in Section \ref{Sect4}. In the meanwhile, we also provide four auxiliary lemmas.
Lemma \ref{lemma:covering} establishes covering number and discretization of low-multilinear-rank tensors,  Lemma \ref{lemma:RSC} derives Restricted strong convexity (RSC) and Restricted smoothness (RSM), Lemma \ref{lemma:DB} derives the deviation bound, and they will be used in the proof of Theorem \ref{thm2} and \ref{thm3}. Lemma \ref{lemma3} derives the contractive projection property (CPP), which is used in the proof of Theorem \ref{thm3}.
Throughout this subsection, we will use $C$ to represent generic positive numbers, whose value may vary from line to line.

\begin{proof}[Proof of Theorem \ref{thm2}]
	For simplicity, denote the multilinear low-rank estimator $\widehat{\cm{A}}_{\rm MLR}$ by $\widehat{\cm{A}}$, and let $\bm{\Delta}=\widehat{\cm{A}}-\cm{A}$, where $\cm{A}$ is the true parameter tensor. The loss function has the form of
	\begin{align*}
	L(\cm{A})=\frac{1}{T}\sum_{n=P+1}^{T}||	\widetilde{\mathbf{y}}_{n}-\cm{A}_{(1)}\widetilde{\mathbf{x}}_{n}||_2^2,
	\end{align*}
	where $\widetilde{\mathbf{x}}_{n}=(\widetilde{\mathbf{y}}_{n-1}^{\top},\ldots,\widetilde{\mathbf{y}}_{n-P}^{\top})^\top \in \mathbb{R}^{NP \times 1}$. 
	Due to the optimality of the $\widehat{\cm{A}}$, it holds that
	\begin{align*}
	&\frac{1}{T}\sum_{n=P+1}^T\|\widetilde{\mathbf{y}}_n - \widehat{\cm{A}}_{(1)}\widetilde{\mathbf{x}}_{n}\|_2^2 \le \frac{1}{T}\sum_{n=P+1}^T\|\widetilde{\mathbf{y}}_n - \cm{A}_{(1)}\widetilde{\mathbf{x}}_{n}\|_2^2,
	\end{align*}
	which implies that
	\begin{equation}\label{eq:add1}
	\frac{1}{T}\sum_{n=P+1}^T\|\bm{\Delta}_{(1)}\widetilde{\mathbf{x}}_{n}\|_2^2 \leq \frac{2}{T} \sum_{n=P+1}^T \langle \bm{\epsilon}_n, \bm{\Delta}_{(1)}\widetilde{\mathbf{x}}_{n} \rangle 
	\leq 2\langle \frac{1}{T}\sum_{n=P+1}^T \bm{\epsilon}_n \circ \widetilde{\mathbf{X}}_{n}, \bm{\Delta} \rangle,
	\end{equation}
	where $\widetilde{\mathbf{X}}_{n}=(\widetilde{\mathbf{y}}_{n-1},\ldots,\widetilde{\mathbf{y}}_{n-P}) \in \mathbb{R}^{N\times P} $, $\sum_{n=P+1}^T \langle \bm{\epsilon}_n, \bm{\Delta}_{(1)}\widetilde{\mathbf{x}}_{n} \rangle=\langle \sum_{n=P+1}^T \bm{\epsilon}_n \circ \widetilde{\mathbf{X}}_{n}, \bm{\Delta} \rangle$, and $\circ$ denotes the outer product.	
	
	Denote the set of tensors 
	$$\mathcal {S}(r_1,r_2,r_3) = \{\cm{A}\in \mathbb{R}^{N\times N \times P}: \|\cm{A}\|_{\rm F} = 1, \text{rank}_{i} (\cm{A}_i)\le r_i, 1\leq i \leq 3\}.$$ 
	Note that the Tucker ranks of both $\widehat{\cm{A}}$ and $\cm{A}$ are $(r_1,r_2,r_3)$, and hence the Tucker ranks of $\bm{\Delta}$ are at most $(2r_1,2r_2,2r_3)$. 
	As a result, from \eqref{eq:add1}, 
	\begin{equation*}
	\frac{1}{T}\sum_{n=P+1}^T\|\bm{\Delta}_{(1)}\widetilde{\mathbf{x}}_{n}\|_2^2 \leq 2\|\bm{\Delta}\|_{\rm F} \sup_{\bm{\Delta}\in \mathcal {S}(2r_1,2r_2,2r_3) } \langle \frac{1}{T}\sum_{n=P+1}^T \bm{\epsilon}_n \circ \widetilde{\mathbf{X}}_{n}, \bm{\Delta} \rangle,
	\end{equation*}
	and we hence can derive the estimation error bound by applying Lemmas \ref{lemma:RSC} and Lemma \ref{lemma:DB} with  $\delta$ being $\delta+1/2$ for simplicity. The prediction error bound can also be established from the above inequality, estimation error bound and Lemma \ref{lemma:DB}.
\end{proof}
\begin{proof}[Proof of Theorem \ref{thm3}]
	For a fixed $1\leq k\leq K$, define a linear space,
	\[
	\mathcal{A} = \{\alpha_1\widehat{ \cm{A}}_{k}+\alpha_2\cm{A},\hspace{2mm} \alpha_1,\alpha_2\in \mathbb{R}\},
	\]
	and denote by $(\cm{B})_{\mathcal{A}}$ the projection of $\cm{B}\in\mathbb{R}^{N\times N\times P}$ onto the space $\mathcal{A}$, where the dependence of $\mathcal{A}$ on $k$ is suppressed for simplicity.
	Since $\cm{A}\in \bm{\Theta}(r_1,r_2,r_3)$ and $\widehat{ \cm{A}}_{k}\in \bm{\Theta}(r_1^\prime,r_1^\prime,r_{3}^\prime)$, it holds that $\mathcal{A}  \subset\bm{\Theta}(r_1+r_1^\prime,r_2+r_2^\prime,r_{3}+r_{3}^\prime)$.
	
	Note that $\widetilde{\cm{A}}_{k}=\widehat{\cm{A}}_{k-1}-\eta\nabla L(\widehat{ \cm{A}}_{k-1})$, and $r_i^\prime\geq \left(\sqrt[3]{1+\frac{\kappa_{L}}{24\kappa_{U}}}-1\right)^{-2}r_i$ with $1\leq i \leq 3$. From Lemma \ref{lemma3}, we have 
	\begin{equation*}
	\|\widehat{ \cm{A}}_{k}-(\widetilde{ \cm{A}}_k)_{\mathcal{A}}\|_{\rm F} \le [\prod_{i=1}^3 (\sqrt{\frac{r_i}{r_i^{'}}}+1)-1]\| \cm{A}-(\widetilde{\cm{A}}_{k})_{\mathcal{A}}\|_{\rm F}
	\leq\frac{\kappa_{L}}{24\kappa_{U}}\| \cm{A}-(\widetilde{\cm{A}}_{k})_{\mathcal{A}}\|_{\rm F},
	\end{equation*}
	which, together with the fact that $1< 1+\frac{\kappa_{L}}{24\kappa_{U}}<2$, implies that
	\begin{align}
	\begin{split}\label{eq:a1a2}
	\|\widehat{ \cm{A}}_{k}- \cm{A}\|_{\rm F}\leq&\|\widehat{\cm{A}}_{k}-(\widetilde{\cm{A}}_{k})_{\mathcal{A}}\|_{\rm F}+\| \cm{A}-(\widetilde{\cm{A}}_{k})_{\mathcal{A}}\|_{\rm F}
	\leq(1+\frac{\kappa_{L}}{24\kappa_{U}})\| \cm{A}-(\widetilde{\cm{A}}_{k})_{\mathcal{A}}\|_{\rm F}\\
	\leq&(1+\frac{\kappa_{L}}{24\kappa_{U}})\|( \cm{A}-\widehat{ \cm{A}}_{k-1}-\eta[\nabla L( \cm{A})-\nabla L(\widehat{ \cm{A}}_{k-1})])_{\mathcal{A}}\|_{\rm F}+2\eta\|(\nabla L( \cm{A}))_\mathcal{A}\|_{\rm F}\\
	:=A_1+A_2.
	\end{split}
	\end{align}
	
	We first handle the term of $A_1$. Let $\mathbf{H} = T^{-1}\sum_{n=P+1}^T (\widetilde{\mathbf{x}}_n\widetilde{\mathbf{x}}_n^\top\otimes \mathbf{I}_N)$, and it holds that $T^{-1}\sum_{n=P+1}^T \| \bm{\Delta}_{(1)} \widetilde{\mathbf{x}}_{n} \|_2^2=T^{-1}\sum_{n=P+1}^T \| (\widetilde{\mathbf{x}}_{n}^{\top} \otimes \mathbf{I}_{N}){\rm{vec}}(\bm{\Delta}) \|_2^2 = \rm{vec}(\bm{\Delta})^{\top} \mathbf{H} \rm{vec}(\bm{\Delta})$.
	Then, from Lemma \ref{lemma:RSC} and for all $\bm{\Delta} \in \bm{\Theta}(r_1+r_1^\prime,r_2+r_2^\prime,r_{3}+r_{3}^\prime)$,
	\[
	 \frac{1}{8} \kappa_L \|\bm{\Delta}\|^2_{\rm F} \leq \rm{vec}(\bm{\Delta})^{\top} \mathbf{H} \rm{vec}(\bm{\Delta}) \leq  \frac{8}{3} \kappa_U \|\bm{\Delta}\|^2_{\rm F}
	\]
	with a probability at least
	\begin{equation}\label{eq:prob1}
	1-2\exp(-CT(\kappa_L/\kappa_U)^2\min\{\kappa^{-2},\kappa^{-4}\})-\exp(-Cd_{\mathcal{M}}^{\prime}).
	\end{equation}
	Note that $\eta={2}/({3\kappa_U})$, and $\mathbf{H}$ is the Hessian matrix of the loss function $L(\cm{B})$ with respect to $\rm{vec}(\cm{B})$. It holds that ${\rm{vec}}(\nabla L(\cm{A})-\nabla L(\cm{\widehat{A}}_{k-1}))=\mathbf{H}\rm{vec}(\cm{A}-\cm{\widehat{A}}_{k-1})$, and 
	\begin{align}
	\begin{split}\label{eq:with_hessian}
	A_1 &=(1+\frac{\kappa_{L}}{24\kappa_{U}})\|(( \mathbf{I}-\eta \mathbf{H}){\rm{vec}}( \cm{A}-\widehat{ \cm{A}}_{k-1}))_\mathcal{A}\|_2\\
	& \leq (1+\frac{\kappa_{L}}{24\kappa_{U}}) (1-\frac{\kappa_L}{12\kappa_U})\|\widehat{ \cm{A}}_{k-1}- \cm{A}\|_{\rm F}\\
	&\leq (1-\frac{\kappa_L}{24\kappa_U})\|\widehat{ \cm{A}}_{k-1}- \cm{A}\|_{\rm F},
	\end{split}
	\end{align}
	with the probability at \eqref{eq:prob1}, where $(\rm{vec}(\cm{B}))_{\mathcal{A}}=(\cm{B})_{\mathcal{A}}$, and the first inequality is by Lemma 4 of \cite{chen2019non}.
		
	We next handle the term of $A_2$ and, by Lemma 5 in \cite{chen2019non},
	\[
	A_2\leq \frac{32}{\kappa_L}\|(\nabla L( \cm{A}))_\mathcal{A}\|_{\rm F} =\frac{32}{\kappa_L} \sup_{\cm{S} \in\mathcal{A} , \|\cm{S}\|_{\rm F}=1}\langle \nabla L( \cm{A}), \cm{S} \rangle \leq \frac{32}{\kappa_L}\xi,
	\]
	where $\mathcal{S}(r_1+r_1^\prime,r_2+r_2^\prime,r_{3}+r_{3}^\prime)=\bm{\Theta}(r_1+r_1^\prime,r_2+r_2^\prime,r_{3}+r_{3}^\prime)\bigcap \{\|\cm{S}\|_{\rm F}=1\}$ and
	\begin{align*}
	\xi&= \sup_{\cm{S} \in\mathcal{S}(r_1+r_1^\prime,r_2+r_2^\prime,r_{3}+r_{3}^\prime)}\langle \nabla L( \cm{A}), \cm{S} \rangle = \sup_{\cm{S} \in\mathcal{S}(r_1+r_1^\prime,r_2+r_2^\prime,r_{3}+r_{3}^\prime)} \left\langle \frac{2}{T}\sum_{n=P+1}^T \bm{\epsilon}_n \circ \widetilde{\mathbf{X}}_{n}, \cm{S} \right\rangle\\
	&\leq C \left[(\kappa^2 \sqrt{\lambda_{\max}(\bm{\Sigma}_{{\varepsilon}})\kappa_U} +\kappa\sqrt{\kappa_U})\sqrt{\frac{d_{\mathcal{M}}^{\prime}}{T}} +\frac{T^{1+2\delta}}{m^{1/4}}\right]
	\end{align*}
	with a probability at least $1-\exp(-Cd_{\mathcal{M}}^{\prime})- 2\exp(-CT(\kappa_L/\kappa_U)^2\min\{\kappa^{-2},\kappa^{-4}\})$.
	This, together with \eqref{eq:a1a2} and \eqref{eq:with_hessian}, accomplishes the proof. 
	
\end{proof}

\begin{lemma}
	(Covering number and discretization of low-multilinear-rank tensors). 
	Suppose that $\bar{\mathcal{S}}(r_1,r_2,r_3)$ is an $\epsilon$-net of the set $S(r_1,r_2,r_3):=\{\cm{A}\in \mathbb{R}^{N\times N \times P}: \left\|\cm{A}\right\|_{\rm F}=1, {\rm{rank}}_{i}(\cm{A}_{(i)})\leq r_i, 1\leq i \leq 3\}$.
	\begin{itemize}
		\item [(i)] The cardinality of $\bar{\mathcal{S}}(r_1,r_2,r_3)$ satisfies
		\begin{align*}
		\left|\bar{\mathcal{S}}(r_1,r_2,r_3)\right| \leq (12/\epsilon)^{(r_1r_2r_3+Nr_1+Nr_2+Pr_3)}.
		\end{align*}
		\item [(ii)] For any tensor $\cm{N} \in \mathbb{R}^{N\times N \times P}$ and matrix $\mathbf{Z} \in \mathbb{R}^{NP \times T}$, it holds that,
		\begin{align*}
		\sup_{\bm{\Delta}\in \mathcal{S}(2r_1,2r_2,2r_3)} \left\langle \cm{N},\bm{\Delta} \right\rangle \leq &(1- 2\sqrt{2} \epsilon)^{-1} \max_{\bar{\bm{\Delta}}\in \bar{\mathcal{S}}(2r_1,2r_2,2r_3)}\left \langle \cm{N},\bar{\bm{\Delta}} \right \rangle, \hspace{5mm}\text{and}\\
		\sup_{\bm{\Delta}\in \mathcal{S}(2r_1,2r_2,2r_3)} \left\| \bm{\Delta}_{(1)}\mathbf{Z} \right\|_{\rm F} \leq & (1- 2\sqrt{2} \epsilon)^{-1} \max_{\bar{\bm{\Delta}}\in \bar{\mathcal{S}}(2r_1,2r_2,2r_3)}\left\| \bar{\bm{\Delta}}_{(1)}\mathbf{Z} \right\|_{\rm F}.
		\end{align*}
	\end{itemize}
	\label{lemma:covering}
\end{lemma}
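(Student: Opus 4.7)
The plan is to handle (i) and (ii) in turn, both leveraging the HOSVD: first to parametrize $S(r_1,r_2,r_3)$ via its natural factorization for the covering argument, and then to split a difference of Tucker rank $(4r_1,4r_2,4r_3)$ into eight pieces of Tucker rank $(2r_1,2r_2,2r_3)$ that explain the $2\sqrt{2}$ in the bound.

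For part (i), I would start from the HOSVD $\cm{A}=\cm{G}\times_1\mathbf{U}_1\times_2\mathbf{U}_2\times_3\mathbf{U}_3$ of any $\cm{A}\in S(r_1,r_2,r_3)$, where $\cm{G}\in\mathbb{R}^{r_1\times r_2\times r_3}$ lies on the unit Frobenius sphere and each $\mathbf{U}_i$ has orthonormal columns. A standard volume argument gives an $(\epsilon/4)$-net of the unit sphere in $\mathbb{R}^{r_1 r_2 r_3}$ of cardinality at most $(12/\epsilon)^{r_1 r_2 r_3}$, and an $(\epsilon/4)$-net (in operator norm) of each Stiefel manifold of cardinality at most $(12/\epsilon)^{Nr_1}$, $(12/\epsilon)^{Nr_2}$, $(12/\epsilon)^{Pr_3}$, respectively, using the intrinsic-dimension covering of the Stiefel manifold rather than the ambient matrix space. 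Re-assembling $\bar{\cm{A}}:=\bar{\cm{G}}\times_1\bar{\mathbf{U}}_1\times_2\bar{\mathbf{U}}_2\times_3\bar{\mathbf{U}}_3$ and telescoping through three intermediate tensors, one obtains
\[
\|\cm{A}-\bar{\cm{A}}\|_{\rm F}\le\|\cm{G}-\bar{\cm{G}}\|_{\rm F}+\|\bar{\cm{G}}\|_{\rm F}\sum_{i=1}^{3}\|\mathbf{U}_i-\bar{\mathbf{U}}_i\|_{\rm op}\le\epsilon,
\]
since mode-$i$ multiplication by an orthonormal matrix preserves $\|\cdot\|_{\rm F}$. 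Multiplying the four cardinalities yields the stated bound.

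For part (ii), fix $\bm{\Delta}\in\mathcal{S}(2r_1,2r_2,2r_3)$ and choose $\bar{\bm{\Delta}}\in\bar{\mathcal{S}}(2r_1,2r_2,2r_3)$ with $\|\bm{\Delta}-\bar{\bm{\Delta}}\|_{\rm F}\le\epsilon$. The difference $\cm{D}:=\bm{\Delta}-\bar{\bm{\Delta}}$ has Tucker ranks at most $(4r_1,4r_2,4r_3)$; writing its HOSVD as $\cm{D}=\cm{G}'\times_1\mathbf{V}_1\times_2\mathbf{V}_2\times_3\mathbf{V}_3$ and partitioning the columns of each $\mathbf{V}_i$ into two blocks of width $2r_i$ produces $2^3=8$ sub-tensors $\cm{D}_j$, each with Tucker ranks at most $(2r_1,2r_2,2r_3)$. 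By the orthogonality implicit in the HOSVD split, $\sum_{j=1}^{8}\|\cm{D}_j\|_{\rm F}^{2}=\|\cm{D}\|_{\rm F}^{2}\le\epsilon^{2}$, whence Cauchy--Schwarz gives $\sum_{j=1}^{8}\|\cm{D}_j\|_{\rm F}\le\sqrt{8}\,\epsilon=2\sqrt{2}\,\epsilon$. Setting $M:=\sup_{\bm{\Delta}\in\mathcal{S}(2r_1,2r_2,2r_3)}\langle\cm{N},\bm{\Delta}\rangle$ and using linearity of the inner product,
\[
\langle\cm{N},\bm{\Delta}\rangle\le\langle\cm{N},\bar{\bm{\Delta}}\rangle+\sum_{j=1}^{8}\|\cm{D}_j\|_{\rm F}\,M\le\max_{\bar{\bm{\Delta}}\in\bar{\mathcal{S}}(2r_1,2r_2,2r_3)}\langle\cm{N},\bar{\bm{\Delta}}\rangle+2\sqrt{2}\,\epsilon\,M,
\]
and rearrangement delivers the first inequality. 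The second inequality follows from the identical argument applied to the seminorm $f(\bm{\Delta}):=\|\bm{\Delta}_{(1)}\mathbf{Z}\|_{\rm F}$, which satisfies $f(\bm{\Delta})\le f(\bar{\bm{\Delta}})+\sum_{j}f(\cm{D}_j)$ by the triangle inequality.

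The main technical friction I expect is in part (i), specifically the bookkeeping to land on the constant $12$: one has to allocate the $\epsilon/4$ budget evenly across the core tensor and the three factor matrices, and use an intrinsic-dimension covering of each Stiefel manifold so that the exponent is $Nr_i$ (or $Pr_3$) rather than $N^2$. Part (ii) is conceptually light once the $2^3=8$ block-splitting of an HOSVD is identified, and the $2\sqrt{2}$ factor is exactly $\sqrt{8}$ from Cauchy--Schwarz over the eight orthogonal pieces.
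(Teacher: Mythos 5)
Your proof of part (ii) is correct and follows essentially the same route as the paper: decompose the difference $\bm{\Delta}-\bar{\bm{\Delta}}$ (of multilinear ranks at most $(4r_1,4r_2,4r_3)$) via its HOSVD into $2^3=8$ pairwise-orthogonal pieces of ranks at most $(2r_1,2r_2,2r_3)$, apply Cauchy--Schwarz to get the $2\sqrt{2}\epsilon$ factor, and close with the self-bounding inequality $M\leq \max + 2\sqrt{2}\epsilon M$. For part (i) the paper simply cites Lemma A.1 of Wang et al.\ (2020), whereas you supply the standard argument it relies on (separate $(\epsilon/4)$-nets for the unit-sphere core and the three orthonormal factor matrices, combined by telescoping); this is the correct underlying proof, with the only cosmetic quibble being that the exponent $Nr_i$ comes from covering the ambient $N\times r_i$ matrix space rather than the intrinsic dimension of the Stiefel manifold.
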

\begin{proof}[Proof of Lemma \ref{lemma:covering}]
	Result at (i) is from Lemma A.1 of \cite{wang2020compact}, and we here prove (ii) only.
	
	Consider an $\epsilon$-net $\bar{\mathcal{S}}(2r_1,2r_2,2r_3)$ for $\mathcal{S}(2r_1,2r_2,2r_3)$. Then for any tensor $\bm{\Delta} \in \mathcal{S}(2r_1,2r_2,2r_3)$, there exists a $\bar{\bm{\Delta}} \in \bar{\mathcal{S}}(2r_1,2r_2,2r_3)$ such that $\| \bm{\Delta} - \bar{\bm{\Delta}}\|_{\rm F} \le \epsilon$. Since the rank of $\widebar{\cm{W}} =  \bm{\Delta} -\bar{\bm{\Delta}}$ are at most $(4r_1,4r_2,4r_3)$, we can split the HOSVD of $\widebar{\cm{W}} $ into 8 parts such that $\widebar{\cm{W}} =\sum_{i=1}^{8}\widebar{\cm{W}} _i$, where ${\rm{rank}}_j(\widebar{\cm{W}} _i) \le 2 r_j$ for $1\leq i \leq 8$ and $1\leq j \leq 3$, and $\left\langle \widebar{\cm{W}} _j, \widebar{\cm{W}} _k \right\rangle=0$ for any $j \ne k$. Then for any $\cm{N}\in \mathbb{R}^{N\times N \times P}$, we have
	\begin{equation}\label{eq:proof1}
	\left \langle \cm{N},\bm{\Delta} \right\rangle = \left\langle \cm{N},\bar{\bm{\Delta}} \right\rangle + \sum_{i=1}^{8} \langle \cm{N},\widebar{\cm{W}} _i  \rangle = \left\langle \cm{N},\bar{\bm{\Delta}} \right\rangle + \sum_{i=1}^{8} \left \langle \cm{N},\widebar{\cm{W}} _i/\|\widebar{\cm{W}} _i\|_{\rm F}  \right \rangle \left\|\widebar{\cm{W}} _i \right\|_{\rm F},
	\end{equation}
	where $\widebar{\cm{W}} _i/\left\|\widebar{\cm{W}} _i\right\|_{\rm F} \in S(2r_1,2r_2,2r_3)$, and $\left\langle \cm{N},\widebar{\cm{W}} _i/\|\widebar{\cm{W}} _i\|_{\rm F} \right\rangle \le \sup_{\bm{\Delta}\in S(2r_1,2r_2,2r_3)} \left \langle \cm{N},\bm{\Delta}\right\rangle$.
	
	Note that $\left\|\widebar{\cm{W}}\right\|_{\rm F}^2=\sum_{i=1}^{8}\left\|\widebar{\cm{W}}_i\right\|_{\rm F}^2$, and it holds that $\sum_{i=1}^{8}\left\|\widebar{\cm{W}}_i\|_{\rm F} \le 2\sqrt{2} \|\widebar{\cm{W}}\right\|_{\rm F} \le 2\sqrt{2} \epsilon$, which, together with \eqref{eq:proof1}, implies that
	\begin{align*}
	\gamma &:= \sup_{\bm{\Delta}\in \mathcal{S}(2r_1,2r_2,2r_3)} \left\langle \cm{N},\bm{\Delta} \right\rangle \le \max_{\bar{\bm{\Delta}}\in \bar{\mathcal{S}}(2r_1,2r_2,2r_3)}\left\langle \cm{N},\bar{\bm{\Delta}} \right\rangle + 2\sqrt{2}\gamma \epsilon, \hspace{5mm}\text{or}\\
	\gamma&=\sup_{\bm{\Delta}\in S(2r_1,2r_2,2r_3)}\left \langle \cm{N},\bm{\Delta} \right\rangle \le (1- 2\sqrt{2} \epsilon)^{-1} \max_{\bar{\bm{\Delta}}\in \bar{\mathcal{S}}(2r_1,2r_2,2r_3)}\left\langle \cm{N},\bar{\bm{\Delta}} \right\rangle.
	\end{align*}
	
	For matrix $\mathbf{Z} \in \mathbb{R}^{NP \times T}$, it holds that 
	\begin{align*}
	\left\| \bm{\Delta}_{(1)}\mathbf{Z} \right\|_{\rm F} &\leq \left\| \bar{\bm{\Delta}}_{(1)}\mathbf{Z} \right\|_{\rm F} + \sum_{i=1}^{8} \left\| (\widebar{\cm{W}}_i)_{(1)}\mathbf{Z} \right\|_{\rm F}\\
	& =  \left\| \bar{\bm{\Delta}}_{(1)}\mathbf{Z} \right\|_{\rm F} +  \sum_{i=1}^{8} \left\| \widebar{\cm{W}}_i \right\|_{\rm F} \left\| (\widebar{\cm{W}}_i)_{(1)}/ \| \widebar{\cm{W}}_i\|_{\rm F}\mathbf{Z} \right\|_{\rm F}\\
	&\leq \left\| \bar{\bm{\Delta}}_{(1)}\mathbf{Z} \right\|_{\rm F} + \sum_{i=1}^{8} \left\| \widebar{\cm{W}}_i \right\|_{\rm F} \sup_{\bm{\Delta} \in \mathcal{S}(2r_1,2r_2,2r_3)}\left\| \bm{\Delta}_{(1)}\mathbf{Z} \right\|_{\rm F}\\
	&\leq \left\| \bar{\bm{\Delta}}_{(1)}\mathbf{Z} \right\|_{\rm F} + 2\sqrt{2}\epsilon \sup_{\bm{\Delta} \in \mathcal{S}(2r_1,2r_2,2r_3)} \left\| \bm{\Delta}_{(1)}\mathbf{Z} \right\|_{\rm F},
	\end{align*}
	and we accomplish the proof by taking supremum on both sides.
\end{proof}

\begin{lemma}
	(Restricted strong convexity and smoothness). 
	Suppose that \eqref{condition_estimation error} and Assumptions \ref{assumption 1} and \ref{sub-Gaussian} hold. If $T \gtrsim \max (\kappa^2, \kappa^4) (\kappa_U/\kappa_L)^2d_{\mathcal{M}}$, $m^{1/4} \gtrsim T^{2\delta}$, $m^{1/4}T^{2\delta-1}\gtrsim N^2P\exp(d_{\mathcal{M}})$ for some $\delta > 1/2$, then  
	\begin{align*}
	\frac{1}{8} \kappa_L \left\|\bm{\Delta}\right\|^2_{\rm F} \leq \frac{1}{T}\sum_{n=P+1}^{T} \left\| \bm{\Delta}_{(1)} \widetilde{\mathbf{x}}_{n} \right\|_2^2 \leq \frac{8}{3} \kappa_U \left\| \bm{\Delta} \right\|^2_{\rm F},
	\end{align*}
	for all $\bm{\Delta} \in \mathcal{S}(2r_1,2r_2,2r_3)$ with probability at least 
	\begin{align*}
	1-2\exp(-CT(\kappa_L/\kappa_U)^2\min\{\kappa^{-2},\kappa^{-4}\})-\exp(-Cd_{\mathcal{M}}),
	\end{align*}
	where $\kappa$, $\kappa_U$, $\kappa_U$ and $d_{\mathcal{M}}$ are defined in Theorem \ref{thm2}.
	\label{lemma:RSC}
\end{lemma}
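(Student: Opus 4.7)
The natural decomposition is $\widetilde{\mathbf{x}}_n=\mathbf{x}_n+\bm{\zeta}_n$ with $\mathbf{x}_n=(\mathbf{y}_{n-1}^\top,\ldots,\mathbf{y}_{n-P}^\top)^\top$ and $\bm{\zeta}_n=(\bm{\eta}_{n-1}^\top,\ldots,\bm{\eta}_{n-P}^\top)^\top$. Expanding gives
\[
\tfrac{1}{T}\sum_{n=P+1}^T\|\bm{\Delta}_{(1)}\widetilde{\mathbf{x}}_n\|_2^2=I_1(\bm{\Delta})+2I_2(\bm{\Delta})+I_3(\bm{\Delta}),
\]
where $I_1=\tfrac{1}{T}\sum\|\bm{\Delta}_{(1)}\mathbf{x}_n\|_2^2$, $I_3=\tfrac{1}{T}\sum\|\bm{\Delta}_{(1)}\bm{\zeta}_n\|_2^2$, and $I_2$ is the cross term. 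The plan is to show that $I_1$ concentrates around a quantity in $[\kappa_L,\kappa_U]$ uniformly over $\mathcal{S}(2r_1,2r_2,2r_3)$, and that $I_2,I_3$ are $o(\kappa_L)$ under the scaling on $m$.

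For the main term $I_1$, Proposition \ref{prop:2} (with $L_t\equiv 0$) shows that $\{\mathbf{y}_n\}$ is a strictly stationary causal VAR$(P)$ driven by iid sub-Gaussian innovations $\bm{\varepsilon}_n$. The standard Basu--Michailidis spectral-density computation then gives $\kappa_L\le\lambda_{\min}(\bm{\Gamma}^*)\le\lambda_{\max}(\bm{\Gamma}^*)\le\kappa_U$, so $\mathbb{E}I_1(\bm{\Delta})=\mathrm{vec}(\bm{\Delta}_{(1)})^\top(\bm{\Gamma}^*\otimes\mathbf{I}_N)\mathrm{vec}(\bm{\Delta}_{(1)})\in[\kappa_L,\kappa_U]$ whenever $\|\bm{\Delta}\|_{\rm F}=1$. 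A Hanson--Wright-style deviation inequality for quadratic forms of sub-Gaussian VAR sequences (Proposition 2.4 of \cite{basu2015regularized}) then furnishes, for each fixed $\bm{\Delta}$,
\[
\mathbb{P}\bigl(|I_1(\bm{\Delta})-\mathbb{E}I_1(\bm{\Delta})|\ge t\kappa_U\bigr)\le 2\exp\bigl(-CT\min\{t^2\kappa^{-4},t\kappa^{-2}\}\bigr).
\]
I then promote this pointwise bound to a uniform bound over $\mathcal{S}(2r_1,2r_2,2r_3)$ by the tensor covering of Lemma \ref{lemma:covering}(ii) with $\epsilon=1/(16\sqrt{2})$, whose cardinality is $\exp(Cd_{\mathcal{M}})$, followed by a union bound. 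The hypothesis $T\gtrsim \max(\kappa^2,\kappa^4)(\kappa_U/\kappa_L)^2 d_{\mathcal{M}}$ is precisely what absorbs the covering entropy and delivers $\tfrac14\kappa_L\le I_1(\bm{\Delta})\le \tfrac52\kappa_U$ uniformly, with failure probability at most $2\exp(-CT(\kappa_L/\kappa_U)^2\min\{\kappa^{-2},\kappa^{-4}\})$.

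For $I_3$, Cauchy--Schwarz gives $I_3(\bm{\Delta})\le \|\bm{\Delta}\|_{\rm F}^2\cdot \tfrac{1}{T}\sum_n\|\bm{\zeta}_n\|_2^2$, so it suffices to control the scalar $\tfrac{1}{T}\sum_n\|\bm{\zeta}_n\|_2^2$. Its expectation is at most $NP\cdot O(m^{-1/2})$ by \eqref{condition_estimation error}, and a Markov inequality combined with the hypothesis $m^{1/4}T^{2\delta-1}\gtrsim N^2P\exp(d_{\mathcal{M}})$ ensures $I_3(\bm{\Delta})\le c\kappa_L$ with probability at least $1-\exp(-Cd_{\mathcal{M}})$, for all $\bm{\Delta}$ on the unit sphere of $\mathcal{S}(2r_1,2r_2,2r_3)$ simultaneously. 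The cross term obeys $|I_2(\bm{\Delta})|\le \sqrt{I_1(\bm{\Delta})\,I_3(\bm{\Delta})}\le c'\kappa_L$. Adding the three pieces and adjusting constants yields $\tfrac18\kappa_L\le \tfrac{1}{T}\sum\|\bm{\Delta}_{(1)}\widetilde{\mathbf{x}}_n\|_2^2\le \tfrac{8}{3}\kappa_U$ on the stated event.

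The principal obstacle is the pointwise-to-uniform step for $I_1$: $\mathcal{S}(2r_1,2r_2,2r_3)$ is a nonconvex low-multilinear-rank tensor variety and $\{\mathbf{x}_n\}$ is temporally dependent, so matrix Bernstein does not apply directly. The argument must couple the Basu--Michailidis quadratic-form concentration (which itself rests on the spectral-density bounds defining $\mu_{\min},\mu_{\max}$) with the tensor covering of Lemma \ref{lemma:covering}; balancing the entropy $d_{\mathcal{M}}$ against the effective sample size $T(\kappa_L/\kappa_U)^2\min\{\kappa^{-2},\kappa^{-4}\}$ dictates the sample-size condition, while the rather strong scaling $m^{1/4}T^{2\delta-1}\gtrsim N^2P\exp(d_{\mathcal{M}})$ is what prevents the measurement-error term $I_3$ from polluting the restricted eigenvalue after the union bound over the net.
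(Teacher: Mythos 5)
Your proposal follows essentially the same route as the paper's proof: the same decomposition of $T^{-1}\sum_n\|\bm{\Delta}_{(1)}\widetilde{\mathbf{x}}_n\|_2^2$ into a clean quadratic term, a cross term, and a measurement-error term; the same Basu--Michailidis spectral bounds $\kappa_L\le\lambda_{\min}(\bm{\Gamma}^*)\le\lambda_{\max}(\bm{\Gamma}^*)\le\kappa_U$ for the expectation; the same Hanson--Wright concentration for the quadratic form in the sub-Gaussian innovations; the same covering-net plus union-bound step via Lemma \ref{lemma:covering}; and the same Markov-inequality control of the measurement errors through \eqref{condition_estimation error}. The one deviation is the cross term: you bound it deterministically by Cauchy--Schwarz, $|I_2|\le\sqrt{I_1 I_3}$, whereas the paper writes it as a sub-Gaussian linear form in the innovations (conditional on the measurement errors, using their independence from Assumption \ref{sub-Gaussian}) and applies a tail bound; your route is more elementary, but note that $\sqrt{I_1 I_3}\lesssim\sqrt{\kappa_U\cdot c\kappa_L}$, so to conclude $|I_2|\le c'\kappa_L$ you actually need $I_3\lesssim\kappa_L^2/\kappa_U$ rather than merely $I_3\lesssim\kappa_L$ --- a stronger requirement on $m$ that is still absorbable into the (condition-number-dependent) constants exactly as the paper's own use of $m^{1/4}\ge 6T^{2\delta}/\kappa_L$ is, but which should be stated explicitly.
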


\begin{proof}[Proof of Lemma \ref{lemma:RSC}]
	Denote $R_T(\bm{\Delta})=\sum_{n=P+1}^{T} \|\bm{\Delta}_{(1)}\widetilde{\mathbf{x}}_n\|_2^2$, and it holds that
	\begin{align}
	\begin{split}
	{R_T(\bm{\Delta})}
	&= \sum_{n=P+1}^{T}\rm \left\|\bm{\Delta}_{(1)}(\mathbf{x}_{n}+\bm{\vartheta}_n)\right\|_2^2\\
	&= \sum_{n=P+1}^{T}{\mathbf{x}}_{n}^{\top} \bm{\Delta}_{(1)}^{\top}\bm{\Delta}_{(1)}{\mathbf{x}}_{n} + 2\sum_{n=P+1}^{T}{\mathbf{x}}_{n}^{\top} \bm{\Delta}_{(1)}^{\top}\bm{\Delta}_{(1)}{\bm{ \vartheta}}_{n} + \sum_{n=P+1}^{T}{\bm{ \vartheta}}_{n}^{\top} \bm{\Delta}_{(1)}^{\top} \bm{\Delta}_{(1)}{\bm{ \vartheta}}_{n}\\
	&:={\rm R}_1+ {\rm R}_2+ {\rm R}_3,
	\end{split}
	\label{R_split}
	\end{align}
	where $\bm{ \vartheta}_n =\widetilde{\mathbf{x}}_n - \mathbf{x}_{n}=(\bm{\eta}_{n-1}^\top,\cdots,\bm{\eta}_{n-P}^\top)^\top \in \mathbb{R}^{NP \times 1}$.
	By the spectral measure of ARMA processes in \cite{basu2015regularized}, we have $\lambda_{\min} \{\mathbb{E}(\mathbf{x}_n\mathbf{x}_n^\top)\} \geq \lambda_{\min}(\bm{\Sigma}_{\varepsilon})/\mu_{\max}(\cm{A}) = \kappa_L$ and $\lambda_{\max} \{\mathbb{E}(\mathbf{x}_n\mathbf{x}_n^\top)\} \leq \lambda_{\max}(\bm{\Sigma}_{\varepsilon})/\mu_{\min}(\cm{A}) = \kappa_U$, and it then holds that
	\begin{equation}\label{exp_R}
	(T-P) \kappa_L \le \mathbb{E}({\rm R}_1) = \mathbb{E}\left(\sum_{n=P+1}^{T}\text{vec}(\bm{\Delta})^\top(\mathbf{I}_N\otimes\mathbf{x}_n\mathbf{x}_n^\top)\text{vec}(\bm{\Delta})\right) \le (T-P) \kappa_U \le  T \kappa_U,
	\end{equation}
	as $\left\|\bm{\Delta}\right\|_{\rm F} = 1$. 
	Furthermore, $R_T(\bm{\Delta}) \ge \mathbb{E}({\rm R}_1) - \sup_{\bm{\Delta}\in \mathcal{S}}\{|{\rm R}_1 - \mathbb{E}({\rm R}_1)| + |{\rm R}_2| + |{\rm R}_3|\}$, and $R_T(\bm{\Delta}) \leq \mathbb{E}({\rm R}_1) + \sup_{\bm{\Delta}\in \mathcal{S}}\{|{\rm R}_1 - \mathbb{E}({\rm R}_1)| + |{\rm R}_2| + |{\rm R}_3|\}$, where $\mathcal{S}=\mathcal{S}(2r_1,2r_2,2r_3)$. We next first bound $|{\rm R}_1 - \mathbb{E}({\rm R}_1)| + |{\rm R}_2| + |{\rm R}_3|$ for each fixed $\bm{\Delta}\in\mathcal{S}(2r_1,2r_2,2r_3)$.
	
	Consider the term of ${\rm R}_1 - \mathbb{E}({\rm R}_1)$. Note that $\mathbf{x}_{n}=(\mathbf{y}_{n-1}^{\top}, \ldots,\mathbf{y}_{n-P}^{\top})^{\top}\in \mathbb{R}^{NP\times 1}$, and $\cm{A}_{(1)}=(\mathbf{A}_1,\dots,\mathbf{A}_P)$ with each $\mathbf{A}_{i}$ being an $N$-by-$N$ matrix. From \eqref{autoregressive}, we have $\mathbf{y}_n = \cm{A}_{(1)}\mathbf{x}_{n} + \bm{\varepsilon}_n$. It can be further rewritten into an VAR(1) form, $\mathbf{x}_{n} = \mathbf{B} \mathbf{x}_{n-1}  + \bm{e}_{n}$, and hence the VMA representation of $\mathbf{x}_n = \sum_{j=0}^{\infty}\mathbf{B}^j\bm{e}_{n-j}$, or $\mathbf{z}=\mathbf{P}\bm{e}$,
	where 
	\begin{align*}
	\mathbf{B}=
	\begin{pmatrix}
	\mathbf{A}_1 & \mathbf{A}_{2}  & \cdots & \mathbf{A}_{P-1} & \mathbf{A}_{P}  \\
	\mathbf{I}_N & \mathbf{0} & \cdots & \mathbf{0} & \mathbf{0}   \\
	\mathbf{0} & \mathbf{I}_N & \cdots & \mathbf{0} & \mathbf{0}  \\
	\vdots & \vdots & \ddots & \vdots & \vdots \\
	\mathbf{0} & \mathbf{0} & \cdots & \mathbf{I}_N & \mathbf{0}
	\end{pmatrix},\hspace{5mm} 
	\mathbf{P}=
	\begin{pmatrix}
	\mathbf{I}_{NP} & \mathbf{B} & \mathbf{B} ^2 & \cdots & \mathbf{B} ^{T-1} & \cdots\\
	\mathbf{0} & 	\mathbf{I}_{NP} & \mathbf{B} & \cdots & \mathbf{B}^{T-2} &\cdots  \\
	\mathbf{0} & \mathbf{0} & 	\mathbf{I}_{NP} & \cdots &\mathbf{B}^{T-3} & \cdots\\
	\vdots & \vdots & \vdots & \ddots & \vdots &\cdots\\
	\mathbf{0} &\mathbf{0} & \mathbf{0}& \cdots & 	\mathbf{I}_{NP}  & \cdots
	\end{pmatrix},
	\end{align*}
	$\bm{e}_{n}= ({\bm{\varepsilon}}_{n}^\top, \dots,\bm{0})^\top \in \mathbb{R}^{NP\times 1}$, $\bm{e}=(\bm{e}^{\top}_{T-1},\dots,\bm{e}^{\top}_{P}, \ldots)^{\top}$, and $\mathbf{z}=(\mathbf{x}^{\top}_{T},\dots,\mathbf{x}^{\top}_{P+1})^{\top}\in \mathbb{R}^{NP(T-P)\times 1}$. 
	Moreover, by Assumption \ref{sub-Gaussian}, the error term has the form of $\bm{e}=\bar{\bm{\Sigma}}\bar{\bm{\xi}}$, where $\bar{\bm{\xi}}_{n}=({\bm{\xi}}_{n}^\top, \ldots,\mathbf{0})^{\top}\in \mathbb{R}^{NP\times 1}$, $\bar{\bm{\xi}} = (\bar{\bm{\xi}}^{\top}_{T-1},\dots,\bar{\bm{\xi}}^{\top}_{P}, \ldots)^{\top}$, 
	\begin{align*}
	\bar{\bm{\Sigma}}_{\varepsilon}=
	\begin{pmatrix}
	\bm{\Sigma}_{\varepsilon} & \mathbf{0}& \cdots & \mathbf{0}\\
	\mathbf{0} & \bm{\Sigma}_{\varepsilon}   & \cdots & \mathbf{0}  \\
	\vdots & \vdots & \ddots & \vdots\\
	\mathbf{0} & \mathbf{0}   & \cdots & \bm{\Sigma}_{\varepsilon}
	\end{pmatrix} \in\mathbb{R}^{NP\times NP} \hspace{5mm}\text{and}\hspace{5mm} 	\bar{\bm{\Sigma}}=
	\begin{pmatrix}
	\bar{\bm{\Sigma}}_{\varepsilon}^{1/2} & \mathbf{0} &\mathbf{0}& \cdots \\
	\mathbf{0}& \bar{\bm{\Sigma}}_{\varepsilon}^{1/2} & \mathbf{0}& \cdots  \\
	\mathbf{0} & \mathbf{0} & \bar{\bm{\Sigma}}_{\varepsilon}^{1/2} & \cdots \\
	\vdots & \vdots & \vdots & \ddots \\
	\end{pmatrix}.
	\end{align*}
	Denote $\bm{\Sigma}_{\bm{\Delta}}=\bar{\bm{\Sigma}}\mathbf{P}^{\top}(\mathbf{I}_{T-P}\otimes\bm{\Delta}_{(1)}^{\top}\bm{\Delta}_{(1)})\mathbf{P}\bar{\bm{\Sigma}}$, and then
	\begin{equation*}
	{\rm R}_1 = \sum_{n=P+1}^{T}{\mathbf{x}}_{n}^{\top} \bm{\Delta}_{(1)}^{\top}\bm{\Delta}_{(1)}{\mathbf{x}}_{n} =\bar{\bm{\xi}}^{\top}\bar{\bm{\Sigma}}\mathbf{P}^{\top}(\mathbf{I}_{T-P}\otimes\bm{\Delta}_{(1)}^{\top}\bm{\Delta}_{(1)})\mathbf{P}\bar{\bm{\Sigma}}\bar{\bm{\xi}} =\bar{\bm{\xi}}^\top\bm{\Sigma}_{\bm{\Delta}}\bar{\bm{\xi}}.
	\end{equation*}
	
	Note that $\lambda_{\max} (\mathbf{P}\mathbf{P}^{\top}) = 1/\mu_{\min}(\cm{A})$, $\left\|\bm{\Sigma}_{\bm{\Delta}}\right\|_{\text{op}} \le \kappa_U$ and
	\begin{equation*}
	\left\|\bm{\Sigma}_{\bm{\Delta}}\right\|_{\rm F} \le \|\bar{\bm{\Sigma}}\|_{\text{op}}^2 \|\mathbf{P}\|_{\text{op}}\|\mathbf{P}^{\top}\|_{\text{op}} \|\mathbf{I}_{T-P}\otimes\bm{\Delta}_{(1)}^{\top}\bm{\Delta}_{(1)}\|_{\rm F} = \sqrt{T-P} \kappa_U \le \sqrt{T} \kappa_U.
	\end{equation*} 
	For any $t>0$, by Hanson-Wright inequality, we can bound ${\rm R}_1- \mathbb{E}({\rm R}_1)$ below,
	\begin{equation}
	\begin{split}
	\mathbb{P}\left[\left|{\rm R}_1- \mathbb{E}({\rm R}_1)\right|\ge t\right] 
	& \le 2\exp \left(-C \min \left(\frac{t}{\kappa^2 \left\|\bm{\Sigma}_{\bm{\Delta}}\right\|_{\text{op}}}, \frac{t^2}{\kappa^4 \left\|\bm{\Sigma}_{\bm{\Delta}}\right\|_{\rm F}^2}\right)\right)\\
	& \le 2\exp \left( -C \min \left(\frac{t}{\kappa^2 \kappa_U}, \frac{t^2}{\kappa^4 T \kappa_U^2}\right)\right).
	\end{split}
	\label{concen-I}
	\end{equation}
	
	We next consider the term of ${\rm R}_2$, which has the form of
	\begin{equation*}
	{\rm R}_2  = 2 \mathbf{z}^{\top}(\mathbf{I}_{T-P}\otimes\bm{\Delta}_{(1)}^{\top}\bm{\Delta}_{(1)})\mathbf{c} =  2\bar{\bm{\xi}}^\top\bar{\bm{\Sigma}}\bm{P}^\top(\mathbf{I}_{T-P}\otimes\bm{\Delta}_{(1)}^{\top}\bm{\Delta}_{(1)})\mathbf{c} =2{\bm{m}}_1^\top\bar{\bm{\xi}},
	\end{equation*}
	where ${\bm{m}}_1 = \bar{\bm{\Sigma}}\bm{P}^\top(\mathbf{I}_{T-P}\otimes\bm{\Delta}_{(1)}^{\top}\bm{\Delta}_{(1)})\mathbf{c}$, $\mathbf{c} = (\bm{\vartheta}_{T}^\top,\cdots, \bm{\vartheta}_{P+1}^\top)^\top \in \mathbb{R}^{NP(T-P) \times 1}$.
	Let $\eta_{\max} = \max _{1 \leq n \leq T} \max _{1 \leq i \leq N}\left|{\eta}_{i,n}\right|$, and it holds that $\|\mathbf{c}\|_2^2 \leq NPT\eta_{\max}^2$ and $\|{\bm{m}_1}\|_2^2 \leq NPT\eta_{\max}^2\kappa_U$.
	Moreover, by the inequality at \eqref{condition_estimation error} and Markov inequality, we have
	\begin{equation}
	\mathbb{P}\left\{\eta_{\max} \geq \frac{x}{m^{1 / 4}}\right\} \leq \frac{CNT}{x^2},
	\label{markov inquality}
	\end{equation}
	which can be used to control the rate of $\eta_{\max}$ by varying the value of $x$.
	By the sub-Gaussian condition of $\bar{\bm{\xi}}$ at Assumption \ref{sub-Gaussian} and letting $x=m^{1/8}T^{\delta}/N^{1/2}P^{1/2}$ in \eqref{markov inquality}, 
	\begin{equation}
		\label{concen-III}
	\mathbb{P}\left[2\left|{\bm{m}_1}^\top\bar{\bm{\xi}}\right|\geq t\right] \leq 2\exp\left(-\frac{Cm^{1/4}t^2}{\kappa^2T^{1+2\delta}\kappa_U}\right) + \frac{CN^2P}{m^{1/4}T^{2\delta-1}}
	\end{equation}
	for any $t>0$.
	
	Finally, for the third term of ${\rm R}_3$ at \eqref{R_split}, it holds that
	\begin{equation*}
	\sup_{\bm{\Delta} \in \mathcal{S}(2r_1,2r_2,2r_3)}|{\rm R}_3| = \sup_{\bm{\Delta} \in \mathcal{S}(2r_1,2r_2,2r_3)} \left|\mathbf{c}^{\top}(\mathbf{I}_{T-P}\otimes\bm{\Delta}_{(1)}^{\top}\bm{\Delta}_{(1)})\mathbf{c}\right|  \leq NPT\eta_{\max}^2,
	\end{equation*}
	and, by letting $x=m^{1/8}T^{\delta}/N^{1/2}P^{1/2}$, we have 
	\begin{equation}
	\begin{split}
	\mathbb{P}\left[\sup_{\bm{\Delta} \in \mathcal{S}(2r_1,2r_2,2r_3)}|{\rm R}_3|\geq \frac{T^{1+2\delta}}{m^{1/4}}\right] \leq \frac{CN^2P}{m^{1/4}T^{2\delta-1}}.
	\end{split}
	\label{concen-II}
	\end{equation}
	Thus, combining \eqref{concen-I}, \eqref{concen-III} and \eqref{concen-II}, we have
	\begin{equation*}
	\begin{split}
	& \mathbb{P}\left[| {\rm R}_1+{\rm R}_2+{\rm R}_3 - \mathbb{E}({\rm R}_1)| \geq t_1 + t_2 + \frac{T^{1+2\delta}}{m^{1/4}}\right]\\
	&\hspace{10mm} \leq \mathbb{P}\left[| {\rm R}_1- \mathbb{E}({\rm R}_1)| \geq t_1\right]+\mathbb{P}\left[|{\rm R}_2| \geq  t_2\right] + \mathbb{P}\left[| {\rm R}_3| \geq \frac{T^{1+2\delta}}{m^{1/4}}\right]\\
	&\hspace{10mm}\leq 2\exp \left( -C \min \left(\frac{t_1}{\kappa^2 \kappa_U}, \frac{t_1^2}{\kappa^4T \kappa_U^2}\right)\right) + 2\exp\left(-\frac{Cm^{1/4}t^2}{\kappa^2T^{1+2\delta}\kappa_U}\right) + \frac{CN^2P}{m^{1/4}T^{2\delta-1}}.
	\end{split}
	\end{equation*}
	Let $t_1 = t_2 = T  \kappa_L/6$ and, from \eqref{exp_R}, it holds that 
	\begin{equation}\label{point-RSC}
	\begin{split}
	&\mathbb{P}\left[0.5\kappa_L \leq T^{-1}{R_T(\bm{\Delta})} \leq 1.5\kappa_U\right] \\
	&\hspace{10mm}\geq 1 - 2\exp \left( -C \min \left(\frac{T\kappa_L}{\kappa^2 \kappa_U}, \frac{T\kappa_L^2}{\kappa^4 \kappa_U^2}\right)\right) -2\exp\left(- \frac{C\kappa_L^2m^{1/4}}{T^{2\delta-1}\kappa^2\kappa_U}\right) - \frac{CN^2P}{m^{1/4}T^{2\delta-1}},
	\end{split}
	\end{equation}
	as $m^{1/4} \geq {6T^{2\delta}}/{\kappa_L}$.
	Let $\bar{\mathcal{S}}$ to be an $\epsilon$-covering net of $S(2r_1,2r_2,2r_3)$. To construct the union bound, we rewrite $R_T(\bm{\Delta})$ as $R_T(\bm{\Delta}) = \|\bm{\Delta}_{(1)}\widetilde{\mathbf{X}}\|_{\rm F}^2$, where $\widetilde{\mathbf{X}} = (\tilde{\mathbf{x}}_{P+1}, \cdots, \tilde{\mathbf{x}}_{T})\in \mathbb{R}^{NP \times (T-P)}$. Define the event
	\begin{equation*}
	\mathcal{E}(\epsilon) = \left\{\forall \bm{\Delta} \in \bar{\mathcal{S}}(2r_1,2r_2,2r_3): \sqrt{0.5\kappa_L} \leq \frac{1}{\sqrt{T}}\|\bm{\Delta}_{(1)}\widetilde{\mathbf{X}}\|_{\rm F} \leq \sqrt{1.5\kappa_U}\right\}.
	\end{equation*}
	Then, by the pointwise bound in \eqref{point-RSC} and the covering number in Lemma \ref{lemma:covering}(i), 
	\begin{align*}
	&\mathbb{P}\left[\mathcal{E}^{c}(\epsilon)\right] \leq 2 \exp \left( Cd_{\mathcal{M}}-C \min \left(\frac{T\kappa_L}{\kappa^2 \kappa_U}, \frac{T\kappa_L^2}{\kappa^4 \kappa_U^2}\right)\right)\\
	&\hspace{20mm}+ 2\exp\left(Cd_{\mathcal{M}}- \frac{C\kappa_L^2m^{1/4}}{T^{2\delta-1}\kappa^2\kappa_U}\right)+\exp(Cd_{\mathcal{M}})\frac{CN^2P}{m^{1/4}T^{2\delta-1}}.
	\end{align*}
	
	Note that, by Lemma \ref{lemma:covering} (ii), 
	\begin{align*}
	\begin{split}
	\mathcal{E}(\epsilon)& \subset \left\{\max_{\bm{\Delta} \in \bar{\mathcal{S}}(2r_1,2r_2,2r_3)} \frac{1}{\sqrt{T}}\|\bm{\Delta}_{(1)}\widetilde{\mathbf{X}}\|_{\rm F} \leq \sqrt{1.5\kappa_U}\right\}\\
	&\subset \left\{\sup_{\bm{\Delta} \in \mathcal{S}(2r_1,2r_2,2r_3)} \frac{1}{\sqrt{T}}\|\bm{\Delta}_{(1)}\widetilde{\mathbf{X}}\|_{\rm F} \leq \frac{\sqrt{1.5 \kappa_U}}{1-2\sqrt{2}\epsilon}\right\}.
	\end{split}
	\end{align*}
	Moreover, similarly to Lemma \ref{lemma:covering}(ii), we can show that
	\begin{align*}
	\| \bm{\Delta}_{(1)}\mathbf{Z} \|_{\rm F} &\geq \| \bar{\bm{\Delta}}_{(1)}\mathbf{Z} \|_{\rm F} - \sum_{i=1}^{8} \| (\widebar{\cm{W}}_i)_{(1)}\mathbf{Z} \|_{\rm F}\\
	&\geq \| \bar{\bm{\Delta}}_{(1)}\mathbf{Z} \|_{\rm F} - \sum_{i=1}^{8} \| \widebar{\cm{W}}_i\|_{\rm F} \sup_{\bm{\Delta} \in \mathcal{S}(2r_1,2r_2,2r_3)}\| \bm{\Delta}_{(1)}\mathbf{Z} \|_{\rm F}\\
	&\geq \| \bar{\bm{\Delta}}_{(1)}\mathbf{Z} \|_{\rm F} - 2\sqrt{2}\epsilon \sup_{\bm{\Delta} \in \mathcal{S}(2r_1,2r_2,2r_3)}\| \bm{\Delta}_{(1)}\mathbf{Z} \|_{\rm F},
	\end{align*}
	where the last inequality is due to $\sum_{i=1}^{8} \| \widebar{\cm{W}}_i\|_{\rm F} \leq 2\sqrt{2}\epsilon$.
	Taking infimum on both sides, if $0 \leq \epsilon \leq \frac{1}{4\sqrt{2}}$, we have
	\begin{align*}
	\inf_{\bm{\Delta} \in S(2r_1,2r_2,2r_3)} \frac{1}{\sqrt {T}}\| \bm{\Delta}_{(1)}\mathbf{Z} \|_{\rm F} \geq& \min_{\bm{\Delta} \in \bar{\mathcal{S}}(2r_1,2r_2,2r_3)} \frac{1}{\sqrt {T}}\| \bm{\Delta}_{(1)}\mathbf{Z} \|_{\rm F} \\
	&- 2\sqrt{2}\epsilon \frac{1}{\sqrt {T}}\sup_{\bm{\Delta} \in \mathcal{S}(2r_1,2r_2,2r_3)}\| \bm{\Delta}_{(1)}\mathbf{Z} \|_{\rm F}\\
	\geq& \sqrt{0.5\kappa_L} - 2\sqrt{2}\epsilon \frac{\sqrt{1.5 \kappa_U}}{1-2\sqrt{2}\epsilon}
	\geq \sqrt{0.5\kappa_L} - 4\epsilon\sqrt{3 \kappa_U}.
	\end{align*}
	When $\epsilon$ is chosen to be $\frac{1}{8}\sqrt{\frac{\kappa_L}{6\kappa_U}}$, 
	\begin{align*}
	\mathcal{E}(\epsilon) \subset \left\{\inf_{\bm{\Delta} \in \mathcal{S}(2r_1,2r_2,2r_3)} \frac{1}{\sqrt{T}}\|\bm{\Delta}_{(1)}\widetilde{\mathbf{X}}\|_{\rm F} \geq \frac{\sqrt{0.5 \kappa_L}}{2}\right\},
	\end{align*}
	As a result, with the above choice of $\epsilon$, 
	\begin{align*}
	\mathcal{E}(\epsilon) \subset \left\{\frac{\kappa_L}{8}\leq \inf_{\bm{\Delta} \in \mathcal{S}(2r_1,2r_2,2r_3)} \frac{1}{T}\|\bm{\Delta}_{(1)}\widetilde{\mathbf{X}}\|_{\rm F}^2 \leq \sup_{\bm{\Delta} \in \mathcal{S}(2r_1,2r_2,2r_3)} \frac{1}{T}\|\bm{\Delta}_{(1)}\widetilde{\mathbf{X}}\|_{\rm F}^2 \leq \frac{8\kappa_U}{3} \right\}.
	\end{align*}
	Given the conditions that $T \gtrsim (\kappa_U/\kappa_L)^2 \max(\kappa^2, \kappa^4)d_{\mathcal{M}}$, $m^{1/4} \gtrsim T^{2\delta}$, $ m^{1/4}T^{2\delta-1}\gtrsim N^2P\exp(d_{\mathcal{M}})$ for some $\delta > 1/2$, we have that for all $\bm{\Delta} \in \mathcal{S}(2r_1, 2r_2, 2r_3)$,
	\begin{align*}
	\mathbb{P}\left[\frac{\kappa_L}{8}\leq \frac{1}{T}\sum_{n=P+1}^T\|\bm{\Delta}_{(1)}\widetilde{\mathbf{x}}_n\|_{2}^2 \leq \frac{8\kappa_U}{3}\right]
	&\geq 1- 2\exp \left( - C T (\kappa_L/\kappa_U)^2 \min (\kappa^{-2}, \kappa^{-4})\right)\\ &\hspace{5mm}-\exp\left(-Cd_{\mathcal{M}}\right).
	\end{align*}
	This accomplishes the proof. 
\end{proof}

\begin{lemma}
	Suppose that \eqref{condition_estimation error} and Assumptions \ref{assumption 1} and \ref{sub-Gaussian} hold. If sample size $T \gtrsim \max(\kappa^2,\kappa^4) (\kappa_U/\kappa_L)^2d_{\mathcal{M}}$,  $m^{1/4} \gtrsim T^{2\delta}$ and $m^{1/4}T^{2\delta-1}\gtrsim N^2P^{1/2}\exp(d_{\mathcal{M}})$ for some $\delta > 1/2$, then
	\begin{align*}
	\sup_{\bm{\Delta}\in \mathcal{S}(2r_1,2r_2,2r_3)} \langle \frac{1}{T}\sum_{n=P+1}^T \bm{\epsilon}_n \circ \widetilde{\mathbf{X}}_n, \bm{\Delta} \rangle \leq C(\kappa^2 \sqrt{\lambda_{\max}(\bm{\Sigma}_{{\varepsilon}})\kappa_U} +\kappa\sqrt{\kappa_U} )\sqrt{\frac{d_{\mathcal{M}}}{T}} + \frac{T^{2\delta}}{m^{1/4}}
	\end{align*}
	with probability at least 
	$$1-\exp(-Cd_{\mathcal{M}})- 2\exp(-CT(\kappa_L/\kappa_U)^2\min\{\kappa^{-2},\kappa^{-4}\}),$$
	where $\kappa$, $\kappa_U$, $\kappa_U$ and $d_{\mathcal{M}}$ are defined in Theorem \ref{thm2}.
	\label{lemma:DB}
\end{lemma}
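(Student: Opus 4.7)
The plan is to decompose $\bm{\epsilon}_n \circ \widetilde{\mathbf{X}}_n$ into a dominant ``clean'' piece that involves only the model innovation $\bm{\varepsilon}_n$ and the true regressor $\mathbf{x}_n$, plus three cross pieces that carry the realized-volatility estimation error $\bm{\eta}_\cdot$. Writing $\bm{\epsilon}_n = \bm{\varepsilon}_n + \bm{\zeta}_n$ with $\bm{\zeta}_n = \bm{\eta}_n - \sum_{j=1}^P \mathbf{A}_j \bm{\eta}_{n-j}$ and $\widetilde{\mathbf{x}}_n = \mathbf{x}_n + \bm{\vartheta}_n$ with $\bm{\vartheta}_n = (\bm{\eta}_{n-1}^\top,\ldots,\bm{\eta}_{n-P}^\top)^\top$, the target supremum splits as $\sup_{\bm{\Delta}} T^{-1}\sum_n \bm{\epsilon}_n^\top \bm{\Delta}_{(1)} \widetilde{\mathbf{x}}_n = \mathrm{I} + \mathrm{II} + \mathrm{III} + \mathrm{IV}$, where $\mathrm{I}$ uses $(\bm{\varepsilon}_n,\mathbf{x}_n)$ and the other three involve at least one $\bm{\eta}$ factor. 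The bound in the lemma is the sum of a Hanson--Wright term for $\mathrm{I}$ and a pure estimation-error term of order $T^{2\delta}/m^{1/4}$ for $\mathrm{II}$--$\mathrm{IV}$.

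For the main term $\mathrm{I}$, I would use Lemma \ref{lemma:covering}(i) to build an $\epsilon$-net $\bar{\mathcal{S}}$ of $\mathcal{S}(2r_1,2r_2,2r_3)$ with $|\bar{\mathcal{S}}| \le \exp(C d_{\mathcal{M}})$. For each fixed $\bm{\Delta}\in \bar{\mathcal{S}}$, I would plug in $\bm{\varepsilon}_n = \bm{\Sigma}_\varepsilon^{1/2}\bm{\xi}_n$ and the VMA representation $\mathbf{x} = \mathbf{P}\bar{\bm{\Sigma}}\bar{\bm{\xi}}$ built in the proof of Lemma \ref{lemma:RSC} to express $T^{-1}\sum_n \bm{\varepsilon}_n^\top \bm{\Delta}_{(1)} \mathbf{x}_n = T^{-1}\bar{\bm{\xi}}_{(\varepsilon)}^\top \mathbf{K}(\bm{\Delta}) \bar{\bm{\xi}}$, a bilinear form in two independent sub-Gaussian blocks. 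Using the spectral inequalities $\lambda_{\max}(\mathbf{P}\bar{\bm{\Sigma}}\bar{\bm{\Sigma}}^\top \mathbf{P}^\top) \le \kappa_U$ and $\|\bm{\Sigma}_\varepsilon^{1/2}\|_{\rm op}^2 = \lambda_{\max}(\bm{\Sigma}_\varepsilon)$, I would show $\|\mathbf{K}(\bm{\Delta})\|_{\rm F} \le C\sqrt{T \lambda_{\max}(\bm{\Sigma}_\varepsilon)\kappa_U}$ and $\|\mathbf{K}(\bm{\Delta})\|_{\rm op} \le C\sqrt{\lambda_{\max}(\bm{\Sigma}_\varepsilon)\kappa_U}$ when $\|\bm{\Delta}\|_{\rm F}=1$. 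Hanson--Wright then yields, for fixed $\bm{\Delta}$ and any $t>0$, a tail bound $\mathbb{P}(|\langle \cm{T}_1,\bm{\Delta}\rangle| \ge s) \le 2\exp(-C\min\{T s^2/(\kappa^4 \lambda_{\max}(\bm{\Sigma}_\varepsilon)\kappa_U),\, Ts/(\kappa^2 \sqrt{\lambda_{\max}(\bm{\Sigma}_\varepsilon)\kappa_U})\})$, with the second (linear) regime producing the $\kappa\sqrt{\kappa_U}$ summand after rescaling; setting $s \asymp (\kappa^2\sqrt{\lambda_{\max}(\bm{\Sigma}_\varepsilon)\kappa_U} + \kappa\sqrt{\kappa_U})\sqrt{d_{\mathcal{M}}/T}$, union-bounding over $\bar{\mathcal{S}}$, and lifting to the continuous sup via Lemma \ref{lemma:covering}(ii) gives the leading rate in the lemma.

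For the three remainder pieces $\mathrm{II},\mathrm{III},\mathrm{IV}$, I would use Markov's inequality applied to \eqref{condition_estimation error} with the threshold $x = m^{1/8} T^{\delta}/\sqrt{NP}$, exactly as in the proof of Lemma \ref{lemma:RSC}, to obtain $\eta_{\max} \le m^{-1/8}T^{\delta}/\sqrt{NP}$ on an event of probability at least $1 - CN^2P/(m^{1/4}T^{2\delta-1})$. On this event, $\|\bm{\vartheta}_n\|_2,\|\bm{\zeta}_n\|_2 \le C\sqrt{NP}\,\eta_{\max}$ deterministically; combined with $\|\bm{\Delta}_{(1)}\|_{\rm op} \le \|\bm{\Delta}\|_{\rm F}=1$, sub-Gaussian tail control of $T^{-1}\sum \|\bm{\varepsilon}_n\|_2^2 = O_P(N\lambda_{\max}(\bm{\Sigma}_\varepsilon))$, and Cauchy--Schwarz, each remainder is uniformly bounded by a constant multiple of $T^{2\delta}/m^{1/4}$ under the stated conditions $m^{1/4} \gtrsim T^{2\delta}$ and $m^{1/4}T^{2\delta-1} \gtrsim N^2P^{1/2}\exp(d_{\mathcal{M}})$, which absorb the $\sqrt{NP}\cdot\sqrt{N\lambda_{\max}(\bm{\Sigma}_\varepsilon)}$ factor into the $\exp(d_{\mathcal{M}})$ safety margin.

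The main obstacle will be the Hanson--Wright step: $\mathrm{I}$ is a genuine two-block cross bilinear form because $\bm{\varepsilon}_n$ and $\mathbf{x}_n$ sit in disjoint coordinates of the stacked sub-Gaussian vector, so the relevant quadratic matrix is block off-diagonal. Computing its Frobenius and operator norms requires carrying through the $\mathbf{P}\bar{\bm{\Sigma}}$ structure and invoking $\mu_{\min}(\cm{A}),\mu_{\max}(\cm{A})$ exactly as done for $R_1$ in Lemma \ref{lemma:RSC}. Getting both the $\sqrt{d_{\mathcal{M}}/T}$-scaling Gaussian-regime term and the linear-regime term (which is what produces the second $\kappa\sqrt{\kappa_U}$ summand) in one step also requires carefully splitting the tail probability by noise level; an equivalent route is to condition on $\mathbf{x}$ and first apply sub-Gaussian linear concentration on $\bm{\xi}_n$, then a Hanson--Wright on the remaining $\bm{\xi}$-quadratic contribution, which separates the two rates transparently.
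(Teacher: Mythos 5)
Your decomposition and your treatment of the leading term are workable in principle: splitting $\bm{\epsilon}_n\circ\widetilde{\mathbf{X}}_n$ into four pieces and handling $T^{-1}\sum_n\bm{\varepsilon}_n^\top\bm{\Delta}_{(1)}\mathbf{x}_n$ by Hanson--Wright on the stacked quadratic form (via the VMA representation $\mathbf{x}=\mathbf{P}\bar{\bm{\Sigma}}\bar{\bm{\xi}}$, with a zero-block-diagonal kernel since $\mathbf{x}_n$ only involves $\bm{\xi}_{n'}$ with $n'<n$) is the standard Basu--Michailidis route and would deliver the $\kappa^2\sqrt{\lambda_{\max}(\bm{\Sigma}_\varepsilon)\kappa_U}\sqrt{d_{\mathcal{M}}/T}$ rate after the net argument. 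The paper instead keeps $\bm{\varepsilon}_n$ paired with the \emph{noisy} regressor $\widetilde{\mathbf{x}}_n$ and runs a self-normalized martingale Chernoff bound, $\mathbb{P}[\{S_T(\bm{\Delta})\ge\alpha\}\cap\{R_T(\bm{\Delta})\le\beta\}]\le\exp(-\alpha^2/(2\kappa^2\lambda_{\max}(\bm{\Sigma}_\varepsilon)\beta))$, with $R_T$ controlled by Lemma \ref{lemma:RSC}; this has the advantage that your cross term $\mathrm{III}=\langle T^{-1}\sum_n\bm{\varepsilon}_n\circ\bm{\Upsilon}_n,\bm{\Delta}\rangle$ never appears as a separate object -- it is absorbed into the already-bounded $R_T(\bm{\Delta})=\sum_n\|\bm{\Delta}_{(1)}\widetilde{\mathbf{x}}_n\|_2^2$.

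The genuine gap is in your treatment of the remainders $\mathrm{II}$ and $\mathrm{III}$. Deterministic Cauchy--Schwarz discards the $1/\sqrt{T}$ cancellation that the target rate requires. Concretely, for $\mathrm{II}=\langle T^{-1}\sum_n\bm{\zeta}_n\circ\mathbf{X}_n,\bm{\Delta}\rangle$ you get at best $(T^{-1}\sum_n\|\bm{\zeta}_n\|_2^2)^{1/2}(T^{-1}\sum_n\|\bm{\Delta}_{(1)}\mathbf{x}_n\|_2^2)^{1/2}\lesssim\sqrt{N}\,\eta_{\max}\sqrt{\kappa_U}$, which with $\eta_{\max}\lesssim T^{\delta}m^{-1/8}(NP)^{-1/2}$ is of order $\sqrt{\kappa_U/P}\,T^{\delta}/m^{1/8}$; under $m^{1/4}\gtrsim T^{2\delta}$ this is roughly $\sqrt{\kappa_U/P}$, a constant, and is neither $O(T^{2\delta}/m^{1/4})$ nor $O(\kappa\sqrt{\kappa_U}\sqrt{d_{\mathcal{M}}/T})$. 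Similarly, $\mathrm{III}$ via Cauchy--Schwarz gives $\sqrt{N\lambda_{\max}(\bm{\Sigma}_\varepsilon)}\,T^{\delta}/m^{1/8}$, which exceeds $T^{2\delta}/m^{1/4}$ by the factor $\sqrt{N\lambda_{\max}(\bm{\Sigma}_\varepsilon)}$; the condition $m^{1/4}T^{2\delta-1}\gtrsim N^2P^{1/2}\exp(d_{\mathcal{M}})$ only bounds the \emph{probability} of the bad event $\{\eta_{\max}>x/m^{1/4}\}$ and does not shrink the bound on its complement, so it cannot ``absorb'' these factors as you claim. The fix, which is what the paper does for its term $\bm{\upsilon}_n\circ\mathbf{X}_n$, is to exploit that the estimation errors $\bm{\eta}$ are independent of $\bm{\xi}$ (Assumption \ref{sub-Gaussian}): conditionally on $\bm{\eta}$, the sum is a mean-zero linear form $\bm{m}_2^\top\bar{\bm{\xi}}$ with $\|\bm{m}_2\|_2^2\lesssim NT\eta_{\max}^2\kappa_U$, and sub-Gaussian concentration plus the union bound over the net yields the $\kappa\sqrt{\kappa_U}\sqrt{d_{\mathcal{M}}/T}$ summand -- this term is in fact where the second summand of the lemma's bound comes from, so it cannot be pushed into the $T^{2\delta}/m^{1/4}$ budget at all. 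Only the doubly-small term $\mathrm{IV}=\langle T^{-1}\sum_n\bm{\zeta}_n\circ\bm{\Upsilon}_n,\bm{\Delta}\rangle$, which is quadratic in $\eta_{\max}$, is legitimately bounded deterministically by $T^{2\delta}/m^{1/4}$.
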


\begin{proof}[Proof of Lemma \ref{lemma:DB}]
	To separate the influence of model error and estimation error, we let $\bm {\upsilon}_n=\bm{\eta}_n-\sum\limits_{j=1}^{P}\mathbf{A}_j\bm {\eta}_{n-j} \in \mathbb{R}^{N \times 1}$, ${\mathbf{X}}_{n}=({\mathbf{y}}_{n-1},\ldots,{\mathbf{y}}_{n-P}) \in \mathbb{R}^{N\times P}$, and $\bm{\Upsilon}_n=({\bm{\eta}}_{n-1},\ldots,{\bm{\eta}}_{n-P}) \in \mathbb{R}^{N\times P}$. Note that $\bm{\epsilon}_n = \bm{\varepsilon}_n + \bm{\upsilon}_n$ and $\widetilde{\mathbf{X}}_n = \mathbf{X}_n + \bm{\Upsilon}_n$. As a result,
	\begin{equation}
	\begin{split}
	&\sup_{\bm{\Delta}\in \mathcal {S}(2r_1,2r_2,2r_3) } \langle \frac{1}{T}\sum_{n=P+1}^T \bm{\epsilon}_n \circ \widetilde{\mathbf{X}}_{n}, \bm{\Delta} \rangle \\
	&\hspace{10mm}\leq \sup_{\bm{\Delta}\in \mathcal {S}(2r_1,2r_2,2r_3) } \langle \frac{1}{T}\sum_{n=P+1}^T \bm{\varepsilon}_n \circ \widetilde{\mathbf{X}}_{n}, \bm{\Delta} \rangle\\
	&\hspace{15mm}+\sup_{\bm{\Delta}\in \mathcal {S}(2r_1,2r_2,2r_3) } \langle \frac{1}{T}\sum_{t=P+1}^T \bm {\upsilon}_n \circ \mathbf{X}_{n}, \bm{\Delta} \rangle+\sup_{\bm{\Delta}\in \mathcal {S}(2r_1,2r_2,2r_3) } \langle \frac{1}{T}\sum_{n=P+1}^T \bm{\upsilon}_n \circ \bm{\Upsilon}_n, \bm{\Delta} \rangle. \label{eq:db-breakup}
	\end{split}
	\end{equation}
	We shall provide upper bounds for the three terms one by one.
	
	For the first term of (\ref{eq:db-breakup}), it is easily verified that $\langle  \bm{\varepsilon}_n \circ \widetilde{\mathbf{X}}_{n}, \bm{\Delta} \rangle =\langle  \bm{\varepsilon}_n , \bm{\Delta}_{(1)} \widetilde{\mathbf{x}}_{n} \rangle $.
	Denote $S_t(\bm{\Delta})=\sum_{n=P+1}^t \langle \bm{\varepsilon}_n, \bm{\Delta}_{(1)}\widetilde{\mathbf{x}}_{n} \rangle$ and $R_t(\bm{\Delta})=\sum_{n=P+1}^{t} \|\bm{\Delta}_{(1)}\widetilde{\mathbf{x}}_n\|_2^2$, for $P+1\leq t \leq T$. By the Chernoff bound of errors, for any $\alpha>0, \beta>0$ and $c>0$, there exists $\eta>0$,
	\begin{align}
	\begin{split}
	&\mathbb{P}\left[\left\{S_T(\bm{\Delta})\geq \alpha\right\} \bigcap \left\{R_T(\bm{\Delta}) \leq \beta\right\}\right]\\
	=& \inf_{\eta>0} \mathbb{P}\left[\left\{\exp\left(\eta S_T(\bm{\Delta})\right)\geq \exp\left(\eta \alpha\right)\right\} \bigcap \left\{R_T(\bm{\Delta}) \leq \beta\right\}\right]\\
	=& \inf_{\eta>0} \mathbb{P}\left[ \exp\left(\eta S_T\left(\bm{\Delta}\right)\right) \mathbb{I}\left(R_T(\bm{\Delta}\right) \leq \beta) \geq \exp(\eta \alpha)\right]\\
	\leq& \inf_{\eta>0} \exp\left(-\eta \alpha\right) \mathbb{E}\left[\exp(\eta S_T(\bm{\Delta}))  \mathbb{I}(R_T(\bm{\Delta})\leq \beta)\right]\\
	=& \inf_{\eta>0} \exp(-\eta\alpha+c\eta^2\beta) \mathbb{E}\left[\exp(\eta S_T(\bm{\Delta}) - c\eta^2\beta) \mathbb{I}(R_T(\bm{\Delta})\leq \beta)\right]\\
	\le& \inf_{\eta>0} \exp(-\eta\alpha+c\eta^2\beta)\mathbb{E}\left[\exp(\eta S_T(\bm{\Delta})-c\eta^2R_T(\bm{\Delta}))\right].
	\end{split}
	\label{probality_inter}
	\end{align}
	By the tower property for conditional expectations, we have
	\begin{align*}
	&\mathbb{E}\left[\exp\left(\eta S_T(\bm{\Delta})-c\eta^2R_T(\bm{\Delta})\right)\right]\\
	=& \mathbb{E}\left[\mathbb{E}\left[\exp\left(\eta S_T(\bm{\Delta})-c\eta^2R_T(\bm{\Delta})\right)|\mathcal{F}_{T-1}\right]\right]\\
	=&\mathbb{E}\left[\exp\left(\eta S_{T-1}(\bm{\Delta})-c\eta^2R_{T-1}(\bm{\Delta})\right)\mathbb{E}\left[\exp\left(\eta\langle \bm{\varepsilon}_T, \bm{\Delta}_{(1)}\widetilde{\mathbf{x}}_{T}\rangle - c\eta^2\|\bm{\Delta}_{(1)}\widetilde{\mathbf{x}}_{T}\|_2^2\right)|\mathcal{F}_{T-1}\right]\right].
	\end{align*}
	With the sub-Gaussianity condition in Assumption \ref{sub-Gaussian}, then $\langle \bm{\varepsilon}_{T}, \bm{\Delta}_{(1)}\widetilde{\mathbf{x}}_{T}\rangle = \langle \bm{\xi}_{T}, \bm{\Sigma}_{\varepsilon}^{1/2}\bm{\Delta}_{(1)}\widetilde{\mathbf{x}}_{T} \rangle$, and 
	$\mathbb{E}\left[\exp(\eta\langle \bm{\varepsilon}_{T}, \bm{\Delta}_{(1)}\widetilde{\mathbf{x}}_{T}\rangle)\right] \leq \exp\left(\eta^2 \kappa^2\lambda_{\max}(\bm{\Sigma}_{\varepsilon})\|\bm{\Delta}_{(1)}\widetilde{\mathbf{x}}_{T}\|_2^2/2\right)$. Since $\widetilde{\mathbf{x}}_{n}$ is $\mathcal{F}_{n-1}$-measurable, $\bm{\varepsilon}_n$ is $\mathcal{F}_{n}$-measurable and $\bm{\varepsilon}_n|\mathcal{F}_{n-1}$ is mean-zero,  let $c=\kappa^2\lambda_{\max}(\bm{\Sigma}_{\varepsilon})/2$, and the following inequalities can be easily deduced,
	\begin{align*}
	&\mathbb{E}[\exp(\eta S_T(\bm{\Delta})-\eta^2\kappa^2\lambda_{\max}(\bm{\Sigma}_{\varepsilon})R_T(\bm{\Delta})/2)]\\
	\leq &\mathbb{E}[\exp(\eta S_{T-1}(\bm{\Delta})-\eta^2\kappa^2\lambda_{\max}(\bm{\Sigma}_{\varepsilon})R_{T-1}(\bm{\Delta})/2)]\\
	\leq
	& \cdots 
	\leq \mathbb{E}[\exp(\eta S_{P+1}(\bm{\Delta})-\eta^2\kappa^2\lambda_{\max}(\bm{\Sigma}_{\varepsilon})R_{P+1}(\bm{\Delta})/2)]
	\leq 1.
	\end{align*}
	As a result, for any $\alpha >0$ and $\beta>0$, we can have the following inequality of (\ref{probality_inter}),
	\begin{align}
	\begin{split}
	&\mathbb{P}[\left\{S_T(\bm{\Delta})\geq \alpha \right\} \bigcap \left\{R_T(\bm{\Delta}) \leq \beta \right\}]\\
	\leq & \inf_{\eta>0} \exp(-\eta \alpha + \eta^2 \kappa^2 \lambda_{\max}(\bm{\Sigma}_{{\varepsilon}})\beta/2)\\
	=& \exp\left(-\frac{\alpha^2}{2\kappa^2\lambda_{\max}(\bm{\Sigma}_{{\varepsilon}})\beta}\right).
	\end{split}
	\label{S_T and R_T}
	\end{align}
	Moreover, according to Lemma \ref{lemma:RSC}, the following bounds for $R_T(\bm{\Delta})$ hold that
	\begin{equation}\label{R_T}
	\frac{T}{8} \kappa_L \leq R_T(\bm{\Delta}) \le \frac{8T}{3} \kappa_U
	\end{equation}
	with probability at least $1-2\exp(-CT(\kappa_L/\kappa_U)^2\min\{\kappa^{-2},\kappa^{-4}\})-\exp(-Cd_{\mathcal{M}})$.
	
	By Lemma \ref{lemma:covering} (ii), for any $x>0$,
	\begin{equation}\label{DB-covering}
	\begin{split}
	&\mathbb{P}\left[\sup_{\bm{\Delta}\in S(2r_1,2r_2,2r_3)} \left\langle \frac{1}{T}\sum_{n=P+1}^T \bm{\varepsilon}_n \circ \widetilde{\mathbf{X}}_{n}, \bm{\Delta} \right\rangle \ge x\right]\\
	\le& \mathbb{P}\left[ \max_{\bm{\Delta}\in \widebar{\mathcal{S}}(2r_1,2r_2,2r_3)} \left\langle \frac{1}{T}\sum_{n=P+1}^T\bm{\varepsilon}_n \circ \widetilde{\mathbf{X}}_{n}, \bm{\Delta} \right\rangle \ge (1-2\sqrt{2}\epsilon) x \right]\\
	\le& |\bar{\mathcal{S}}(2r_1,2r_2,2r_3)| \cdot \mathbb{P}\left[ \left\langle \frac{1}{T}\sum_{n=P+1}^T \bm{\varepsilon}_n \circ \widetilde{\mathbf{X}}_{n}, \bm{\Delta} \right\rangle \ge (1-2\sqrt{2}\epsilon) x \right],
	\end{split}
	\end{equation}
	which, together with \eqref{S_T and R_T} and \eqref{R_T}, implies that
	\begin{align*}
	&\mathbb{P}\left[ \left\langle \frac{1}{T}\sum_{n=P+1}^T \bm{\varepsilon}_n \circ \widetilde{\mathbf{X}}_{n}, \bm{\Delta} \right\rangle \geq (1-2\sqrt{2}\epsilon) x \right]\\
	\leq& \mathbb{P}[\{S_T(\bm{\Delta})\geq T(1-2\sqrt{2}\epsilon)x\} \bigcap \{R_T(\bm{\Delta}) \le CT  \kappa^2\kappa_U\}]\\
	\vspace{10mm}
	& + \mathbb{P}[R_T(\bm{\Delta}) \ge CT \kappa^2\kappa_U]\\
	\vspace{15mm}
	\leq & \exp\left[-\frac{(1-2\sqrt{2}\epsilon)^2Tx^2}{2C\kappa^4 \lambda_{\max}(\bm{\Sigma}_{{\varepsilon}})\kappa_U}\right] + 2\exp(-CT(\kappa_L/\kappa_U)^2\min\{\kappa^{-2},\kappa^{-4}\}) + \exp(-Cd_{\mathcal{M}})
	\end{align*}
	for any $x>0$.
	Note that, from Lemma \ref{lemma:covering}, $|\bar{\mathcal{S}}(r_1,r_2,r_3)| \leq (12/\epsilon)^{d_{\mathcal{M}}}$. By letting $\epsilon = 1/10$, and $x=C\sqrt{d_{\mathcal{M}}\kappa^4  \lambda_{\max}(\bm{\Sigma}_{{\varepsilon}})\kappa_U/T}$, we then have
	\begin{equation}\label{eq:db-I}
	\begin{split}
	&\mathbb{P}\left[\sup_{\bm{\Delta}\in \mathcal{S}(2r_1,2r_2,2r_3)} \left\langle \frac{1}{T}\sum_{t=1}^T \bm{\varepsilon}_n \circ \widetilde{\mathbf{X}}_{n}, \bm{\Delta} \right\rangle \ge C\kappa^2 \sqrt{\lambda_{\max}(\bm{\Sigma}_{{\varepsilon}})\kappa_U } \sqrt{\frac{d_{\mathcal{M}}}{T}}\right]\\
	\vspace{6mm}
	& \leq \exp(-Cd_{\mathcal{M}})+ 2\exp(-CT(\kappa_L/\kappa_U)^2\min\{\kappa^{-2},\kappa^{-4}\}).
	\end{split}
	\end{equation}
	
	For the second term in \eqref{eq:db-breakup}, since $\mathbf{x}_n=(\mathbf{y}_{n-1}^{\top}, \ldots, \mathbf{y}_{n-P}^{\top})^{\top}\in \mathbb{R}^{NP \times 1}$, it can be verified that $\langle  \bm {\upsilon}_n \circ \mathbf{X}_{n}, \bm{\Delta} \rangle=\langle \bm {\upsilon}_n, \bm{\Delta}_{(1)} \mathbf{x}_n \rangle $.
	Denote $\mathbf{z}=(\mathbf{x}_{T}^{\top}, \ldots, \mathbf{x}_{P+1}^{\top})^{\top}\in \mathbb{R}^{NP(T-P)\times 1}$, and $\bm{\upsilon} = (\bm{\upsilon}_{T}^{\top}, \ldots, \bm{\upsilon}_{P+1}^{\top})^{\top}\in \mathbb{R}^{N(T-P)\times 1}$, and it holds that
	\begin{align*}
	\sum_{n=P+1}^T\langle \bm{\upsilon}_n, \bm{\Delta}_{(1)} \mathbf{x}_n \rangle = \bm{\upsilon}^\top (\mathbf{I}_{T-P}\otimes \bm{\Delta}_{(1)}) \mathbf{z} := \bm{m}_2^\top\bar{\bm{\xi}},
	\end{align*}
	where $\bm{m}_2=\bar{\bm{\Sigma}}\mathbf{P}^\top(\mathbf{I}_{T-P}\otimes \bm{\Delta}_{(1)})\bm{\upsilon}$, $\mathbf{z}=\mathbf{P}\bm{e}$, and $\bm{e}=\bar{\bm{\Sigma}} \bar{\bm{\xi}}$. 
	Note that $\|\bm{\upsilon}\|_2^2 \leq C NT\eta_{\max}^2$ and $\|\bm{m}_2\|_2^2 \leq CNT\eta_{\max}^2\kappa_U$, where $\eta_{\max} = \max_{1\leq n \leq T}\max_{1\leq i \leq N}|\eta_{n,i}|$. Refer to the proof of Lemma \ref{lemma:RSC} for more details.
	
	With the sub-Gaussianity condition of $\bm{\xi}$ at Assumption \ref{sub-Gaussian}, similar to the proof of Lemma \ref{lemma:RSC}, we can obtain that
	\begin{align*}
	\mathbb{P}[|{\bm{m}}_2^\top\bar{\bm{\xi}}|\geq t|\eta_{\max}] \leq \exp\left(-\frac{Ct^2}{\kappa^2\kappa_UNT\eta_{\max}^2}\right) \hspace{5mm}\text{for all $t > 0$}
	\end{align*}
	and, by letting $x=m^{1/8}T^{\delta}/N^{1/2}$ in (\ref{markov inquality}), 
	\begin{equation*}
	\begin{split}
	\mathbb{P}\left[|{\bm{m}}_2^\top\bar{\bm{\xi}}|\geq t\right] \leq \exp\left(-\frac{Cm^{1/4}t^2}{\kappa^2\kappa_UT^{1+2\delta}}\right) + \frac{CN^2}{m^{1/4}T^{2\delta-1}}.
	\end{split}
	\end{equation*}
	Therefore, by a method similar to (\ref{DB-covering}),
	\begin{align*}
	&\mathbb{P}\left[ \sup_{\bm{\Delta}\in \mathcal{S}(2r_1,2r_2,2r_3)}\langle \frac{1}{T}\sum_{n=P+1}^T \bm {\upsilon}_n \circ \mathbf{X}_{t}, \bm{\Delta} \rangle 
	\geq t\right] \\
	&\leq \mathbb{P}\left[ \max_{\bm{\Delta}\in \bar{\mathcal{S}}(2r_1,2r_2,2r_3)}\langle \frac{1}{T}\sum_{n=P+1}^T \bm {\upsilon}_n \circ \mathbf{X}_{t}, \bm{\Delta} \rangle 
	\geq (1-2\sqrt{2}\epsilon)t\right]\\
	&\leq |\bar{\mathcal{S}}(2r_1,2r_2,2r_3)| \mathbb{P}\left[ \langle \sum_{n=P+1}^T \bm {\upsilon}_n \circ \mathbf{X}_{t}, \bm{\Delta} \rangle 
	\geq (1-2\sqrt{2}\epsilon)Tt\right]\\
	&\leq \exp(Cd_{\mathcal{M}})\frac{CN^2}{m^{1/4}T^{2\delta-1}} + \exp\left(Cd_{\mathcal{M}}-\frac{Cm^{1/4}t^2}{\kappa^2\kappa_UT^{2\delta-1}}\right).
	\end{align*}
	By letting $t=C\sqrt{\kappa^2\kappa_UT^{2\delta-1}d_{\mathcal{M}}/m^{1/4}}$ and with the conditions that $m^{1/4}\gtrsim T^{2\delta}$ and  $m^{1/4}T^{2\delta-1} \gtrsim N^2\exp(d_{\mathcal{M}})$ for some $\delta > 1/2$, we have 
	\begin{align}\label{eq:db-II}
	\mathbb{P}\left[ \sup_{\bm{\Delta}\in \mathcal{S}(2r_1,2r_2,2r_3)}\langle \frac{1}{T}\sum_{n=P+1}^T \bm {\upsilon}_n \circ \mathbf{X}_{t}, \bm{\Delta} \rangle 
	\geq C\kappa\sqrt{\kappa_U}\sqrt{\frac{d_{\mathcal{M}}}{T}}\right] 
	\leq \exp\left(-Cd_{\mathcal{M}}\right).
	\end{align}
	
	Finally for the third term at \eqref{eq:db-breakup}, denote $\bm{ \vartheta}_n=(\bm{\eta}_{n-1}^{\top}, \ldots, \bm{\eta}_{n-P}^{\top})^{\top}\in \mathbb{R}^{NP \times 1}$, $\bm {\upsilon}_n=\bm{\eta}_n-\sum\limits_{j=1}^{P}\mathbf{A}_j\bm {\eta}_{n-j}\in \mathbb{R}^{N \times 1}$, $\bm{\Upsilon}_n=(\bm{ \eta}_{n-1},\ldots,{\bm{ \eta}}_{n-P}) \in \mathbb{R}^{N \times P}$, and it holds that $\langle \bm {\upsilon}_n \circ \bm{\Upsilon}_n, \bm{\Delta} \rangle=\langle \bm {\upsilon}_n, \bm{\Delta}_{(1)} \bm{ \vartheta}_n \rangle $. Let $\mathbf{c} = (\bm{\vartheta}_{T}^\top,\cdots, \bm{\vartheta}_{P+1}^\top)^\top$, and
	\begin{align*}
	\sup_{\bm{\Delta}\in \mathcal{S}(2r_1,2r_2,2r_3)}\sum_{n=P+1}^T\langle \bm{\upsilon}_n, \bm{\Delta}_{(1)} \bm{ \vartheta}_n \rangle 
	&= \sup_{\bm{\Delta}\in \mathcal{S}(2r_1,2r_2,2r_3)} \bm{\upsilon}^\top(\mathbf{I}_{T-P}\otimes \bm{\Delta}_{(1)}) \mathbf{c}\\
	&\leq \sup_{\bm{\Delta}\in \mathcal{S}(2r_1,2r_2,2r_3)} CNP^{1/2}T\eta_{\max}^2
	\end{align*}
	as $\|\bm{\upsilon}\|_2 \leq C \sqrt{NT}\eta_{\max}$ and $\|\mathbf{c}\|_2 \leq \sqrt{NTP}\eta_{\max}$.
	Thus, letting $x = m^{1/8}T^{\delta}/N^{1/2}P^{1/4}$ at \eqref{markov inquality}, we have
	\begin{equation}
	\mathbb{P}\left[\sup_{\bm{\Delta}\in \mathcal{S}(2r_1,2r_2,2r_3)} \sum_{n=P+1}^T\langle \bm{\upsilon}_n, \bm{\Delta}_{(1)} \bm{\vartheta}_n \rangle \geq \frac{T^{1+2\delta}}{m^{1/4}}\right] \leq \frac{CN^2P^{1/2}}{m^{1/4}T^{2\delta-1}}.
	\label{eq:db-III}
	\end{equation}
	
	Combining \eqref{eq:db-I}, \eqref{eq:db-II}, and \eqref{eq:db-III}, the complete deviation bound can be obtained,
	\begin{align*}
	&\mathbb{P}\left[\sup_{\bm{\Delta}\in \mathcal{S}(2r_1,2r_2,2r_3)} \frac{1}{T}\sum_{n=P+1}^T\langle \bm{\epsilon}_n, \bm{\Delta}_{(1)} \widetilde{\mathbf{x}}_n \rangle \geq C(\kappa^2 \sqrt{\lambda_{\max}(\bm{\Sigma}_{{\varepsilon}}) \kappa_U} +\kappa\sqrt{\kappa_U} )\sqrt{\frac{d_{\mathcal{M}}}{T}}  + \frac{T^{2\delta}}{m^{1/4}}\right] \\
	&\leq \exp(-Cd_{\mathcal{M}})+ 2\exp(-CT(\kappa_L/\kappa_U)^2\min\{\kappa^{-2},\kappa^{-4}\}).
	\end{align*} We hence complete the proof.
\end{proof}

\begin{lemma}\label{lemma3}
	Consider three Tucker ranks $(r_1^{(j)},r_2^{(j)},r_3^{(j)})$ with $1\leq j\leq 3$, and assume that $r_i^{(1)}<r_i^{(2)}<r_i^{(0)}$ with $1\leq i\leq 3$. For $\cm{X}\in \bm{\Theta}(r_1^{(0)},r_2^{(0)},r_3^{(0)})$, it holds that 
	\begin{equation}
	\|P_{\bm{\Theta}(r_1^{(2)},r_2^{(2)},r_3^{(2)})}(\cm{X})- \cm{X}\|_{\rm F} \le [\Pi_{i=1}^3 (\beta_i +1)-1] \|\cm{Y} - \cm{X}\|_{\rm F}
	\end{equation}
	for any $\cm{Y} \in \bm{\Theta}(r_1^{(1)},r_2^{(1)},r_3^{(1)})$,
	where $\beta_i = \sqrt{(r_i^{(0)}- r_i^{(2)})/(r_i^{(0)}- r_i^{(1)})}$.
\end{lemma}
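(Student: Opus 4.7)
\textbf{Proof plan for Lemma \ref{lemma3}.} The plan is to first prove a matrix-level contractive projection property (CPP) and then telescope it over the three sequential one-mode SVD truncations that compose $\widehat{P}_{\bm{\Theta}(r_1^{(2)},r_2^{(2)},r_3^{(2)})}$. At the matrix level, I will show that for any $\mathbf{X}$ of rank at most $r_0$ and any $\mathbf{Y}$ of rank at most $r_1$ with $r_1<r_2<r_0$,
\[
\|P_{r_2}(\mathbf{X})-\mathbf{X}\|_{\rm F}\le \sqrt{(r_0-r_2)/(r_0-r_1)}\,\|\mathbf{Y}-\mathbf{X}\|_{\rm F}.
\]
This follows at once from $\|P_{r_2}(\mathbf{X})-\mathbf{X}\|_{\rm F}^{2}=\sum_{j=r_2+1}^{r_0}\sigma_j^{2}(\mathbf{X})$, the Eckart--Young inequality $\|\mathbf{Y}-\mathbf{X}\|_{\rm F}^{2}\ge \sum_{j=r_1+1}^{r_0}\sigma_j^{2}(\mathbf{X})$, and the elementary fact that, for a non-increasing sequence, the average of the smallest $r_0-r_2$ of the $\sigma_j^{2}(\mathbf{X})$'s is no larger than the average of the smallest $r_0-r_1$.

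Writing the mode-$k$ truncation as $\cm{P}_k=\mathcal{M}_k^{-1}\circ P_{r_k^{(2)}}\circ \mathcal{M}_k$, I set $\cm{Z}^{(0)}=\cm{X}$ and $\cm{Z}^{(j)}=\cm{P}_j(\cm{Z}^{(j-1)})$, so that $\widehat{P}_{\bm{\Theta}(r_1^{(2)},r_2^{(2)},r_3^{(2)})}(\cm{X})=\cm{Z}^{(3)}$. A key observation is that each $\cm{P}_k$ acts as $\cm{Z}\mapsto \cm{Z}\times_k(\mathbf{U}_k\mathbf{U}_k^\top)$ for an orthogonal projector of rank $r_k^{(2)}$, and such one-mode maps never inflate the rank of any other unfolding; hence $\mathcal{M}_j(\cm{Z}^{(j-1)})$ still has rank at most $r_j^{(0)}$. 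Applying the matrix CPP with $\mathbf{X}=\mathcal{M}_j(\cm{Z}^{(j-1)})$ and with the comparison $\mathcal{M}_j(\cm{Y})$, whose rank is at most $r_j^{(1)}$ because $\cm{Y}\in\bm{\Theta}(r_1^{(1)},r_2^{(1)},r_3^{(1)})$, yields
\[
\|\cm{Z}^{(j)}-\cm{Z}^{(j-1)}\|_{\rm F}\le \beta_j\,\|\cm{Y}-\cm{Z}^{(j-1)}\|_{\rm F},\qquad j=1,2,3.
\]
I will then close the recursion via the triangle inequality $\|\cm{Y}-\cm{Z}^{(j-1)}\|_{\rm F}\le \|\cm{Y}-\cm{X}\|_{\rm F}+\sum_{k=1}^{j-1}\|\cm{Z}^{(k)}-\cm{Z}^{(k-1)}\|_{\rm F}$ and a short induction giving $\|\cm{Y}-\cm{Z}^{(j-1)}\|_{\rm F}\le \prod_{k=1}^{j-1}(1+\beta_k)\|\cm{Y}-\cm{X}\|_{\rm F}$. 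Telescoping $\|\cm{Z}^{(3)}-\cm{X}\|_{\rm F}\le \sum_{j=1}^{3}\|\cm{Z}^{(j)}-\cm{Z}^{(j-1)}\|_{\rm F}$ and invoking the algebraic identity $\sum_{j=1}^{3}\beta_j\prod_{k<j}(1+\beta_k)=\prod_{j=1}^{3}(1+\beta_j)-1$ then delivers the claimed bound.

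The main obstacle is the nonlinearity of the one-mode SVD truncations $\cm{P}_k$: the projector $\mathbf{U}_k\mathbf{U}_k^\top$ depends on the singular vectors of its input, so one cannot distribute $\cm{P}_k$ across a difference, and the naive attempt to compare $\cm{P}_k(\cm{X})-\cm{P}_k(\cm{Y})$ directly fails. I sidestep this by keeping the single fixed tensor $\cm{Y}$ (rather than any projected version of it) as the comparison object at every mode, invoking only the scalar triangle inequalities between $\cm{Y}$ and the intermediate iterates $\cm{Z}^{(j)}$, and by exploiting the mode-rank monotonicity of one-mode projections, which is precisely what keeps the hypothesis of the matrix CPP satisfied at every stage.
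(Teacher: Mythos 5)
Your proof is correct: the matrix-level contractive bound obtained from Eckart--Young together with the monotonicity of suffix averages of the squared singular values, the observation that a mode-$k$ truncation cannot increase the rank of any other mode's unfolding (so the rank-$r_j^{(0)}$ hypothesis survives each stage), and the telescoping recursion against the single fixed comparison tensor $\cm{Y}$ all check out, with the identity $\sum_{j}\beta_j\prod_{k<j}(1+\beta_k)=\prod_{j}(1+\beta_j)-1$ delivering the stated constant. The paper gives no argument of its own here---it simply defers to Lemma 3 of \cite{chen2019non}---and your write-up is essentially the standard proof of that cited result, so this is the same approach, just made self-contained.
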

\begin{proof}[Proof of Lemma \ref{lemma3}]
	Refer to lemma 3 of \cite{chen2019non} for technical proofs. 
\end{proof}

\end{document}